\documentclass[%
 reprint, superscriptaddress,
 nofootinbib,
 amsmath, amssymb,
 aps, pra, longbibliography,
 floatfix, twocolumn%
]{revtex4-2}

\usepackage[linesnumbered,boxed]{algorithm2e}
\usepackage{graphicx}
\usepackage{dcolumn}
\usepackage{bm}
\usepackage{multirow}
\usepackage{diagbox}
\usepackage[dvipsnames]{xcolor}
\usepackage{hyperref}
\bibliographystyle{apsrev4-1}
\usepackage{tcolorbox}
\usepackage{enumerate}
\usepackage{enumitem}
\setlist[enumerate,1]{label=\arabic*.,font=\textup,
leftmargin=4mm}

\hypersetup{%
colorlinks = true,
urlcolor  = BrickRed,
linkcolor = NavyBlue,
citecolor = ForestGreen%
}

\def\ket#1{\left| #1 \right\rangle}
\def\bra#1{\left\langle #1 \right|}

\newcommand{\beq}{\begin{equation}}
\newcommand{\eeq}{\end{equation}}

\usepackage{amsthm}
\newtheorem{theorem}{Theorem}
\newtheorem{proposition}{Proposition}
\newtheorem{lemma}[theorem]{Lemma}
\newcommand{\Tr}{\operatorname{Tr}}

\begin{document}
\title{Photonic verification of device-independent quantum key distribution against collective attacks}

\author{Wen-Zhao Liu}
\author{Yu-Zhe Zhang}
\author{Yi-Zheng Zhen}
\author{Ming-Han Li}
\affiliation{Hefei National Laboratory for Physical Sciences at Microscale and Department of Modern Physics, University of Science and Technology of China, Hefei 230026, P.~R.~China.}
\affiliation{Shanghai Branch, CAS Center for Excellence and Synergetic Innovation Center in Quantum Information and Quantum Physics, University of Science and Technology of China, Shanghai 201315, P.~R.~China.}
\affiliation{Shanghai Research Center for Quantum Sciences, Shanghai 201315, P.~R.~China.}
\author{Yang Liu}
\affiliation{Jinan Institute of Quantum Technology, Jinan 250101, P.~R.~China}

\author{Jingyun Fan}
\affiliation{Shenzhen Institute for Quantum Science and Engineering and Department of Physics, Southern University of Science and Technology, Shenzhen, 518055, P.~R.~China}

\author{Feihu Xu}
\author{Qiang Zhang}
\author{Jian-Wei Pan}
\affiliation{Hefei National Laboratory for Physical Sciences at Microscale and Department of Modern Physics, University of Science and Technology of China, Hefei 230026, P.~R.~China.}
\affiliation{Shanghai Branch, CAS Center for Excellence and Synergetic Innovation Center in Quantum Information and Quantum Physics, University of Science and Technology of China, Shanghai 201315, P.~R.~China.}
\affiliation{Shanghai Research Center for Quantum Sciences, Shanghai 201315, P.~R.~China.}

\begin{abstract}
The security of quantum key distribution (QKD) usually relies on that the users's devices are well characterized according to the security models made in the security proofs. In contrast, device-independent QKD --- an entanglement-based protocol --- permits the security even without any knowledge of the underlying quantum devices. Despite its beauty in theory, device-independent QKD is elusive to realize with current technologies. Especially in photonic implementations, the requirements for detection efficiency are far beyond the performance of any reported device-independent experiments. In this paper, we employ theoretical and experimental efforts and realize a proof-of-principle verification of device-independent QKD based on the photonic setup. On the theoretical side, we enhance the loss tolerance for real device imperfections by combining different approaches, namely, random post-selection, noisy preprocessing, and developed numerical methods to estimate the key rate via the von Neumann entropy. On the experimental side, we develop a high-quality polarization-entangled photon source achieving a state-of-the-art (heralded) detection efficiency about $87.5\%$. This efficiency outperforms previous photonic experiments involving loophole-free Bell tests. Together, we show that the measured quantum correlations are strong enough to ensure a positive key rate under the fiber length up to 220~m. Our photonic platform can generate entangled photons at a high rate and in the telecom wavelength, which is desirable for high-speed generation over long distances. The results present an important step towards a full demonstration of photonic device-independent QKD.
\end{abstract}
\maketitle

\emph{Introduction. ---}
Quantum key distribution (QKD)~\cite{bennett1984quantum,ekert1991quantum} allows two distant users to share secret keys with information-theoretic security~\cite{xu2020secure}. The security of QKD usually relies on the assumptions that the devices are trusted and well-characterized~\cite{lo1999unconditional,shor2000simple,renner2008security}. However, the imperfections of the practical devices may provide potential backdoors or side channels for adversaries~\cite{xu2010,Lars2010}. Measurement-device-independent QKD~\cite{lo2012measurement,braunstein2012side} (with a recent development in~\cite{lucamarini2018overcoming}) was proposed to prevent the side-channel attacks on detectors, but leaves the state-preparation devices to be precisely calibrated. Device-independent QKD~\cite{mayers1998quantum,barrett2005no,acin2007device,pironio2009device} further relaxes the security assumptions on the devices. Given the following assumptions~\cite{pironio2009device}, i.e., (i) quantum theory is validity, (ii) no unwanted information leakage from communicating parties to adversaries is allowed, (iii) the communicating parties have local trusted randomness to decide inputs of their measurement devices, (iv) the classical post-processing units are trusted, and (v) an authenticated public classical channel is shared between the communicating parties, its security can be guaranteed solely based on the violation of Bell inequalities.

The realization of device-independent QKD is non-trivial with current technologies, where the loophole-free violations of the Bell inequalities are usually required~\cite{hensen2015loophole,rosenfeld2017event}. Especially in the photonic implementations, the efficiency loss of photons due to transmission and detection becomes a key issue. Although distinguished experiments without the detection loophole have been made ~\cite{giustina2013bell,christensen2013detection,shalm2015strong,giustina2015significant,li2018test,liu2018high,shen2018randomness,bierhorst2018experimentally,liu2018device,zhang2020experimental,liu2021device,li2021experimental,shalm2021device}, a much higher efficiency (over $90\%$) is normally required in the realization of device-independent QKD ~\cite{pironio2009device,masanes2011secure,reichardt2013classical,vazirani2014fully,Arnon2018Practical}. Despite recent theoretical progresses have been made in reducing the required efficiency~\cite{fourvalue,tan2020advantage,ho2020noisy,woodhead2021device,sekatski2021device,gonzales2021device,brown2021computing,schwonnek2021device,computeDI2021}, a practical protocol for a real platform remains elusive.

Here, we report a proof-of-principle verification of device-independent QKD based on polarization-entangled photons. We accomplish this via significant theoretical and experimental efforts. On the theoretical side, we propose a protocol that greatly enhances the loss tolerance of the experiments, thereby reducing the efficiency threshold of our setup to about $86\%$. The idea of our protocol is to post-select the outcomes of the key generation basis~\cite{de2016randomness,xu2021device}, and then add noise~\cite{ho2020noisy} to the remaining strings. The lower bound of the key rate is computed via the recent achievement in estimating the quantum conditional entropy~\cite{computeDI2021}. On the experimental side, we develop an entangled-photon source with the state-of-the-art efficiency of about $87.5\%$, which surpasses the values reported in previous full-photonic experiments that performing loophole-free Bell tests ~\cite{giustina2013bell,christensen2013detection,shalm2015strong,giustina2015significant,li2018test,liu2018high,shen2018randomness,bierhorst2018experimentally,liu2018device,zhang2020experimental,liu2021device,li2021experimental,shalm2021device}, and makes the device-independent experiments possible.
Combining the experimental and theoretical advances, we present a verification of device-independent QKD under fiber length up to 220~m. To maintain a high efficiency, our proof-of-principle experiment does not include the real-time random basis selection~\cite{liu2021device,li2021experimental,shalm2021device} and the finite-key effect, which are essential to generate real secret keys. However, our experiment verifies that the measured correlations are strong enough to guarantee a positive secret key rate, thus presenting an important step towards a full demonstration.

\begin{tcolorbox}[title = \textbf{Box 1 : The device-independent QKD protocol with random post-selection and noisy preprocessing.},colback=white, colframe=black!50!white]

\textbf{Assumptions:~\cite{pironio2009device}}\\
In addition to the assumptions under the device-independent QKD regime mentioned above, the devices are memoryless and behave identically and independently (i.i.d.) during the implementation.

\begin{flushleft}
 \textbf{Arguments:}\\
 $N$ --  number of rounds \\
 $p$ -- post-selection probability to keep a bit ``1" \\
 $p_N$ -- noisy preprocessing probability to flip a bit
\end{flushleft}
\tcblower

\textbf{Protocol:}
\vspace{-0.2cm}
\begin{enumerate}
\item For every round $i$, Alice and Bob agree that part of rounds corresponding to $(\bar{x}_i,\bar{y}_i)=(1,3)$ are the ``key-generation round" to generate the string of raw keys, and the rest corresponding to $(x_i,y_i)\in\{1,2\}\times\{1,2,3\}$ are the ``test round" to test the nonlocal correlations. The rest steps are all performed on the rounds corresponding to $(\bar{x}_i,\bar{y}_i)$.
\item {\it Random post-selection.}~\cite{xu2021device} Alice and Bob each randomly and independently discard bits ``1" with probability $1-p$, and keep all the rest bits (containing all bits ``0" and $p$ of bits ``1").
\item Alice and Bob announce the discarded rounds via an authenticated public channel, and keep rounds not mentioned by either party.
\item {\it Noisy preprocessing.}~\cite{ho2020noisy} Alice generates the noisy raw keys $\hat{a}_{\bar{x}}$ by flipping each of her remaining bits with probability $p_{N}$, where $\hat{a}$ denotes a bit after noisy preprocessing and subscript $\bar{x}$ represents that it is post-selected.
\item {\it Error correction and Privacy amplification.} After a one-way error correction protocol and a privacy amplification procedure, the secret keys can be distilled.
\end{enumerate}
\end{tcolorbox}

\emph{Protocol. ---} Our protocol is a modification of the protocol described in~\cite{pironio2009device}. As shown in Fig.~\ref{fig1}, entangled photon pairs are shared between Alice and Bob. Considering each of the $N$ rounds of experiments shown in Fig.~\ref{fig1}, Alice randomly choose binary input $x\in\{1,2\}$ and obtain binary outcome $a\in\{0,1\}$, and Bob randomly choose triple input $y\in\{1,2,3\}$ and obtain binary outcome $b\in\{0,1\}$, where $a,b = 0$ denotes a ``click" event on the respective detector, and $a,b = 1$ denotes a ``no-click" event. The total probabilities of joint measurement for outcomes $(a,b)$ and inputs $(x,y)$ are denoted as $P(a,b|x,y)$.


Given the raw outcomes $P(a,b|x,y)$, we further introduce the random post-selection and noisy-preprocessing approaches before distilling the final keys (see section A.1 of Appendix for details). We randomly set part of the $N$ rounds corresponding to the measurement inputs $(\bar{x},\bar{y})=(1,3)$ as ``key-generation round" and the rest as ``test round" to test nonlocal correlations, where $\bar{x},\bar{y}$ represent the input for ``key-generation round". First, we conduct the random post-selection $\mathcal{V}$ over all the ``key-generation round", where Alice and Bob each randomly and independently discard bits ``1" with probability $1-p$, and keep all the rest bits (containing all bits ``0" and $p$ of bits ``1"). Then, both Alice and Bob announce the discarded rounds via an authenticated public channel. Those renounced ``key-generation round" from either Alice or Bob will be simultaneously discarded by both parties, regardless of the outcome of the other side. Next, Alice further locally performs a noisy preprocessing $\mathcal{N}$, to generate the noisy raw keys by flipping each bit of her remaining string with probability $p_N$. Finally, an error correction step allows Alice to share the raw key with Bob, and secret keys can be distilled after privacy amplification. The full procedures of our protocol is listed in Box~1.

\begin{figure}[htbp]
    \centering
    \includegraphics[width=1.0\linewidth]{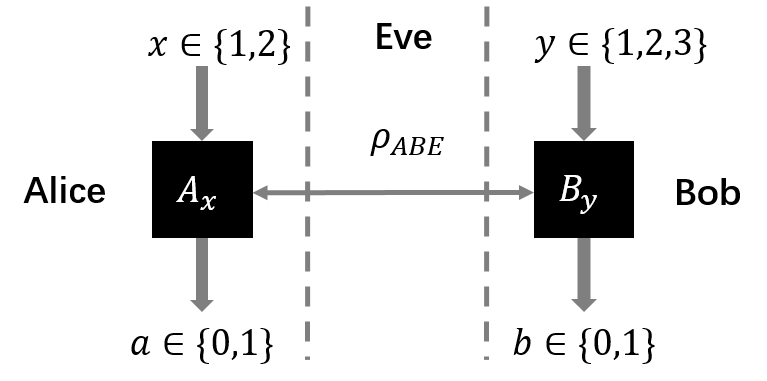}
    \caption{\label{fig1} An illustration of the device-independent QKD protocol. Alice and Bob share a pair of entangled photons potentially controlled by Eve ($\rho_{ABE}$). Alice performs a measurement to her share with binary input $x\in\{1,2\}$ and binary output $a\in\{0,1\}$. Bob performs a measurement to his share with triple input $y\in\{1,2,3\}$ and binary output $b\in\{0,1\}$.
    }
\end{figure}

We remark that the random post-selection effectively remove a fraction of ``non-click" events that contain few correlations and lots of errors, and therefore suppress the cost of error correction~\cite{xu2021device}. The noisy preprocessing decrease the correlations between Alice and Eve by mixing the probability distributions with randomness~\cite{ho2020noisy}. These two approaches jointly contribute to the enhancement of experimental loss tolerance (see section A.1 of Appendix for details).

\emph{Key rate estimation. ---} We consider the collective attack model where the devices behave in an independent and identically distributed (i.i.d.) manner and the devices are memoryless~\cite{barrett2013memory,curty2017quantum} during the implementation of the protocol. In the process of random post-selection, given the outcomes $(a,b)$ and the definition that $p_{\alpha}=1\cdot\Delta_{\alpha,0}+p\cdot\Delta_{\alpha,1}$ for a certain ``key-generation round", where $\Delta_{\alpha,i}$ represent the Dirac delta function in a form of $\delta(\alpha-i)$, the probability it can be retained is given by $p_{\mathcal{V}_p}=\sum_{(a,b)\in\mathcal{V}_p}{\omega_{ab}P(a,b|\bar{x},\bar{y})}$, where $\mathcal{V}_p$ represents the set of post-selected rounds and $\omega_{ab} = p_{a}\cdot p_{b}$. In the infinite-key scenario, given the set of bipartite correlations $\{P(a,b|x,y)\}$ that character the devices, the secret key rate $r$ with error correction can be lower-bounded by the Devetak-Winter rate \cite{DW2005},
\begin{equation}
\label{keyrate1}
    r\ge p_{\mathcal{V}_p}\left[H(\hat{A}_{\bar{x}}|E,\mathcal{V}_p\rightarrow\mathcal{N}_{p_N})-f_{e}H(\hat{A}_{\bar{x}}|B_{\bar{y}},\mathcal{V}_p\rightarrow\mathcal{N}_{p_N})\right],
\end{equation}
where $\mathcal{N}_{p_N}$ denotes the set of string after noisy preprocessing given post-selected set $\mathcal{V}_p$, $\hat{A}_{\bar{x}}$ denotes the noisy raw key of Alice after random post-selection and noisy preprocessing, $H(\hat{A}_{\bar{x}}|E,\mathcal{V}_p\rightarrow\mathcal{N}_{p_N})$ represents the single-round conditional von Neumann entropy that quantifies the strength of the correlations between Alice and Eve, $H(\hat{A}_{\bar{x}}|B_{\bar{y}},\mathcal{V}_p\rightarrow\mathcal{N}_{p_N})$ represents the single-round cost of one-way error correction from Alice to Bob, and $f_{e}$ is the error correction efficiency.

As a proof-of-principle verification, we consider the perfect error correction with Shannon limit $f_e = 1.0$, which is reachable in the case of infinite data size (see section A.3 of Appendix for details). We then adopt the method in Ref.~\cite{computeDI2021} to show that the single-round conditional von Neumann entropy $H(\hat{A}_{\bar{x}}|E,\mathcal{V}_p\rightarrow\mathcal{N}_{p_N})$ can be bounded by a converging sequence of optimizations that can be subsequently computed using the NPA hierarchy~\cite{NPA} (see section A.2 of Appendix for details). Note that for all ``test round", Alice and Bob save the outcomes without any post-selections since the Bell tests are implemented without detection loophole~\cite{de2016randomness}. (For more detailed security proof of protocol, please refer to section A.3 of Appendix and Ref.~\cite{xu2021device}.)

\begin{figure*}[thbp]
\centering
\resizebox{15cm}{!}{\includegraphics{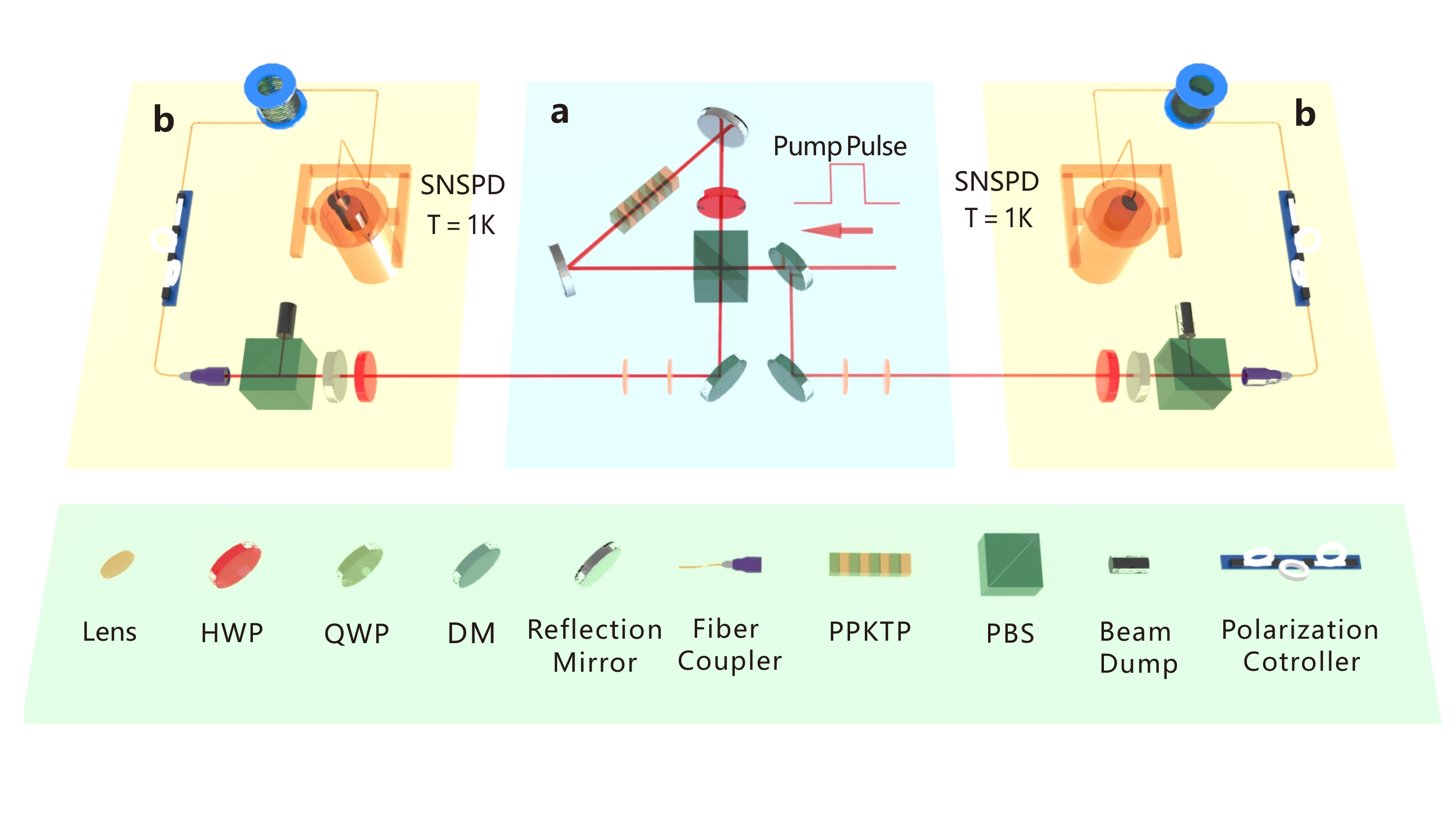}}\vspace{-0.5cm}
\caption{{\bf Schematic of the experiment.} {\bf a} Entanglement Source, Creation of pairs of entangled photons: Light pulses of 10~ns are injected at a repetition pulse rate of 2~MHz into a periodically poled potassium titanyl phosphate (PPKTP) crystal in a Sagnac loop to generate polarization-entangled photon pairs. The two photons of an entangled pair at 1560~nm travel in opposite directions to two sites Alice and Bob, where they are subject to polarization projection measurements. The PPKTP is placed in the middle of the hypotenuse of the Sagnac loop with a small angle to the light path, which does not significantly affect the upper limit of efficiency that the photonic setup could achieve, but can suppress the reflection of imperfect devices for 1560~nm photons. These enhancements lead that the non-maximally entangled state generated in our experiment has a better fidelity $99.52 \pm 0.15\%$ as compared to our previous work~\cite{liu2018device,li2021experimental,liu2021device}. {\bf b} Alice and Bob, single-photon polarization measurement: In the measurement sites, Alice (Bob) uses a set of HWP and QWP to project the single photon into pre-determined measurement bases. After being collected into the fiber, the single photons transmit through a certain length of fiber and then are detected by a superconducting nanowire single-photon detector (SNSPD) operating at 1K. HWP -- half-wave plate; QWP -- quarter-wave plate; DM -- dichroic mirror; PBS -- polarizing beam splitter.
}
\label{fig:setup}
\end{figure*}

\begin{table}[htbp]
\caption{{Efficiencies in existing photonic experiments of {\em loophole-free} Bell tests and related applications. The efficiencies in the table are averaged over Alice's and Bob's global detection efficiency. (QRNG: quantum random number generation)} }
\begin{tabular}{c|ccc|c}
\hline \hline
Label & Experiment                                          & Year & Type                       & Efficiency \\ \hline
(1)   & Shalm {\em et al.}~\cite{shalm2015strong}                 & 2015 & Bell test            & $75.15\%$  \\
(2)   & Giustina {\em et al.}~\cite{giustina2015significant}      & 2015 & Bell test            & $77.40\%$   \\
(3)   & Liu {\em et al.}~\cite{liu2018high}                       & 2018 & QRNG                 & $79.40\%$   \\
(4)   & Shen {\em et al.}~\cite{shen2018randomness}               & 2018 & QRNG                 & $82.33\%$  \\
(5)   & Bierhorst {\em et al.}~\cite{bierhorst2018experimentally} & 2018 & QRNG                 & $75.50\%$   \\
(6)   & Liu {\em et al.}~\cite{liu2018device}                     & 2018 & QRNG                 & $78.65\%$  \\
(7)   & Li {\em et al.}~\cite{li2018test}                         & 2018 & Bell test            & $78.75\%$  \\
(8)   & Zhang {\em et al.}~\cite{zhang2020experimental}           & 2020 & QRNG                 & $76.00\%$   \\
(9)   & Shalm {\em et al.}~\cite{shalm2021device}                 & 2021 & QRNG                 & $76.30\%$   \\
(10)   & Li {\em et al.}~\cite{li2021experimental}                & 2021 & QRNG                 & $81.35\%$   \\
(11)   & Liu {\em et al.}~\cite{liu2021device}                    & 2021 & QRNG                 & $84.10\%$   \\
(12)   & This work                                                & 2021 & QKD                  & $87.49\%$  \\
\hline \hline
\end{tabular}
\label{tab:eff}
\end{table}

\emph{Experiment. ---}
A schematic of the experiment is depicted in Fig.~\ref{fig:setup} which consists of three modules. Pairs of polarization-entangled photons at the wavelength of 1560 nm are generated probabilistically via the spontaneous parametric downconversion process in the central module (a). The pairs of photons are sent to two side modules (b), where Alice and Bob perform correlated detections to generate secret keys. The single-photon detection efficiency is respectively determined to be $87.16\pm0.22\%$ and $87.82\pm0.21\%$ for Alice and Bob (see section B.1 of Appendix for details), which significantly surpass the record values in previous loophole-free Bell tests with photons~\cite{shalm2015strong,giustina2015significant,li2018test,liu2018high,shen2018randomness,bierhorst2018experimentally,liu2018device,zhang2020experimental,liu2021device,li2021experimental,shalm2021device} (see Table~\ref{tab:eff}). Furthermore, the values also surpass the efficiency threshold of $86.2\%$ for device-independent key generation in a realistic scenario (see section A.3 of Appendix for details).

According to the numerical studies, we prepare a non-maximally two-photon entangled state $\cos(20.0^\circ)\ket{HV}+\sin(20.0^\circ)\ket{VH}$ and set the measurement settings to $\{-88.22^\circ, 54.29^\circ$\} and $\{9.75^\circ, 21.45^\circ, -1.07^\circ\}$ respectively for $x\in\{1,2\}$ and $y\in\{1,2,3\}$ to optimize the probability of key generation, where the values presented in degree are angles of half-wave plates in the polarization measurements by Alice and Bob (see Fig.~\ref{fig:setup}). We experimentally measure a two-photon state fidelity of $99.52\pm0.15\%$ with respect to the ideal state and achieve a CHSH game winning probability of $0.7559$ with optimized state and measurement settings (see section C.1 of Appendix for details), both substantially improving over previous results~\cite{liu2018high, liu2018device,li2018test,li2021experimental,liu2021device}. We repeat the experiment at a rate of $2\times10^6$ rounds per second.

Note that as the device-independent QKD itself assumes the validity of quantum theory and that Alice and Bob have trusted random number generators. Nevertheless, the device-independent QKD requires that, in the entire duration of the protocol, the information about the input choices and output results of one party must be unknown at other locations, e.g., the Eve’s location. The parties must be well isolated to forbid any unwanted information leakage~\cite{pironio2009device}, which ensures that the locality loophole is closed. In our experiment, this is done via the shielding assumption~\cite{pironio2010random, liu2021device}, which prohibits unnecessary communications between untrusted devices and a potential adversary. For a more definitive experiment to eliminate the shielding assumption, one could use developed electromagnetic shielding techniques, such as materials including sheet metal, metal screen, and metal foam, to avoid possible unwanted information leakage. However, considering the essential photonic channels from entanglement source to both parties, perfect shielding might not be realized experimentally. To reduce experimental complexity, we also alternate the measurement settings instead of randomization to reduce experimental complexity. While these simplifications can not be adopted in a real-field application of device independent QKD, as we will show, our results demonstrate that the secure key generation is almost achievable using the state-of-the-art technologies.

We conduct $2.4\times10^8$ rounds of experiment for each of the six combinations of measurement settings ($x,y$) and perform data analysis following the protocol. With optimized parameters $p_N = 0.13$ and $p = 0.96$, we obtain $H(\hat{A}_{\bar{x}}|E,\mathcal{V}_p\rightarrow\mathcal{N}_{p_N}) = 0.56021$ and $H(\hat{A}_{\bar{x}}|B,\mathcal{V}_p\rightarrow\mathcal{N}_{p_N}) = 0.55995$. (see section C.2 of Appendix for details). Finally, according the asymptotic key rate in Eq.~\eqref{keyrate1}, $55,920$ bits of secret keys are expected to be distilled after error correction and privacy amplification. This corresponds to $2.33\times10^{-4}$ bit per round. Furthermore, we show the feasibility to successfully generate secret keys at a fiber length of 220 meters by conducting the same rounds of experiments, for which we re-optimize the experiment over $p_N$ and $p$. These results are shown in Tab.~\ref{tab:fiber}, where the drop of the key rate when increasing the fiber length is mainly due to the decreasing of overall efficiency.

\begin{table}[htb]
\caption{The secret key rate as a function of the fiber distance between Alice and Bob. We test the device-independent QKD protocol by adding different length of fibers. }
\begin{tabular}{c|c|c|c}
\hline \hline
Fiber length/m & Key rate/$bit{\cdot}{pulse}^{-1}$ & $p_N$  & $p$ \\ \hline
20             & $2.33\times10^{-4}$               & 0.13   & 0.96  \\ \hline
80             & $5.37\times10^{-5}$               & 0.17   & 0.94  \\ \hline
220            & $1.30\times10^{-6}$               & 0.49   & 0.99  \\ \hline \hline
\end{tabular}
\label{tab:fiber}
\end{table}

\emph{Conclusion. ---}
In conclusion, we demonstrate a proof-of-principle experiment of device-independent QKD against collective attacks using a full-photonic setup. The photonic implementation enjoys the advantages of high-rate entangled-photon generations in the telecom wavelength, which is important for the practical applications involving quantum memories or quantum repeaters forming a quantum internet. With a high-quality entangled photon source, we show the measured correlations are strong enough to guarantee a positive secret key rate. However, to actually produce a key, the real-time random basis selection and more information-processing processes, such as error correction, authentication and privacy amplification in the finite-key case, remains to be done.

For random basis selection, as implemented routinely by us and other groups~\cite{liu2021device,li2021experimental,shalm2021device}, it may normally introduce about $1\%$ additional efficiency loss, which indeed makes the system working at the marginal point of efficiency threshold. However, we remark that the performance of entangled system could be greatly improved via enhancing the fidelity of the entanglement state with a different type of design of the entanglement source~\cite{shalm2015strong}. This is possible to improve the fidelity from $99.5\%$ in our system to about $99.8\%$ (calculated by given visibility). With the improved fidelity, the required efficiency can drop to $84.8\%$. This would make it possible to introduce random basis switching.

We further remark that it is still tricky to realize a faithful photonic device-independent QKD with finite-key security. Apart from experimental technical difficulties, the protocol remains to be extended to the general-attack scenario~\cite{de2016randomness}. As we adopt three ingredients in the security analysis, i.e., random-post selection, noisy-preprocessing, and a numerical method to compute the lower bound of von Neumann entropy, the finite-key analysis involving all these ingredients need to be developed. However, the main problem is that, all experimental rounds might be correlated in a general-attack scenario, where Eve would learn more information of the remaining rounds from the discarded ones. This is similar to the problem encountered with the two-way communication protocol~\cite{tan2020advantage,Ma2006Twoway,Gottesman2003twoway}, where Alice and Bob have to randomly keep one of the selected two pairs of outcomes. Nonetheless, we noticed that there have been important theory developments in this direction~\cite{Bstep2021Ma}. It is foreseen that the finite-key analysis combining the method to compute von Neumann entropy with the random post-selection and noisy preprocessing will be significantly constructive in the future.


\emph{Note added.} When we are completing the manuscript, we notice two related works were completed based on trapped ions~\cite{ionDI2021} and trapped atoms~\cite{atomDI2021}.


\section*{Acknowledgments}
We thank Peter Brown, Roger Colbeck, Charles Lim, Nicolas Sangouard, Jean-Daniel Bancal, Yanbao Zhang, Xingjian Zhang for enlightening discussions. This work was supported by the National Key Research and Development (R\&D) Plan of China (2018YFB0504300,2020YFA0309701), the National Natural Science Foundation of China (617714438,62031024,12005091), the Anhui Initiative in Quantum Information Technologies, the Chinese Academy of Sciences and the Key-Area Research and Development Program of Guangdong Province (2020B0303010001, 2019ZT08X324).

Wen-Zhao Liu and Yu-Zhe Zhang contribute equally.

\onecolumngrid
\appendix
\section{Theoretical analysis}
\subsection{Models in device-independent QKD with random-post selection and noisy preprocessing}
Here we show the main results of our device-independent QKD protocol, which is a modified version based on Ref.~\cite{pironio2009device} that combines random post-selection~\cite{xu2021device} and noisy preprocessing~\cite{ho2020noisy} approaches. As shown in Fig.~\ref{fig1}, in each round, an entangled photon source emits entangled photon pairs to both measurement parties. Alice and Bob independently and randomly choose measurement settings $x\in\{1,2\}$ and $y\in\{1,2,3\}$ to obtain binary outcomes $a\in\{0,1\}$ and $b\in\{0,1\}$, respectively, where a click on the detector is denoted as bit ``0" and a non-click is denoted as bit ``1". According to the different inputs in each round, all rounds are divided into two parts, ``key-generation round" and `test round", where ``key-generation round" is part of the rounds with inputs $(\bar{x}=1,\bar{y}=3)$, and the rest are ``test round" with inputs $(x,y)\in\{1,2\}\times\{1,2,3\}$. All data preprocessing is performed on the set of ``key-generation round". The notation $(\bar{x},\bar{y})$ represents the inputs for ``key-generation round".
\begin{figure}[h]
\centering
\includegraphics[width=0.7\linewidth]{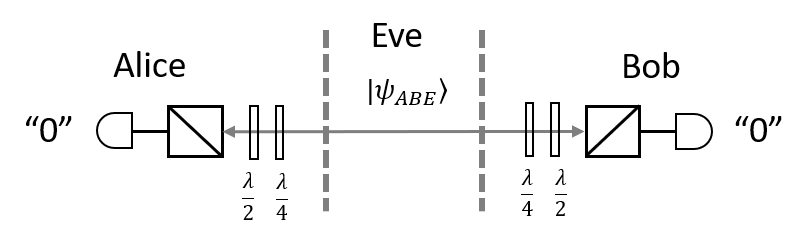}
\caption{\label{fig1}
{\bf An illustration of the device-independent QKD protocol.} The photon source $\ket\psi_{ABE}$ that is potentially controlled by adversary Eve, distribute entangled photon pairs to Alice and Bob. Alice and Bob use a set of half-wave plate $\lambda/2$ and quarter-wave plate $\lambda/4$ to choose the measurement settings $x,y$, and the binary output $a,b\in\{0,1\}$ can be obtained by the measurement of a polarized beam splitter and the following detector, respectively. $``0"$ denotes a click event and $``1"$ denotes a non-click one.}
\end{figure}

We derive the classical-quantum state after the random post-selection and noisy preprocessing and the cost of error correction. Since we focus on the asymptotic case under collective attacks with the assumptions that the devices are memoryless and behave identically and independently during the implementation, each round of the experiments can be analyzed individually. Considering the potential adversary, a tripartite state $\rho_{ABE} = \ket{\psi_{ABE}}\bra{\psi_{ABE}}$ is used to represent the quantum correlations among Alice, Bob, and Eve. The joint probability distribution corresponding to the bipartite measurement settings $(a,b)$ and outcomes $(a,b)$ can be described as
\begin{equation}
    P(a,b|x,y)=\Tr[(M^{A}_{a|x}\otimes M^{B}_{b|y} \otimes I_{E})\rho_{ABE}].
\end{equation}
where, $M^{A}_{a|x}$ and $M^{B}_{b|y}$ denote the positive operator-valued measures (POVMs) corresponding to Alice and Bob, respectively.

\emph{Random post-selection. ---}
In the implementation of random post-selection, Alice and Bob each randomly and independently discard bits ``1" with probability $1-p$, and keep all the rest bits (containing all bits ``0" and $p$ of bits ``1"). Therefore, a ``key-generation round" with outcomes $(a,b)$ can be retained with probability $\omega_{ab}=p_{a}\cdot p_{b}$, where $p_{\alpha}=1\cdot\Delta_{\alpha,0}+p\cdot\Delta_{\alpha,1}$ for $\alpha=a,b$ and $\Delta_{\alpha,i}$ is the Dirac delta function in the form of $\delta(\alpha-i)$.

Given the post-selected rounds $\mathcal{V}_p$, the three-party state $\rho_{ABE}$ can be expressed as
\beq
\rho_{ABE|\mathcal{V}_p}=\frac{1}{p_{\mathcal{V}_p}}\sum_{(a,b)\in {\mathcal{V}}_p}\omega_{ab}|ab\rangle\langle ab|\otimes \rho_{ab}^{E},
\eeq
where $\rho_{ab}^{E}=\Tr_{AB}[(M^{A}_{a|x}\otimes M^{B}_{b|y}\otimes I_{E})\rho_{ABE}]$ and $p_{\mathcal{V}_p}$ is the proportion of the post-selected events in a form of
\begin{align}
    p_{\mathcal{V}}=\sum_{a,b} \omega_{ab}P(a,b|\bar{x},\bar{y})=P(0,0|\bar{x},\bar{y})+p\cdot P(0,1|\bar{x},\bar{y})+p\cdot P(1,0|\bar{x},\bar{y})+p^2\cdot P(1,1|\bar{x},\bar{y}),
\end{align}
and the probability distribution $\widetilde{P}(a,b|\bar{x},\bar{y},\mathcal{V}_p)$ of the post-selected events $(a,b)\in\mathcal{V}_p$ is given by,
\begin{equation}
    \widetilde{P}(a,b|\bar{x},\bar{y},\mathcal{V}_p)=P(a,b|\bar{x},\bar{y})\cdot\omega_{ab}/p_{\mathcal{V}_p}.
\end{equation}
Meanwhile, by tracing out Bob's qubit, the reduced state $\rho_{AE|\mathcal{V}}$ is denoted as
\begin{equation}
    \rho_{AE|\mathcal{V}_p}=\frac{1}{p_{\mathcal{V}_p}}\left[|0\rangle\langle0|\otimes(\rho_{00}^{E}+p\cdot\rho_{01}^{E})+|1\rangle\langle1|\otimes(p\cdot\rho_{10}^{E}+{p^2}\cdot\rho_{11}^{E}) \right].
\end{equation}

We remark that before the random post-selection, Eve cannot know the exact remaining rounds and make quantum correlations due to the constraints of the device-independent QKD regime. After Alice and Bob announce the discarded rounds via the authenticated public channel, Eve would have to guess the remaining outcomes with limited information~\cite{de2016randomness}.

\emph{Noisy preprocessing. ---}
Following the noisy preprocessing $\mathcal{N}$, Alice generates the noisy raw keys $\hat{A}_{\bar{x}}$ by flipping each of her remaining bits independently with probability $p_{N}$, so that the reduced state becomes
\begin{align}
    \rho_{AE|\mathcal{V}_p\rightarrow\mathcal{N}_{p_N}}=&\frac{1}{p_{\mathcal{V}_p}}\{|0\rangle\langle0|\otimes\left[(1-p_N)\cdot(\rho_{00}^{E}+p\cdot\rho_{01}^{E})+p_N\cdot(p\cdot\rho_{10}^{E}+{p^2}\cdot\rho_{11}^{E})\right] \\ \nonumber
    &+|1\rangle\langle1|\otimes\left[(1-p_N)\cdot(p\cdot\rho_{10}^{E}+{p^2}\cdot\rho_{11}^{E})+p_N\cdot(\rho_{00}^{E}+p\cdot\rho_{01}^{E}) \right] \},
\end{align}
where $\mathcal{N}_{p_N}$ denotes the set of string after noisy preprocessing given post-selected set $\mathcal{V}_p$, and $\mathcal{V}_p\rightarrow\mathcal{N}_{p_N}$ represents the change of remaining strings. Meanwhile, the distribution $\hat{P}(a,b|\bar{x},\bar{y},\mathcal{V}_p\rightarrow\mathcal{N}_{p_N})$ between Alice and Bob becomes
\begin{align}
    \hat{P}(a,b|\bar{x},\bar{y},\mathcal{V}_p\rightarrow\mathcal{N}_{p_N})=(1-p_{N})\widetilde{P}(a,b|\bar{x},\bar{y},\mathcal{V}_p)+p_{N}\widetilde{P}(a\oplus1,b|\bar{x},\bar{y},\mathcal{V}_p).
\end{align}
The cost of one-way error correction from Alice to Bob, $H(\hat{A}_{\bar{x}}|B_{\bar{y}},\mathcal{V}_p\rightarrow\mathcal{N}_{p_N})$, is given by
\begin{equation}
    H(\hat{A}_{\bar{x}}|B_{\bar{y}},\mathcal{V}_p\rightarrow\mathcal{N}_{p_N})= \sum_{a,b}{h[\hat{P}(a,b|\bar{x},\bar{y},\mathcal{V}_p\rightarrow\mathcal{N}_{p_N})]}-\sum_{b}h[\hat{P}(0,b|\bar{x},\bar{y},\mathcal{V}_p\rightarrow\mathcal{N}_{p_N})+\hat{P}(1,b|\bar{x},\bar{y},\mathcal{V}_p\rightarrow\mathcal{N}_{p_N})],
\end{equation}
where $h(x)$ is defined as $h(x)=-x\log_2{x}$. The notation $\hat{A}$ represents Alice's string after random post-selection and noisy preprocessing.

Given the cost of error correction $H(\hat{A}_{\bar{x}}|B_{\bar{y}},\mathcal{V}_p\rightarrow\mathcal{N}_{p_N})$, we employ a recent developed method~\cite{computeDI2021} to efficiently estimate the lower bounds on the quantum conditional entropy $H(\hat{A}_{\bar{x}}|E,\mathcal{V}_p\rightarrow\mathcal{N}_{p_N})$, which quantifies the correlations between Alice and Eve. The details are shown in sec.~\ref{sec:lb}. Combined with the key rate formula (Eq.~1 in the main text), the amount of extractable secure key bits is available.

\subsection{Lower bounds on the conditional von Neumann entropy}\label{sec:lb}
With the probability distribution $\{P(a,b|x,y)\}$ observed, $H(\hat{A}_{\bar{x}}|E,\mathcal{V}_p\rightarrow\mathcal{N}_{p_N})$ can normally be estimated by finding out its lower bound. Unlike the methods based on specific Bell inequalities~\cite{nieto2014using,de2016randomness,RogerPRA2019} associated with certain outcomes, the method involving full statistics may be more efficient at raising its lower bound. Given a bipartite state $\rho_{AB}$, the conditional von Neumann entropy of $A$ given $B$ can be defined in the form of quantum relative entropy
\begin{equation}
    H(A|B)_{\rho}:=-D(\rho_{AB}\|I_{A}\otimes \rho_{B}),
\end{equation}
where $I_{A}$ is the identity matrix. We focus on the technique of converging upper bounds on the quantum relative entropy developed in Ref.~\cite{computeDI2021}, please refer to the original text for other detailed proofs and analyses.

Referring to Ref.~\cite{computeDI2021}, the related derivations and proofs below are developed in the language of von Neumann algebra. For a linear functional $\rho$ defined on a von Neumann algebra $\mathcal{A}$, $\rho:\mathcal{A} \rightarrow \mathbb{C}$, $\rho$ is positive if $\rho(a^{*}a)\ge0$ for all $a \in \mathcal{A}$. Particularly, a positive linear functional $\rho$ is said to be a state if $\rho(I)=1$. Besides, a positive $\rho$ can also be defined by a trace-class operator $\Tilde{\rho}$ by $\rho(a)=\Tr[\Tilde{\rho}a]$ for all $a \in B(H)$, where $H$ are some separable Hilbert space. To simplify notation, the same symbol $\rho$ is used for both the positive linear functional $\rho$ and the trace-class positive operator $\Tilde{\rho}$.

A sequence of converging variational upper bounds on the relative entropy is given by the following lemma:
\begin{lemma}[Brown, Fawzi, Fawzi, 2021~\cite{computeDI2021}]
\label{theorem}
let $\rho,\sigma$ be two positive linear functionals on a von Neumann algebra $\mathcal{A}$ such that $\Tr[\rho^{2}\sigma^{-1}] < \infty$. Then for any $m\in \mathbb{N}$ and the choice of $t_1,...,t_m \in (0,1]$ and $w_1,...,w_m >0$, we have
\begin{equation}
    D(\rho||\sigma)\le -\sum_{i=1}^{m}\frac{w_i}{t_i\ln{2}} \inf_{a\in\mathcal{A}}\{\rho(I)+\rho(a+a^{*})+(1-t_i)\rho(a^{*}a)+t_{i}\sigma(aa^{*})\}.
\end{equation}
Moreover, the right hand side converges to $D(\rho||\sigma)$ as $m\rightarrow \infty$.

In the special case where $\mathcal{A}=B(H)$ for a separable Hilbert space $H$ and $\rho$ and $\sigma$ are defined via trace-class operators on $H$ (also denoted by $\rho$ and $\sigma$), satisfying $\rho\le \lambda \sigma$ for some $\lambda \in \mathbb{R}_{+}$, we can give an explicit bound on the norm of the operators appearing in the optimization:
\begin{equation}
    D(\rho||\sigma)\le c_m - \sum_{i=1}^{m-1}\frac{w_i}{t_i\ln{2}} \inf_{Z\in B(H),||Z||\le \alpha_{i}} \{\Tr[\rho(Z+Z^{*})] +(1-t_{i})\Tr[\rho Z^{*}Z]+t_i\Tr[\sigma ZZ^{*}]\},
\end{equation}
where $c_m=\frac{1}{m^2}\frac{\lambda \Tr[\rho]}{\ln{2}}-\sum_{i=1}^{m}\frac{w_i\Tr[\rho] }{t_i\ln{2}}$ and $\alpha_{i}=\frac{3}{2}\max\{\frac{1}{t_i},\frac{\lambda}{1-t_i}\}$. Moreover, the right hand side converges to $D(\rho||\sigma)$ as $m\rightarrow \infty$.
\end{lemma}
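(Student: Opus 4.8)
The plan is to follow the strategy of Ref.~\cite{computeDI2021}: write $D(\rho\|\sigma)$ as an integral over $t\in(0,1]$ of a quantity that is itself a variational (infimum) expression of exactly the form appearing under the sum, and then replace the integral by a Gauss--Radau quadrature rule whose nodes and weights are the $t_i,w_i$, controlling the sign of the quadrature error to get an upper bound. The first step is to evaluate the inner optimization. For fixed $t\in(0,1]$ the map
\[
a \mapsto \rho(I)+\rho(a+a^{*})+(1-t)\rho(a^{*}a)+t\sigma(aa^{*})
\]
is a quadratic functional of $a$ with positive quadratic part, governed by the superoperator $\mathcal{L}_t(a):=(1-t)\,a\rho+t\,\sigma a$, which is positive and (for $t<1$ with $\rho>0$) invertible. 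Completing the square shows the minimizer solves the Sylvester/Lyapunov equation $\mathcal{L}_t(a)=-\rho$ and that the minimum value equals $\rho(I)-\langle\rho,\mathcal{L}_t^{-1}\rho\rangle$, with $\langle X,Y\rangle=\Tr[X^{*}Y]$. The hypothesis $\Tr[\rho^{2}\sigma^{-1}]<\infty$ is exactly what keeps this finite at the endpoint $t=1$, where $\mathcal{L}_1^{-1}\rho=\sigma^{-1}\rho$ and $\langle\rho,\mathcal{L}_1^{-1}\rho\rangle=\Tr[\rho^{2}\sigma^{-1}]$.

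Next I would show that integrating $-\tfrac{1}{t\ln2}$ times this minimum over $t\in(0,1]$ reproduces $D(\rho\|\sigma)$. In the commuting (scalar) case this is a direct computation: the minimum equals $\rho\,\tfrac{t(\sigma-\rho)}{(1-t)\rho+t\sigma}$, so the integrand becomes $\tfrac{1}{\ln2}\tfrac{\rho(\rho-\sigma)}{(1-t)\rho+t\sigma}$, and the substitution $u=(1-t)\rho+t\sigma$ yields $\tfrac{1}{\ln2}\rho\,(\ln\rho-\ln\sigma)$, i.e.\ the relative-entropy integrand. The general non-commuting case is handled by passing through the functional calculus and the integral representation of the operator logarithm, so that the trace of the operator identity reduces eigenvalue-wise to the scalar one. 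This establishes $D(\rho\|\sigma)=\int_0^1\big[-\tfrac{1}{t\ln2}\inf_a\{\cdots\}\big]\,dt$.

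I would then take $t_1,\dots,t_m$ and $w_1,\dots,w_m$ to be the nodes and weights of the $m$-point Gauss--Radau rule on $[0,1]$ with a fixed node at $t=1$, and replace the integral by $\sum_i w_i(\cdots)|_{t=t_i}$. The key point --- and the \textbf{main obstacle} --- is to show the quadrature error has the right sign, so that the finite sum \emph{over}-estimates the integral and yields ``$\le$''. This follows from the definite sign of the high-order derivative of the integrand, which is a consequence of the operator convexity of $-\log$; making this rigorous at the level of the von Neumann algebra (rather than just for scalars), together with verifying that the infimum is attained and equals the operator-mean expression in infinite dimensions, is the delicate part. Convergence of the right-hand side to $D(\rho\|\sigma)$ as $m\to\infty$ is then the standard convergence of Gauss--Radau quadrature applied to the smooth, integrable integrand.

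Finally, for $\mathcal{A}=B(H)$ with $\rho\le\lambda\sigma$, I would derive the a priori bound $\|a\|\le\alpha_i$ on the Step-1 minimizer from $\mathcal{L}_{t_i}(a)=-\rho$ and the domination $\rho\le\lambda\sigma$, so that restricting the infimum to the ball $\|Z\|\le\alpha_i$ does not change its value. The endpoint node $t_m=1$ (where $\alpha_m=\infty$) must be treated separately: there the minimum $\Tr[\rho]-\Tr[\rho^{2}\sigma^{-1}]$ is bounded using $\Tr[\rho^{2}\sigma^{-1}]\le\lambda\Tr[\rho]$ and the Gauss--Radau endpoint weight $w_m=1/m^2$, producing the $\tfrac{1}{m^2}\tfrac{\lambda\Tr[\rho]}{\ln2}$ term of $c_m$; pulling the constant $\rho(I)=\Tr[\rho]$ out of each remaining infimum supplies the $-\sum_i\tfrac{w_i\Tr[\rho]}{t_i\ln2}$ part of $c_m$ and leaves the stated $Z$-optimization over $i=1,\dots,m-1$.
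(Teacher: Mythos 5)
The paper does not prove this lemma at all --- it is quoted verbatim from Brown--Fawzi--Fawzi~\cite{computeDI2021}, with the reader explicitly referred to that reference for the proof --- so there is no in-paper argument to compare against. Your outline (integral representation of $D(\rho\|\sigma)$ over $t\in(0,1]$, the quadratic minimization solved by the Sylvester equation $(1-t)a\rho+t\sigma a=-\rho$ with minimum $\rho(I)-\langle\rho,\mathcal{L}_t^{-1}\rho\rangle$, Gauss--Radau quadrature with error sign controlled by operator convexity, and the separate treatment of the fixed node $t_m=1$ via $\Tr[\rho^{2}\sigma^{-1}]\le\lambda\Tr[\rho]$ and $w_m=1/m^{2}$) is a faithful reconstruction of the Brown--Fawzi--Fawzi argument, and the step you single out as the main obstacle --- establishing the definite sign of the quadrature error for the non-commuting integrand --- is indeed where essentially all of the work in the original proof lies.
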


With Lemma~1, we can obtain a sequence of lower bounds on the conditional von Neumann entropy using the NPA hierarchy. In this work, $\rho$ and $\sigma$ are defined by trace-class operators instead of general positive linear functionals on $\mathcal{A}$.

We shall now estimate the lower bounds on $H(\hat{A}_{\bar{x}}|E,\mathcal{V}_p\rightarrow\mathcal{N}_{p_N})$ by a converging sequence of optimizations that can be subsequently computed using the NPA hierarchy. Given the device-independent protocol with random post-selection and noisy preprocessing, the following proof can be regarded as an application based on the original text~\cite{computeDI2021}.

\begin{proposition}
\label{proposition}
Let $m \in \mathbb{N}$ and Let $t_1,...,t_m$ and $w_1,...,w_m$ be the nodes and weights of an m-point Gauss-Radau quadrature on $[0,1]$ with  an endpoint $t_m=1$. Let $\rho_{ABE}$ be the initial quantum state shared between the devices of Alice, Bob and Eve and let $\{M^{A}_{a|\bar{x}}\}$ and $\{M^{B}_{{b}|\bar{y}}\}$ denote the measurements operators performed by Alice's and Bob's device in response to the input $X=\bar{x}$, $Y=\bar{y}$ in ``key-generation round", respectively. Let $M_a=\frac{1}{p_{\mathcal{V}_p}}\sum_{b=0}^{1}\omega_{ab}M^{A}_{a|\bar{x}}\otimes M^{B}_{b|\bar{y}}$ and $\hat{M}_{a}=(1-p_N)\cdot M_{a} + p_N\cdot M_{a\oplus1}$ for $a\in\{0,1\}$. Furthermore for $i=1,...,m-1$, let $\alpha_{i}=\frac{3}{2}\max\{\frac{1}{t_i},\frac{1}{1-t_i}\}$. Then $H(\hat{A}_{\bar{x}}|E,\mathcal{V}_p\rightarrow\mathcal{N}_{p_N})$ is never smaller than
\begin{align}
     c_m+\sum_{i=1}^{m-1}\frac{w_i}{t_i\ln{2}}\sum_{a}\inf_{Z_{a}\in B(E)}& \Tr{\left[\rho_{AE}\left( \hat{M}_{a}\otimes(Z_{a}+Z_{a}^{*}+(1-t_i)Z_{a}^{*}Z_{a})+t_i(\hat{M}_0+\hat{M}_1)\otimes Z_{a}Z_{a}^{*} \right)\right]} \\ \nonumber
     s.t.& \quad ||Z_{a}||\le \alpha_{i},
\end{align}
where $c_m=-\frac{1}{m^2\ln{2}}+\sum_{i=1}^{m}\frac{w_i}{t_i\ln{2}}$. Moreover the lower bounds converge to $H(\hat{A}_{\bar{x}}|E,\mathcal{V}_p\rightarrow\mathcal{N}_{p_N})$ as $m\rightarrow\infty$.
\end{proposition}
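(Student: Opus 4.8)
The plan is to obtain the lower bound on $H(\hat{A}_{\bar{x}}|E,\mathcal{V}_p\rightarrow\mathcal{N}_{p_N})$ directly from the variational upper bound on the relative entropy in Lemma~\ref{theorem}, via the identity $H(\hat{A}_{\bar{x}}|E,\mathcal{V}_p\rightarrow\mathcal{N}_{p_N})=-D(\rho\,\|\,\sigma)$ with $\rho=\rho_{AE|\mathcal{V}_p\rightarrow\mathcal{N}_{p_N}}$ the classical--quantum state derived in Sec.~A.1 and $\sigma=I_{\hat{A}}\otimes\rho_{E}$, where $\rho_{E}=\Tr_{\hat A}[\rho]$. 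First I would invoke the explicit special case of Lemma~\ref{theorem}, which requires a constant $\lambda$ with $\rho\le\lambda\sigma$. Writing $\rho=\sum_{a}\ket{a}\bra{a}\otimes\sigma_{a}^{E}$ with $\sigma_{a}^{E}$ the (sub-normalized) conditional states of Eve, one has $\sigma_{a}^{E}\le\sum_{a'}\sigma_{a'}^{E}=\rho_{E}$, so $\rho\le I_{\hat A}\otimes\rho_E=\sigma$ and $\lambda=1$; together with $\Tr[\rho]=1$ this already fixes the constant $c_m$ and the radius $\alpha_i=\tfrac32\max\{1/t_i,1/(1-t_i)\}$ of the statement. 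I would take $t_1,\dots,t_m$ and $w_1,\dots,w_m$ to be the Gauss--Radau nodes and weights with endpoint $t_m=1$, precisely so that the $i=m$ term is the one treated separately and absorbed into $c_m$, leaving the optimization sum running only to $m-1$.

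The structural heart of the argument is the reduction of the single operator optimization over $Z\in B(\hat A\otimes E)$ in Lemma~\ref{theorem} to a sum of independent optimizations over $Z_a\in B(E)$. Both $\rho$ and $\sigma$ are block diagonal in the classical register $\hat A$, i.e.\ they are fixed points of the pinching $\mathcal{D}(\cdot)=\sum_a P_a(\cdot)P_a$ with $P_a=\ket{a}\bra{a}\otimes I_E$. I would show that the objective $f(Z)=\Tr[\rho(Z+Z^{*})]+(1-t_i)\Tr[\rho Z^{*}Z]+t_i\Tr[\sigma ZZ^{*}]$ satisfies $f(\mathcal{D}(Z))\le f(Z)$: the linear term is unchanged because $\mathcal{D}$ is self-adjoint and $\mathcal{D}(\rho)=\rho$, while the two quadratic terms do not increase by the Kadison--Schwarz inequality $\mathcal{D}(Z)^{*}\mathcal{D}(Z)\le\mathcal{D}(Z^{*}Z)$ (and its adjoint) combined with $\mathcal{D}(\rho)=\rho$, $\mathcal{D}(\sigma)=\sigma$. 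Since $\mathcal{D}$ is a contraction in operator norm, the constraint $\|\mathcal{D}(Z)\|\le\|Z\|\le\alpha_i$ is preserved, so the infimum is unchanged when restricted to block-diagonal $Z=\sum_a\ket{a}\bra{a}\otimes Z_a$. For such $Z$ the objective splits as $f(Z)=\sum_a f_a(Z_a)$ and the infimum factorizes into $\sum_a\inf_{Z_a}$.

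It then remains to put each block objective into the device-independent form that the NPA hierarchy can handle. Following the derivation of the c-q state in Sec.~A.1, I would absorb the post-selection weights into $M_a=\tfrac{1}{p_{\mathcal{V}_p}}\sum_b\omega_{ab}M^{A}_{a|\bar x}\otimes M^{B}_{b|\bar y}$ and the bit flip into $\hat M_a=(1-p_N)M_a+p_N M_{a\oplus1}$, so that $\sigma_a^{E}=\Tr_{AB}[(\hat M_a\otimes I_E)\rho_{ABE}]$. A short check gives $\hat M_0+\hat M_1=M_0+M_1$, hence $\rho_E=\sigma_0^E+\sigma_1^E=\Tr_{AB}[((\hat M_0+\hat M_1)\otimes I_E)\rho_{ABE}]$; substituting these turns every trace against $\sigma_a^E$ or $\rho_E$ into a trace of $\rho_{ABE}$ against operators supported on the Alice--Bob and Eve factors, which is exactly the expression $\Tr[\rho_{AE}(\hat M_a\otimes(Z_a+Z_a^{*}+(1-t_i)Z_a^{*}Z_a)+t_i(\hat M_0+\hat M_1)\otimes Z_aZ_a^{*})]$ of the statement. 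Flipping the overall sign from $H=-D$ converts $c_m$ of the Lemma (with $\lambda=1$, $\Tr[\rho]=1$) into the advertised $c_m=-\tfrac{1}{m^2\ln2}+\sum_{i=1}^{m}\tfrac{w_i}{t_i\ln2}$, and the convergence as $m\to\infty$ is inherited directly from Lemma~\ref{theorem}.

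The step I expect to be the main obstacle is the block-diagonalization: one must justify the pinching/Kadison--Schwarz manipulation in the von Neumann algebra setting of Lemma~\ref{theorem} (where $Z$ need not be finite rank) and check carefully that the norm constraint is not violated, since this is what legitimizes replacing one noncommutative optimization by a tractable sum of per-outcome problems. The remaining bookkeeping --- verifying $\rho\le\sigma$ to fix $\lambda=1$, and the identity $\hat M_0+\hat M_1=M_0+M_1$ that produces the symmetric $t_i(\hat M_0+\hat M_1)\otimes Z_aZ_a^{*}$ term --- is routine but must be carried out precisely, since it is exactly what distinguishes this post-selected, noise-preprocessed application from the plain device-independent bound of Ref.~\cite{computeDI2021}.
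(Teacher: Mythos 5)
Your proposal is correct and follows essentially the same route as the paper's proof: apply Lemma~1 with $\lambda=1$ (justified by the classical--quantum structure $\rho_{AE|\mathcal{V}_p\rightarrow\mathcal{N}_{p_N}}\le I_{A}\otimes\rho_{E|\mathcal{V}_p\rightarrow\mathcal{N}_{p_N}}$), reduce the optimization to block-diagonal $Z=\sum_a\ket{a}\bra{a}\otimes Z_a$, and rewrite each trace against $\rho_{ABE}$ using $\hat{M}_a$ and $\hat{M}_0+\hat{M}_1$. Your pinching/Kadison--Schwarz formulation of the block-diagonal reduction is just a repackaging of the paper's explicit inequality $\sum_{b}Z_{(a,b)}^{*}Z_{(a,b)}\ge Z_{(a,a)}^{*}Z_{(a,a)}$ (and its adjoint), so the two arguments coincide in substance.
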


\begin{proof}
Let $\rho_{{A}E|\mathcal{V}_p\rightarrow\mathcal{N}_{p_N}}=\sum_{a}|a\rangle\langle a|\otimes \rho_{E|\mathcal{V}_p\rightarrow\mathcal{N}_{p_N}}(a,\bar{x})$ be the final classical-quantum state after Alice and Bob have performed the data preprocessing to the input $(\bar{x},\bar{y})$, i.e., $\rho_{E|\mathcal{V}_p\rightarrow\mathcal{N}_{p_N}}(a,\bar{x})=\Tr_{AB}[\rho_{ABE}( \hat{M}_{a} \otimes I_{E})]$. Then for any $m\in \mathbb{N}$ we have
\begin{align}
H(\hat{A}_{\bar{x}}|E,\mathcal{V}_p\rightarrow\mathcal{N}_{p_N})&=-D(\rho_{ABE|\mathcal{V}_p\rightarrow\mathcal{N}_{p_N}}||I_{A}\otimes\rho_{E|\mathcal{V}_p\rightarrow\mathcal{N}_{p_N}}) \\ \nonumber
&\ge c_m + \sum_{i=1}^{m-1}\frac{w_i}{t_i\ln{2}} \inf_{Z\in B(H)} \{\Tr[\rho_{{A}E|\mathcal{V}_p\rightarrow\mathcal{N}_{p_N}}(Z+Z^{*})] \\ \nonumber
&\quad +(1-t_{i})\Tr[\rho_{{A}E|\mathcal{V}_p\rightarrow\mathcal{N}_{p_N}} Z^{*}Z]+t_i\Tr[(I_{A}\otimes\rho_{E|\mathcal{V}_p\rightarrow\mathcal{N}_{p_N}}) ZZ^{*}]\} \\ \nonumber
&  s.t. \quad ||Z||\le \frac{3}{2}\max\{\frac{1}{t_i},\frac{\lambda}{1-t_i}\}
\end{align}
where Lemma~1 has been used with $c_m=-\frac{\lambda}{m^2\ln{2}}+\sum_{i=1}^{m}\frac{w_i}{t_i\ln{2}}$ and $\lambda$ is some real number such $\rho_{{A}E|\mathcal{V},\mathcal{N}}\le \lambda I_{A}\otimes\rho_{E|\mathcal{V},\mathcal{N}}$. As the classical system $A$ is in a finite dimension, i.e., $I_{A}=\sum_{a=0,1}|a\rangle\langle a|$, we can write the operator $Z=\sum_{ab}|a\rangle\langle b|\otimes Z_{(a,b)}$ with operators $Z_{(a,b)}\in B(E)$. Then for the first term we have
\begin{align}
    \Tr[\rho_{{A}E|\mathcal{V}_p\rightarrow\mathcal{N}_{p_N}}(Z+Z^{*})]&=\sum_{a}\Tr[\rho_{E|\mathcal{V}_p\rightarrow\mathcal{N}_{p_N}}{(a,\bar{x})}(Z_{(a,a)}+Z_{(a,a)}^{*})] \\ \nonumber
    &=\sum_{a}\Tr[\Tr_{AB}[\rho_{ABE}(\hat{M}_a\otimes I_{E})](Z_{(a,a)}+Z_{(a,a)}^{*})] \\ \nonumber
    &=\sum_{a}\Tr[\rho_{ABE}\hat{M}_a\otimes(Z_{(a,a)}+Z_{(a,a)}^{*})].
\end{align}
Repeating this for the second term, we have
\begin{align}
    \Tr[\rho_{{A}E|\mathcal{V}_p\rightarrow\mathcal{N}_{p_N}}(Z^{*}Z)]&=\sum_{ab}\Tr[\rho_{E|\mathcal{V}_p\rightarrow\mathcal{N}_{p_N}}{(a,\bar{x})}(Z_{(a,b)}^{*}Z_{(a,b)})] \\ \nonumber
    &\ge\sum_{a}\Tr[\rho_{E|\mathcal{V}_p\rightarrow\mathcal{N}_{p_N}}{(a,\bar{x})}(Z_{(a,a)}^{*}Z_{(a,a)})] \\ \nonumber
    &=\sum_{a}\Tr[\rho_{ABE}\hat{M}_a\otimes Z_{(a,a)}^{*}Z_{(a,a)}].
\end{align}
where on the second line we have used the fact $\sum_{b}Z_{(a,b)}^{*}Z_{(a,b)}\ge Z_{(a,a)}^{*}Z_{(a,a)}$. Finally, for the third term
\begin{align}
    \Tr[(I_{{A}}\otimes\rho_{E|\mathcal{V}_p\rightarrow\mathcal{N}_{p_N}})ZZ^{*}]&=\sum_{ab}\Tr[\rho_{E|\mathcal{V}_p\rightarrow\mathcal{N}_{p_N}}Z_{(a,b)}Z_{(a,b)}^{*}] \\ \nonumber
    &\ge \sum_{a}\Tr[\rho_{E|\mathcal{V}_p\rightarrow\mathcal{N}_{p_N}}Z_{(a,a)}Z_{(a,a)}^{*}] \\ \nonumber
    &= \sum_{a}\Tr[\rho_{ABE}(\hat{M}_0+\hat{M}_1)\otimes Z_{(a,a)}Z_{(a,a)}^{*}]
\end{align}
As stated in the Ref.~\cite{computeDI2021}, one only need consider the operator $Z$ that are block diagonal, i.e., $Z=\sum_{a}|a\rangle\langle a|\otimes Z_{a}$. We hence recover the objective function stated in the proposition. Since $\rho_{{A}E|\mathcal{V}_p\rightarrow\mathcal{N}_{p_N}}$ is a classical-quantum state we have $\rho_{{A}E|\mathcal{V}_p\rightarrow\mathcal{N}_{p_N}}\le I_{A}\otimes\rho_{E|\mathcal{V}_p\rightarrow\mathcal{N}_{p_N}}$ and we can set $\lambda=1$.
\end{proof}

Proposition~1 provides a converging sequence of lower bounds on the conditional von Neumann entropy. The lower bounds on the rate of our device-independent QKD protocol can be obtained by including the optimizations of all states, measurements and Hilbert spaces, subjecting to any constraints on the joint probability distribution $\{P(a,b|x,y)\}$ which involves all outcomes without post-selection to close detection loopholes. Referring to Ref.~\cite{computeDI2021}, a lower bound on $H(\hat{A}_{\bar{x}}|E,\mathcal{V}_p\rightarrow\mathcal{N}_{p_N})$ is given by the following optimization problem:
\begin{align}
     c_m+\sum_{i=1}^{m-1}\frac{w_i}{t_i\ln{2}}\inf & \sum_{a=0}^{1} \left[\langle\psi| \hat{M}_{a}(Z_{a,i}+Z_{a,i}^{*}+(1-t_i)Z_{a,i}^{*}Z_{a,i})+t_i(\hat{M}_0+\hat{M}_1) Z_{a,i}Z_{a,i}^{*}|\psi\rangle\right] \\ \nonumber
     s.t.& \quad \langle\psi|M^A_{a|x} M^B_{b|y}|\psi\rangle = P(ab|xy), \quad \text{for all $a, b, x, y$} \\ \nonumber
     & \quad \sum_{a}M^A_{a|x}=\sum_{b}M^B_{b|y}=I,\quad \text{for all $x, y$} \\ \nonumber
     & \quad M^A_{a|x}\ge0, \quad M^B_{b|y}\ge0 \quad \text{for all $a, b, x, y$} \\ \nonumber
     & \quad Z_{a,i}Z_{a,i}^{*}\le \alpha_{i} \quad \text{for all $a,i$} \\ \nonumber
     & \quad Z_{a,i}^{*}Z_{a,i}\le \alpha_{i} \quad \text{for all $a,i$} \\ \nonumber
     & \quad [M^A_{a|x},M^B_{b|y}]=[M^A_{a|x},Z_{a,i}^{(*)}]=[M^B_{a|x},Z_{a,i}^{(*)}]=0 \quad \text{for all $a, b, x, y, i$} \\ \nonumber
     & \quad M^A_{a|x}, M^B_{b|y}, Z_{a,i} \in B(H) \quad \text{for all $a, b, x, y, i$}
\label{SDP}
 \end{align}
where we have considered the pure states and dropped the tensor product structure. Using the Navascu\'{e}s-Pironio-Ac\'{\i}n (NPA) hierarchy~\cite{NPA,NPA2}, we can relax the optimization to a sequence of semidefinite programs (SDPs).

\subsection{Simulations}\label{sec:sim}
In order to choose suitable experimental parameters, we perform numerical calculations and simulations based on a model that includes non-negligible defects in practical scenarios, such as the dark count of detectors, the non-ideal fidelity of entanglement state, the multiphoton effects due to the photon-number distribution of the entanglement source. These imperfections turns out to be the main cause of additional unwanted noise that suppresses Bell violations to low levels and makes the experimental demonstration of photonic device-independent QKD nontrivial~\cite{DI2018}.

In our experiment, the spontaneous parametric down‐conversion (SPDC) source is employed to produce bipartite quantum states. A polarized beam splitter (PBS) and a superconducting nanowire single photon detector (SNSPD) are employed to perform measurements on each side, where a set of wave-plates are used to choose the local measurement settings. There are two possible outcomes for each measurement: click or non-click on the detector. We denote a click event as bit $``0"$ and a non-click event as bit $``1"$.

Ideally, the prepared quantum state is in the form of
\begin{equation}
 \ket\psi=\frac{1}{\sqrt{1+r^2}}(\ket{HV}+r\ket{VH}),
\label{eq:phi}
\end{equation}
where $r\in[0,1]$. With the imperfections in an experimental scenario, the prepared state can be given in the form of Werner state~\cite{werner1989}
\begin{equation}
 \rho=V\times \frac{1}{1+r^2}
\begin{pmatrix}
 0 & 0 & 0   & 0 \\
 0 & 1 & r   & 0 \\
 0 & r & r^2 & 0 \\
 0 & 0 & 0   & 0
\end{pmatrix}
+(1-V)\times \frac{I}{4},
\end{equation}
where $V$ denotes the visibility along a certain measurement direction. We set $V$ according to the fidelity $F$ result in section~\ref{sec:region} and the relation between fidelity $F$ and $v$, which is $F = (1+3V)/4$.

Considering the polarization-entangled state generated from SPDC source (see equation~\ref{eq:phi}), the projective measurements are restricted on the $x$--$z$ plane of the Bloch-sphere and are in the form of $\Pi(\phi)=\cos(\phi)\sigma_z+\sin(\phi)\sigma_x$ for $\phi \in [-\pi,\pi]$. The measurement results of each pair of entangled photons can be written as $\Tilde{p}(a,b|x,y)=\Tr[\rho (M^A_{a|x}\otimes M^B_{b|y})]$, where $M^A_{a|x}$ and $M^B_{b|y}$ are defined as
\begin{align}
    M^A_{0|x}&=[1-(1-p_d)(1-\eta_A)]\frac{I+\Pi(\phi_{x})}{2}+p_d\cdot \frac{I-\Pi(\phi_{x})}{2}, M^A_{1|x}=I-M^A_{0|x}\\ \nonumber
    M^B_{0|y}&=[1-(1-p_d)(1-\eta_B)]\frac{I+\Pi(\phi_{y})}{2}+p_d\cdot \frac{I-\Pi(\phi_{y})}{2}, M^B_{1|x}=I-M^B_{0|x}, \\ \nonumber
\end{align}
where $\eta_A, \eta_B \in [0,1]$ are the detection efficiencies of entangled photons from source to Alice and Bob, respectively. $p_d = 10^{-6}$ is the dark count probability in each measurement.

The photon pair distribution of a SPDC source follows a Poisson distribution~\cite{CCLim_SPDC,Kiyohara:16,liu2018high}, i.e.
\begin{equation}
    P(n)=e^{-\mu}\frac{{\mu}^n}{n!},
\end{equation}
where $\mu$ is the mean photon number. The source may emit vacuum with probability $P(0)$, one pair with probability $P(1)$, two pairs with probability $P(2)$, and so on. Considering the rapid decrease in probability as the number of photon pairs increases, we only involve the effects of three and below multiphoton pairs. In each round, four possible and also mutually exclusive outcomes make up the set $\mathcal{Z}: ab=\{00,01,10,11\}$, where the $k$-th outcome is denoted as $\mathcal{Z}_k$. Assuming a specific input setting $(x,y)$, the probabilities corresponding to the four outcomes with a single pair of photons are denoted respectively as $p_{k}(x,y)$. Then the probabilities with two-pair photons can be given by
\begin{equation}
    p_{k}^{\text{2-pairs}}(x,y)=\sum_{i,j}\beta_{i,j}^{(k)}p_{i}(x,y)p_{j}(x,y), \text{for $k$=1,2,3,4.}
\label{eq:2pair}
\end{equation}
where the four coefficients $\beta_{i,j}^{k}$ have the following form:
\begin{equation}
 \beta_{i,j}^{(1)}=
\begin{pmatrix}
 1 & 1 & 1   & 1 \\
 1 & 0 & 1   & 0 \\
 1 & 1 & 0   & 0 \\
 1 & 0 & 0   & 0
\end{pmatrix},
\beta_{i,j}^{(2)}=
\begin{pmatrix}
 0 & 0 & 0   & 0 \\
 0 & 1 & 0   & 1 \\
 0 & 0 & 0   & 0 \\
 0 & 1 & 0   & 0
\end{pmatrix},
\beta_{i,j}^{(3)}=
\begin{pmatrix}
 0 & 0 & 0   & 0 \\
 0 & 0 & 0   & 0 \\
 0 & 0 & 1   & 1 \\
 0 & 0 & 1   & 0
\end{pmatrix},
\beta_{i,j}^{(4)}=
\begin{pmatrix}
 0 & 0 & 0   & 0 \\
 0 & 0 & 0   & 0 \\
 0 & 0 & 0   & 0 \\
 0 & 0 & 0   & 1
\end{pmatrix}.
\end{equation}
The probabilities of three-pair photons can be derived according to Eq.~\ref{eq:2pair},
\begin{equation}
    p_{k}^{\text{3-pairs}}(x,y)=\sum_{i,j}\beta_{i,j}^{k}p_{i}^{\text{2-pairs}}(x,y)p_{j}(x,y).
\end{equation}
Combined these results, the joint conditional distribution $P(a,b|x,y)$ can be calculated as
\begin{align}
    P(a,b|x,y,ab=\mathcal{Z}_k)&=\exp(-u)\sum_{n=0}^{3}\frac{u^n}{n!}\cdot p_{k}^{\text{n-pairs}}(x,y), \\ \nonumber
\end{align}
where $u$ denotes the mean photon number per pulse. It should be noted that, since the dark count is independent of the input measurement settings, the probabilities of no photons are $\{p_{k}^{\text{0-pairs}}(x,y), \forall k\} = \{p_d^{2},p_d(1-p_d),p_d(1-p_d),(1-p_d)^2\}$.

In general, we develop a numerical model for optimizations based on actual defects, which performed well in our previous work~\cite{liu2018high, li2018test,liu2018device,li2021experimental,liu2021device}. The main results are shown in Fig.~\ref{fig3}, where the purple solid curve represents the results according to the performance of our system. We compute the key rate with the Gauss-Radau quadrature~\cite{davis2007methods,computeDI2021} of $m = 8$ nodes and the results are calculated at a relaxation level $2 + ABZ + AZZ$.

Assuming that no preprocessing methods is involved, we consider the variation of the secure key rate with the detection efficiency in the ideal and non-ideal cases separately. The results are represented by an orange dashed curve and a blue dotted-dashed curve in the Fig.~\ref{fig3}, respectively. It shows the impact of considering imperfections when calculating the key rate. In order to guarantee a positive key rate, ideally only the efficiency of over $83\%$ is required, but considering the actual imperfect quantum state, it must exceed the efficiency of $91.2\%$.

When random post-selection and noisy preprocessing are applied, the efficiency threshold for achieving a positive key can be effectively reduced to $86.2\%$, which is still unattainable for previous photonic Bell-test type experiments ~\cite{giustina2013bell,christensen2013detection,shalm2015strong,giustina2015significant,li2018test,liu2018high,shen2018randomness,bierhorst2018experimentally,liu2018device,zhang2020experimental,liu2021device,li2021experimental,shalm2021device}.
In the process of calculating the key rate, we adopt a genetic algorithm and traverse the following parameters, $p_N$, $p$, $r$, $u$ and measurement settings $(\phi_{x_1},\phi_{x_2},\phi_{y_1},\phi_{y_2},\phi_{y_3})$, to find the optimized key rate and the corresponding experimental parameters.

\begin{figure}[htbp]
\centering
\includegraphics[width=0.6\linewidth]{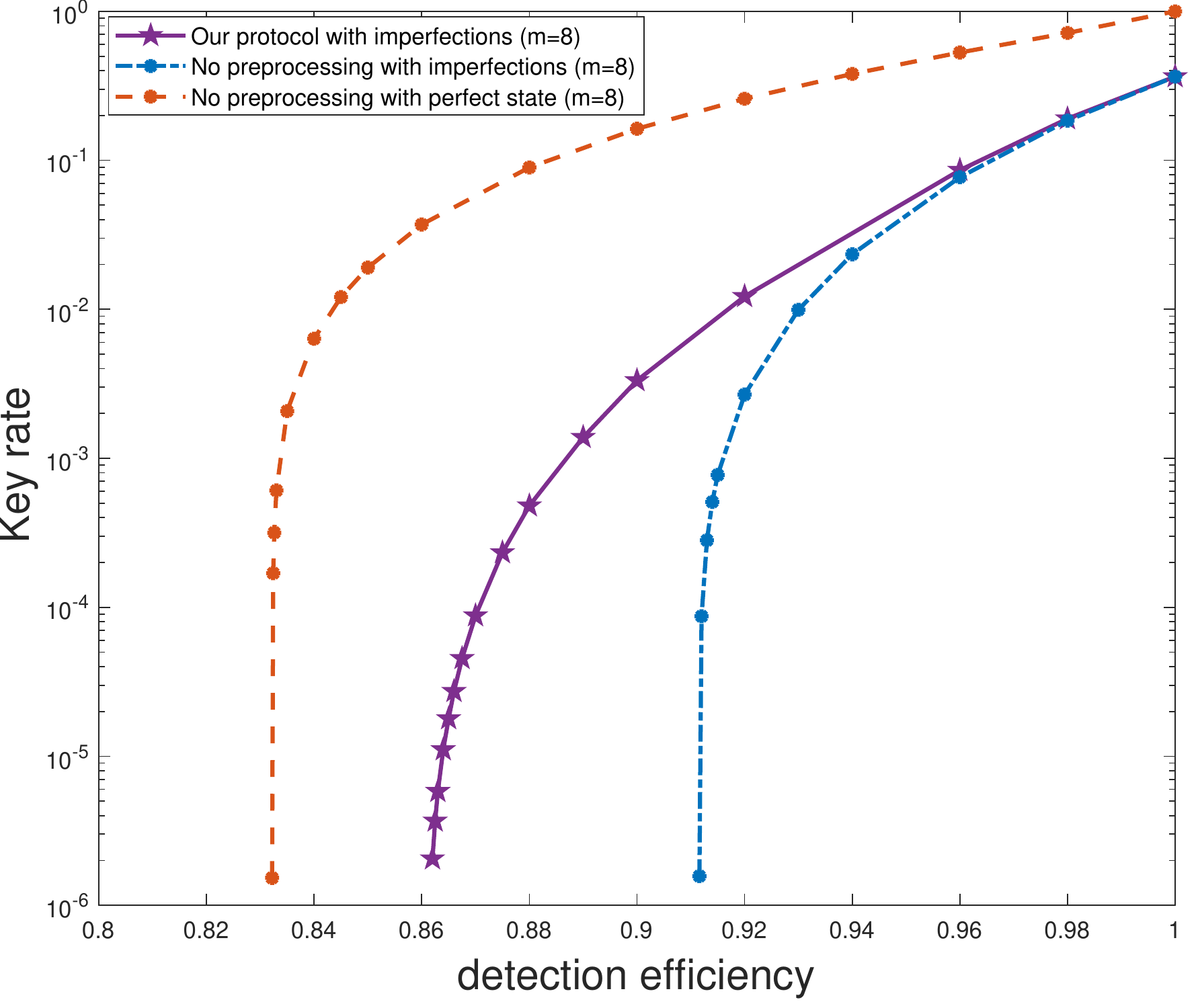}
\caption{{\bf Variation of key rate with different detection efficiencies.} Without any random post-selection and noisy preprocessing, the threshold efficiency for key generation is about $83\%$ with an ideal pure state (see orange dot-dashed curve), but increases to about $91.2\%$ considering practical imperfections (see blue dot-dashed curve). In our device-independent protocol, the threshold of efficiency is about $86.2\%$ (see purple solid curve).}
\label{fig3}
\end{figure}


In practice, to actually produce a key, more information-processing processes, such as error correction, authentication and privacy amplification, remains to be done. In table~\ref{tab:eff2}, we present the detection efficiency threshold when accounting for the typical efficiencies of error correction algorithms. To our knowledge, the best error correction inefficiency is $f_{e}=1.06$~\cite{Tang2021fe}. With this error correction algorithm, we find that it will request an efficiency threshold of $93.2\%$ to achieve a positive key. Meanwhile, since our work is a proof-of-principle verification, we choose the Shannon limit $f_e = 1.0$, which is reachable in the case of infinite data size.

\begin{table}[htbp]
\caption{Detection efficiency threshold accounting for the efficiencies of error correction algorithms.}
\begin{tabular}{c|cccc}
\hline \hline
Error correction efficiency $f_{e}$  & 1.06    & 1.01    & 1.001   & 1.0001  \\ \hline
Detection efficiency threshold         & 93.2\%  & 90.5\%  & 88.1\%  & 86.8\%  \\
\hline \hline
\end{tabular}
\label{tab:eff2}
\end{table}

\subsection{Underlying quantum distribution estimation}~\label{sec:norm}
The quantum conditional von Neumann entropy $H(\hat{A}_{\bar{x}}|E,\mathcal{V}_p\rightarrow\mathcal{N}_{p_N})$ is subject to the joint probability distributions $\{P(a,b|x,y)\}$ constraints. Given a bipartite quantum state $\rho$ shared between Alice and Bob, and the local measurement operators $A_{a|x}$ and $B_{b|y}$, where $a$($b$) and $x$($y$) represent the outcome and the measurements settings for Alice (Bob), then the joint probability distributions $\{f(a,b|x,y)\}$ can be obtained by the corresponding measurements $f(a,b|x,y)=\Tr(\rho A_{a|x}\otimes B_{b|y})$. Moreover, quantum distributions theoretically satisfy the no-signaling conditions~\cite{popescu1994quantum}, i.e., there is no information exchange between the preparation of inputs and the preparation of entangled photon pairs, or between the measurement setting on one side and outcome on the other, which has been included in the device-independent QKD regime. All these $f(a,b|x,y)$ satisfying no-signaling conditions compose a set $\mathcal{Q}$.

In the practical scenarios, the observed probability distribution $P(a,b|x,y)$ can be obtained by counting the frequency of different outcomes under different inputs, i.e., $P(a,b|x,y)=\frac{N_{a,b,x,y}}{N_{x,y}}, \forall a,b,x,y$, where $N_{a,b,x,y}$ is the counting of events with specific inputs and outcomes $(a,b,x,y)$, and $N_{x,y}=\sum_{a,b}N_{a,b,x,y}$ is the total number of rounds corresponding to the measurement choice $(x,y)$. However, due to the statistical fluctuations, $P(a,b|x,y)$ is likely not to satisfy the no-signaling constraints, so it cannot be directly used to compute the key rate.

Therefore, we adopt the least-square-error estimation technique~\cite{LSesti} to determine a probability distribution $\mathcal{P}(a,b|x,y)$ satisfying the no-signaling conditions and Tsirelson’s bounds~\cite{barrett2005nonlocal}.
\begin{align}
   &\min ||\vec{P}-\vec{\mathcal{P}}||_{2} \\ \nonumber
   &\text{s.t} \quad  \mathcal{P}(a,b|x,y) \in \mathcal{Q},
\end{align}
where vector $\vec{P}$ denotes the observed probability distributions $\{P(a,b|x,y)\}$ and $||.||_{p}$ denotes the $p$-norm.

\subsection{Dual solution and Confidence region}~\label{sec:region}
The dual solution of proposition~\ref{proposition} provides us with an affine function
\begin{equation}
\label{dual0}
    g(\vec{P})=\alpha+\sum_{a,b,x,y}\lambda_{a,b,x,y}P(a,b|x,y),
\end{equation}
which is always an upper bound on the primal program~\ref{proposition}. As developed in Ref.~\cite{nieto2014using}, $g(\vec{P})$ can be seen as the optimal Bell expressions, i.e., the randomness contained in $H(\hat{A}_{\bar{x}}|E,\mathcal{V}_p\rightarrow\mathcal{N}_{p_N})$ certified from the Bell expressions $g(\vec{P})$ is equal to that can be certified from the full set of probabilities $\{P(a,b|x,y)\}$.

In our protocol, the observed probability distributions $P(a,b|x,y)$ are composed of 24 different probabilities according to different inputs $(x,y)$ and outcomes $(a,b)$. According to the Tsirelson’s bounds~\cite{barrett2005nonlocal}, the quantum correlations can be expressed with 24 Bell expressions denoted as $e_{t} t=1,2...24$, where each $t$ corresponds to one possible event of $(a,b,x,y)$. Combined with the constraints of normalization and no-signalling conditions, $P(a,b|x,y)$ can be uniquely determined by only 11 of them, $h_{k} k=1,2...11$, where $h_1$ and $h_2$ are the Alice’s marginal probabilities $P_{A}(1|x), x=0,1$, and $h_3$, $h_4$ and $h_5$ are the Bob’s marginal probabilities $P_{B}(1|y), y=1,2,3$, and $h_{6},...,h_{11}$ are bipartite correlations $P(1,1|x,y), x=0,1 and y=1,2,3$. In this sense, the dual solution~\ref{dual0} can be rewritten as
\begin{equation}
\label{dual1}
    g(\vec{P})=\beta + \sum_{t}\gamma_{t}h_{t},
\end{equation}
where $\beta$ is a constant and $\gamma_{t}$ is the corresponding coefficient for $h_{t}$. Considering the dual solution of the simulation with $87.5\%$ efficiency as an example, Eq.~\ref{dual1} is in the following form:
\begin{align}
    g(\hat{P})=&-1.2951+0.8564P(1,1|1,1)+0.8437P(1,1|1,2)-0.05783P(1,1|1,3)+0.8565P(1,1|2,1) \\
    &-1.1736P(1,1|2,2)+0.003611P(1,1|2,3)+0.09415P_{A}(1|1)+0.3155P_{A}(1|2)  \\
    &-0.8432P_B(1|1)+0.3312P_{B}(1|2)+0.9027P_{B}(1|3).
\end{align}

Using the straightforward estimator~\cite{Nieto_Silleras_2018} for the Bell expectations $g(\vec{P})$, we can construct corresponding confidence regions for $N-$round experiment. Let $\mathcal{X}(a,b,x,y)$ be the indicator function for the event $\{a,b,x,y\}$, i.e., $\mathcal{X}$ = 1 if the event $\{a,b,x,y\}$ is observed and
$\mathcal{X}=0$ otherwise. Given the following random variable for the $i_{\text{th}}$-round
\begin{equation}
    g'_{i}=g_{i}-\alpha=\sum_{a,b,x,y}\lambda_{a,b,x,y}\frac{\mathcal{X}(a=a_i,b=b_i,x=x_i,y=y_i)}{P(x_i,y_i)},
\end{equation}

and the assumptions that the devices are memoryless and behave identically and independently in the implementation of the protocol, we thus could have the expected value $\text{E}{[g'_{i}]}=g(\vec{P})-\alpha$ and $|g_{i}|\le \lambda/q$, where $\lambda=\max|\lambda_{a,b,x,y}|$ and $q=\max{P(x,y)}$. The resulting estimators $\hat{g}'$ for the Bell expectations $g(\vec{P})-\alpha$ by the observed frequencies $\hat{g}'_{i}$ are given by
\begin{equation}
    \hat{g}'=\frac{1}{n}\sum_{i=1}^{n}\hat{g}'_{i}= \frac{1}{n}\sum_{i=1}^{n}\frac{\lambda_{a_i,b_i,x_i,y_i}}{P(x_i,y_i)}.
\end{equation}

Using a consequence of the Hoeffding’s inequality~\cite{pironio2010random, Nieto_Silleras_2018}, we have
\begin{equation}
    \text{Pr}[|g(\vec{P})-(\hat{g}'+\alpha)|\ge \delta]\le 2\epsilon.
\end{equation}
where $\epsilon=\exp\left(\frac{-2n\delta^2}{(2\lambda/q)^2}\right)$.

\section{System characterization}

\subsection{Determination of single photon efficiency}
In the experiment, we optimize the system efficiency by choosing appropriate pump beam waist and parametric beam waist size~\cite{NIST_SPDC}. The pump light spreads about $70~cm$ and is focused to the center of the PPKTP by an aspherical lens with $f = 8~mm$. The radius of pump waist is estimated to be $200~\mu$m. In order to efficiently collect parametric light, the corresponding optimal radius of parametric waist is estimated to be $97.5~\mu$m, and two spherical lenses with $f = 150~mm$ and $f = -150~mm$ and an aspherical lens with $f = 9.6~mm$ are placed in sequence for collection. The distance between the two spherical lenses is about $12.5~cm$, of which the lens close to the pump light is about $33.5~cm$ from the center of the PPKTP, and the other is about $70~cm$ from the aspheric lens.

We define the single photon heralding efficiency as $\eta_A=C/N_B$ and $\eta_B=C/N_A$ for Alice and Bob, in which two-photon coincidence events $C$ and single photon detection events for Alice $N_A$ and Bob $N_B$ are measured in the experiment. The heralding efficiency is listed in Tab.~\ref{tab:OptEffAB}, where $\eta^{\mathrm{sc}}$ is the efficiency of coupling entangled photons into single mode optical fiber, $\eta^{\mathrm{so}}$ is the optical efficiency due to limited transmittance of optical elements in the source, and $\eta^{\mathrm{det}}$ is the single photon detector efficiency. $\eta^{\mathrm{so}}$, $\eta^{\mathrm{det}}$ can be measured with classical light beams and NIST-traceable power meters. The optical transmittance for all involved optical elements are listed in Tab.~\ref{tab:OptEff}, with which we obtain the optical efficiency $\eta^{so}$:

\begin{equation} \label{Eq:heraldingEff}
    \begin{cases}
\eta^{\mathrm{so}}(Alice) = \eta^{\mathrm{AS}} \times {(\eta^{\mathrm{S}})}^2 \times {(\eta^{\mathrm{DM}})}^7 \times \eta^{780/1560\mathrm{HWP}} \times \eta^{780/1560\mathrm{PBS}} \times \eta^{\mathrm{PPKTP}} \times \eta^{1560\mathrm{HWP}} \times \eta^{1560\mathrm{QWP}} \times \eta^{1560\mathrm{PBS}}\\
\eta^{\mathrm{so}}(Bob) = \eta^{\mathrm{AS}} \times {(\eta^{\mathrm{S}})}^2 \times {(\eta^{\mathrm{DM}})}^8 \times \eta^{780/1560\mathrm{HWP}} \times \eta^{780/1560\mathrm{PBS}} \times \eta^{\mathrm{PPKTP}} \times \eta^{1560\mathrm{HWP}} \times \eta^{1560\mathrm{QWP}} \times \eta^{1560\mathrm{PBS}}\\
    \end{cases}
\end{equation}

\begin{table}[htb]
\caption{\bf Characterization of optical efficiencies in the experiments. }
\begin{tabular}{c|c|ccc}
\hline
\multicolumn{1}{l|}{} & heralding efficiency ($\eta$) & $\eta^{\mathrm{sc}}$ & $\eta^{\mathrm{so}}$ & $\eta^{\mathrm{det}}$ \\ \hline
Alice                 & 87.16\%                       & 95.57\%              & 93.53\%              & 97.51\%               \\
Bob                   & 87.82\%                       & 95.74\%              & 93.03\%              & 98.60\%               \\ \hline
\end{tabular}
\label{tab:OptEffAB}
\end{table}

\begin{table}[htb]
\centering
  \caption{\bf The efficiencies of optical elements.}
\begin{tabular}{c|c|c}
\hline
Optical element & Symbol & Efficiency  \\
\hline
Aspherical lens & $\eta^{\mathrm{AS}}$ & $99.27\%\pm0.03\%$ \\
Spherical lens & $\eta^{\mathrm{S}}$ & $99.6\%\pm1.0\%$ \\
Half wave plate (780nm/1560nm) & $\eta^{780/1560\mathrm{HWP}}$ & $99.93\%\pm0.02\%$ \\
Half wave plate (1560nm) & $\eta^{1560\mathrm{HWP}}$ & $99.92\%\pm0.04\%$ \\
Quarter wave plate (1560nm) & $\eta^{1560\mathrm{QWP}}$ & $99.99\%\pm0.08\%$ \\
Polarizing beam splitter (780nm/1560nm) & $\eta^{780/1560\mathrm{PBS}}$ & $99.6\%\pm0.1\%$ \\
Polarizing beam splitter (1560nm) & $\eta^{1560\mathrm{PBS}}$ & $99.6\%\pm0.2\%$ \\
Dichoric mirror & $\eta^{\mathrm{DM}}$ & $99.46\%\pm0.03\%$ \\
PPKTP & $\eta^{\mathrm{PPKTP}}$ & $99.6\%\pm0.2\%$ \\
\hline
\end{tabular}
\label{tab:OptEff}
\end{table}

\subsection{Quantum state and measurement bases}~\label{sec:tomo}
To optimize the key rate in our experiment, we aim to create a non-maximally entangled two-photon state$\cos(\alpha)\ket{HV}+\sin(\alpha)\ket{VH}$, where $\alpha = 20.0^\circ$ and set measurement bases to be $x_1=-88.22^\circ$ and $x_2=54.29^\circ$ for Alice, and $y_1=9.75^\circ$, $y_2=21.45^\circ$ and $y_3=-1.07^\circ$ for Bob, respectively. We also optimize the mean photon number to be 0.04 to maximize the key rate.


We perform quantum state tomography measurement of the non-maximally entangled state, with result shown in Fig.~\ref{Fig:Tomo}. The state fidelity is $99.52\pm0.15\%$. We attribute the imperfection to multi-photon components, imperfect optical elements, and imperfect spatial/spectral mode matching.


\begin{figure}[htbp]
\centering
\resizebox{16cm}{!}{\includegraphics{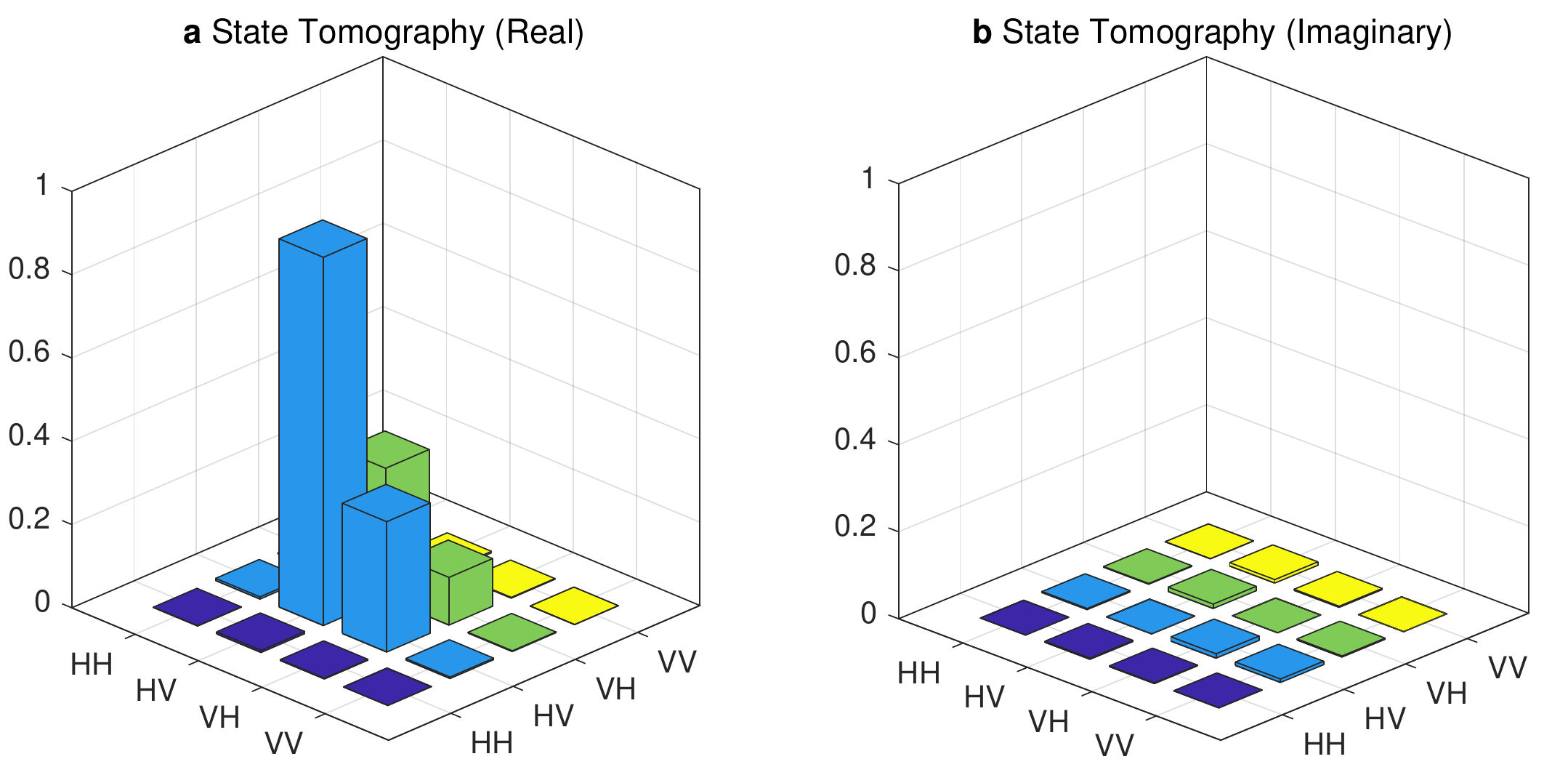}}
\caption{{\bf (color online) Tomography of the produced two-photon state in the experiments,} with real and imaginary components shown in {\bf a} and {\bf b}.}
\label{Fig:Tomo}
\end{figure}

\section{Experimental results}

\subsection{CHSH test}
Before the start of the experiment, systematic experimental calibrations are implemented to optimize a CHSH~\cite{CHSH}Bell test based on the performance of system. We optimize to create a non-maximally entangled two-photon state$\cos(\alpha)\ket{HV}+\sin(\alpha)\ket{VH}$, where $\alpha = 32.2^\circ$ and set measurement bases to be $x_1=-81.09^\circ$ and $x_2=61.46^\circ$ for Alice, and $y_1=8.18^\circ$ and $y_2=-29.37^\circ$ for Bob, respectively. We also optimize the mean photon number to be 0.62 to maximize the CHSH score. The CHSH score $\omega_{\mathrm{CHSH}}$ is given by
\begin{equation}
\omega_{\mathrm{CHSH}} = \frac{1}{2}\sum_{k,l}{\sum_{i=1}^{n_{x_iy_i=kl}}{(1+(-1)^{a_i\oplus b_i\oplus(x_i\cdot y_i)})/n_{x_iy_i=kl}}},
\end{equation}
with $(k,l)\in(0,1)\times(0,1)$ is computed to be $0.7559$. The recorded data are shown in Fig.~\ref{Fig:vio}.

\begin{figure}[htbp]
\centering
\resizebox{10cm}{!}{\includegraphics{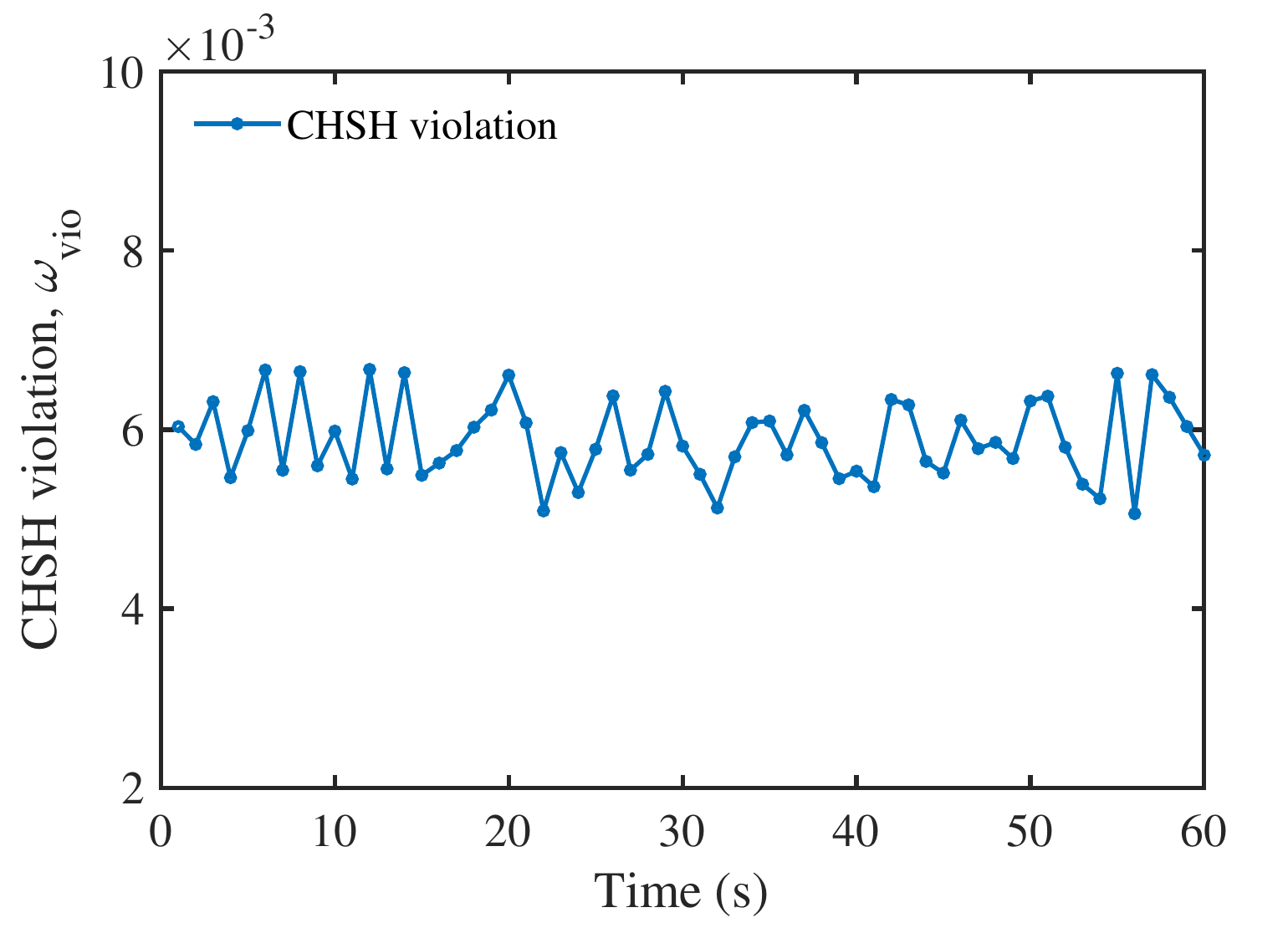}}
\caption{{\bf CHSH violation versus time.} We conduct $1.2\times10^7$ rounds of CHSH tests at a repetition rate of $200$~KHz. We choose the average value of CHSH violations $\omega_{\mathrm{vio}}$ for each $1$ second data as a point to observe its performance, which are defined as $\omega_{\mathrm{vio}}= \omega_{\mathrm{CHSH}} - 0.75$.}
\label{Fig:vio}
\end{figure}

\subsection{Experimental data analysis}
The implementation of the protocol depends on numerical studies shown in section~\ref{sec:sim}. We conduct $2.4\times10^8$ rounds of experiment for each of the six combinations of measurement settings ($x,y$) at a repetition rate of $2\times10^6$~Hz. The intensity of entangled photons, i.e. mean photon number and counts are shown in Tab.~\ref{tab:output}. According to the method described in section~\ref{sec:norm}, we first optimize the probability distribution $P(a,b|x,y)$ under the no-signaling condition~\cite{popescu1994quantum} and Tsirelson’s bounds~\cite{barrett2005nonlocal}. The results are shown in Tab.~\ref{tab:norm}. Based on the method described in section~\ref{sec:region}, we set the error probability $\epsilon=10^{-2}$. For the experiment of 20~m fiber length, $\lambda\approx3.15$ is calculated according to $P(a,b|x,y)$. Thus we could have $\delta\approx0.0015$ representing the confidence region of $H(\hat{A}_{\bar{x}}|E,\mathcal{V}_p\rightarrow\mathcal{N}_{p_N})$. While for the experiment of 80~m fiber length, $\lambda\approx2.36$, and $\delta\approx0.0011$. In the experiment of 220~m fiber length, $\lambda\approx6.66\times10^{-2}$, and $\delta\approx3.20\times10^{-5}$. It should be noted that $\delta$ is proportional to $1/\sqrt{n}$, indicating less uncertainty with more experimental rounds.

\begin{table}[htbp]
\centering
\caption{{\bf Mean photon number and counts of experimental rounds.} Recorded number of two-photon detection events for six sets of polarization state measurement bases $x=1$ or $2$ indicates two different settings and $y=1,2$ or $3$ indicates three different settings, which are obtained with the corresponding mean photon number. $a(b)=0$ or $1$ indicates that Alice(Bob) detects a photon or not.}
\begin{tabular}{c|c|c|cccc}
\hline
Fiber length/m       &\begin{tabular}[c]{@{}c@{}}Mean photon number\\ /per pulse\end{tabular} & Basis settings & $ab=11$   & $ab=10$ & $ab=01$ & $ab=00$ \\ \hline
\multirow{6}{*}{20}  & \multirow{6}{*}{0.040}                                                    & $xy = 11$     & 238565091 & 403056  & 210284  & 821569  \\
                     &                                                                          & $xy = 12$     & 237797832 & 1164958 & 225388  & 811822  \\
                     &                                                                          & $xy = 13$     & 238783108 & 162648  & 160633  & 893611  \\
                     &                                                                          & $xy = 21$     & 236368886 & 222941  & 2390061 & 1018112 \\
                     &                                                                          & $xy = 22$     & 234578742 & 1973697 & 3385904 & 61657   \\
                     &                                                                          & $xy = 23$     & 236058614 & 516659  & 2882175 & 542552  \\ \hline
\multirow{6}{*}{80}  & \multirow{6}{*}{0.035}                                                    & $xy = 11$     & 238758474 & 350531  & 182413  & 708582  \\
                     &                                                                          & $xy = 12$     & 238128577 & 998486  & 190177  & 682760  \\
                     &                                                                          & $xy = 13$     & 238971836 & 142160  & 136010  & 749994  \\
                     &                                                                          & $xy = 21$     & 236913121 & 197393  & 2020739 & 868747  \\
                     &                                                                          & $xy = 22$     & 235413651 & 1674701 & 2863370 & 48278   \\
                     &                                                                          & $xy = 23$     & 236655191 & 441918  & 2442392 & 460499  \\ \hline
\multirow{6}{*}{220} & \multirow{6}{*}{0.040}                                                    & $xy = 11$     & 238531173 & 422523  & 221542  & 824762  \\
                     &                                                                          & $xy = 12$     & 237751528 & 1191525 & 236435  & 820512  \\
                     &                                                                          & $xy = 13$     & 238808826 & 168059  & 164710  & 858405  \\
                     &                                                                          & $xy = 21$     & 236390264 & 231282  & 2373297 & 1005157 \\
                     &                                                                          & $xy = 22$     & 234718386 & 1924194 & 3299222 & 58198   \\
                     &                                                                          & $xy = 23$     & 236077278 & 517035  & 2869523 & 536164  \\ \hline
\end{tabular}
\label{tab:output}.
\end{table}

\begin{table}[htbp]
\caption{The input-output probability distribution $P(a,b|x,y)$. A 2-norm estimation is applied to the data to derive a probability distribution adapted to the model used. The probabilities below are the results before noisy preprocessing and random post-selection.}
\begin{tabular}{c|c|cccc}
\hline
Fiber length/m       & \diagbox{$(x,y)$}{$(a,b)$}                    & $11$            & $10$            & $01$            & $00$            \\ \hline
\multirow{6}{*}{20}  & $11$                                          & 0.9939848305625 & 0.0016772430375 & 0.0008783402625 & 0.0034595861375 \\
                     & $12$                                          & 0.9907551763625 & 0.0049068972375 & 0.0008862111125 & 0.0034517152875 \\
                     & $13$                                          & 0.9949539201500 & 0.0007081534400 & 0.0006388507000 & 0.0036990757000 \\
                     & $21$                                          & 0.9848486222042 & 0.0008729597292 & 0.0100145486208 & 0.0042638694458 \\
                     & $22$                                          & 0.9775156493042 & 0.0082059326292 & 0.0141257381708 & 0.0001526798958 \\
                     & $23$                                          & 0.9835762722167 & 0.0021453097167 & 0.0120164986333 & 0.0022619194333 \\ \hline
\multirow{6}{*}{80}  & $11$                                          & 0.9948356614583 & 0.0014838718733 & 0.0007367281267 & 0.0029437385417 \\
                     & $12$                                          & 0.9921373177096 & 0.0041822156221 & 0.0007705468779 & 0.0029099197904 \\
                     & $13$                                          & 0.9957109010408 & 0.0006086322908 & 0.0005504093742 & 0.0031300572942 \\
                     & $21$                                          & 0.9871206246575 & 0.0007904517325 & 0.0084517649275 & 0.0036371586825 \\
                     & $22$                                          & 0.9809551559062 & 0.0069559204838 & 0.0119527086813 & 0.0001362149287 \\
                     & $23$                                          & 0.9860772142350 & 0.0018338621550 & 0.0101840961800 & 0.0019048274300 \\ \hline
\multirow{6}{*}{220} & $11$                                          & 0.9938998979150 & 0.0017579270850 & 0.0009256770850 & 0.0034164979150 \\
                     & $12$                                          & 0.9906931322925 & 0.0049646927075 & 0.0009851406225 & 0.0033570343775 \\
                     & $13$                                          & 0.9949693281250 & 0.0006884968750 & 0.0006980406250 & 0.0036441343750 \\
                     & $21$                                          & 0.9849438305550 & 0.0009706680550 & 0.0098817444450 & 0.0042037569450 \\
                     & $22$                                          & 0.9779142690975 & 0.0080002295125 & 0.0137640038175 & 0.0003214975725 \\
                     & $23$                                          & 0.9837356045150 & 0.0021788940950 & 0.0119317642350 & 0.0021537371550 \\ \hline
\end{tabular}
\label{tab:norm}.
\end{table}


\begin{thebibliography}{70}%
\makeatletter
\providecommand \@ifxundefined [1]{%
 \@ifx{#1\undefined}
}%
\providecommand \@ifnum [1]{%
 \ifnum #1\expandafter \@firstoftwo
 \else \expandafter \@secondoftwo
 \fi
}%
\providecommand \@ifx [1]{%
 \ifx #1\expandafter \@firstoftwo
 \else \expandafter \@secondoftwo
 \fi
}%
\providecommand \natexlab [1]{#1}%
\providecommand \enquote  [1]{``#1''}%
\providecommand \bibnamefont  [1]{#1}%
\providecommand \bibfnamefont [1]{#1}%
\providecommand \citenamefont [1]{#1}%
\providecommand \href@noop [0]{\@secondoftwo}%
\providecommand \href [0]{\begingroup \@sanitize@url \@href}%
\providecommand \@href[1]{\@@startlink{#1}\@@href}%
\providecommand \@@href[1]{\endgroup#1\@@endlink}%
\providecommand \@sanitize@url [0]{\catcode `\\12\catcode `\$12\catcode
  `\&12\catcode `\#12\catcode `\^12\catcode `\_12\catcode `\%12\relax}%
\providecommand \@@startlink[1]{}%
\providecommand \@@endlink[0]{}%
\providecommand \url  [0]{\begingroup\@sanitize@url \@url }%
\providecommand \@url [1]{\endgroup\@href {#1}{\urlprefix }}%
\providecommand \urlprefix  [0]{URL }%
\providecommand \Eprint [0]{\href }%
\providecommand \doibase [0]{https://doi.org/}%
\providecommand \selectlanguage [0]{\@gobble}%
\providecommand \bibinfo  [0]{\@secondoftwo}%
\providecommand \bibfield  [0]{\@secondoftwo}%
\providecommand \translation [1]{[#1]}%
\providecommand \BibitemOpen [0]{}%
\providecommand \bibitemStop [0]{}%
\providecommand \bibitemNoStop [0]{.\EOS\space}%
\providecommand \EOS [0]{\spacefactor3000\relax}%
\providecommand \BibitemShut  [1]{\csname bibitem#1\endcsname}%
\let\auto@bib@innerbib\@empty
\bibitem [{\citenamefont {Bennett}\ and\ \citenamefont
  {Brassard}(1984)}]{bennett1984quantum}%
  \BibitemOpen
  \bibfield  {author} {\bibinfo {author} {\bibfnamefont {C.~H.}\ \bibnamefont
  {Bennett}}\ and\ \bibinfo {author} {\bibfnamefont {G.}~\bibnamefont
  {Brassard}},\ }in\ \href@noop {} {\emph {\bibinfo {booktitle} {Proceedings of
  IEEE International Conference on Computers, Systems and Signal Processing}}}\
  (\bibinfo {organization} {Bangalore, India},\ \bibinfo {year} {1984})\ pp.\
  \bibinfo {pages} {175--179}\BibitemShut {NoStop}%
\bibitem [{\citenamefont {Ekert}(1991)}]{ekert1991quantum}%
  \BibitemOpen
  \bibfield  {author} {\bibinfo {author} {\bibfnamefont {A.~K.}\ \bibnamefont
  {Ekert}},\ }\href {https://doi.org/10.1103/PhysRevLett.67.661} {\bibfield
  {journal} {\bibinfo  {journal} {Phys. Rev. Lett.}\ }\textbf {\bibinfo
  {volume} {67}},\ \bibinfo {pages} {661} (\bibinfo {year} {1991})}\BibitemShut
  {NoStop}%
\bibitem [{\citenamefont {Xu}\ \emph {et~al.}(2020)\citenamefont {Xu},
  \citenamefont {Ma}, \citenamefont {Zhang}, \citenamefont {Lo},\ and\
  \citenamefont {Pan}}]{xu2020secure}%
  \BibitemOpen
  \bibfield  {author} {\bibinfo {author} {\bibfnamefont {F.}~\bibnamefont
  {Xu}}, \bibinfo {author} {\bibfnamefont {X.}~\bibnamefont {Ma}}, \bibinfo
  {author} {\bibfnamefont {Q.}~\bibnamefont {Zhang}}, \bibinfo {author}
  {\bibfnamefont {H.-K.}\ \bibnamefont {Lo}},\ and\ \bibinfo {author}
  {\bibfnamefont {J.-W.}\ \bibnamefont {Pan}},\ }\href
  {https://doi.org/10.1103/RevModPhys.92.025002} {\bibfield  {journal}
  {\bibinfo  {journal} {Rev. Mod. Phys.}\ }\textbf {\bibinfo {volume} {92}},\
  \bibinfo {pages} {025002} (\bibinfo {year} {2020})}\BibitemShut {NoStop}%
\bibitem [{\citenamefont {Lo}\ and\ \citenamefont
  {Chau}(1999)}]{lo1999unconditional}%
  \BibitemOpen
  \bibfield  {author} {\bibinfo {author} {\bibfnamefont {H.-K.}\ \bibnamefont
  {Lo}}\ and\ \bibinfo {author} {\bibfnamefont {H.~F.}\ \bibnamefont {Chau}},\
  }\href {https://doi.org/10.1126/science.283.5410.2050} {\bibfield  {journal}
  {\bibinfo  {journal} {Science}\ }\textbf {\bibinfo {volume} {283}},\ \bibinfo
  {pages} {2050} (\bibinfo {year} {1999})}\BibitemShut {NoStop}%
\bibitem [{\citenamefont {Shor}\ and\ \citenamefont
  {Preskill}(2000)}]{shor2000simple}%
  \BibitemOpen
  \bibfield  {author} {\bibinfo {author} {\bibfnamefont {P.~W.}\ \bibnamefont
  {Shor}}\ and\ \bibinfo {author} {\bibfnamefont {J.}~\bibnamefont
  {Preskill}},\ }\href {https://doi.org/10.1103/PhysRevLett.85.441} {\bibfield
  {journal} {\bibinfo  {journal} {Phys. Rev. Lett.}\ }\textbf {\bibinfo
  {volume} {85}},\ \bibinfo {pages} {441} (\bibinfo {year} {2000})}\BibitemShut
  {NoStop}%
\bibitem [{\citenamefont {Renner}(2008)}]{renner2008security}%
  \BibitemOpen
  \bibfield  {author} {\bibinfo {author} {\bibfnamefont {R.}~\bibnamefont
  {Renner}},\ }\href@noop {} {\bibfield  {journal} {\bibinfo  {journal}
  {International Journal of Quantum Information}\ }\textbf {\bibinfo {volume}
  {6}},\ \bibinfo {pages} {1} (\bibinfo {year} {2008})}\BibitemShut {NoStop}%
\bibitem [{\citenamefont {Xu}\ \emph {et~al.}(2010)\citenamefont {Xu},
  \citenamefont {Qi},\ and\ \citenamefont {Lo}}]{xu2010}%
  \BibitemOpen
  \bibfield  {author} {\bibinfo {author} {\bibfnamefont {F.}~\bibnamefont
  {Xu}}, \bibinfo {author} {\bibfnamefont {B.}~\bibnamefont {Qi}},\ and\
  \bibinfo {author} {\bibfnamefont {H.-K.}\ \bibnamefont {Lo}},\ }\href
  {https://doi.org/10.1088/1367-2630/12/11/113026} {\bibfield  {journal}
  {\bibinfo  {journal} {New Journal of Physics}\ }\textbf {\bibinfo {volume}
  {12}},\ \bibinfo {pages} {113026} (\bibinfo {year} {2010})}\BibitemShut
  {NoStop}%
\bibitem [{\citenamefont {Lydersen}\ \emph {et~al.}(2010)\citenamefont
  {Lydersen}, \citenamefont {Wiechers}, \citenamefont {Wittmann}, \citenamefont
  {Elser}, \citenamefont {Skaar},\ and\ \citenamefont {Makarov}}]{Lars2010}%
  \BibitemOpen
  \bibfield  {author} {\bibinfo {author} {\bibfnamefont {L.}~\bibnamefont
  {Lydersen}}, \bibinfo {author} {\bibfnamefont {C.}~\bibnamefont {Wiechers}},
  \bibinfo {author} {\bibfnamefont {C.}~\bibnamefont {Wittmann}}, \bibinfo
  {author} {\bibfnamefont {D.}~\bibnamefont {Elser}}, \bibinfo {author}
  {\bibfnamefont {J.}~\bibnamefont {Skaar}},\ and\ \bibinfo {author}
  {\bibfnamefont {V.}~\bibnamefont {Makarov}},\ }\href
  {https://doi.org/10.1038/nphoton.2010.214} {\bibfield  {journal} {\bibinfo
  {journal} {Nature Photonics}\ }\textbf {\bibinfo {volume} {4}},\ \bibinfo
  {pages} {686} (\bibinfo {year} {2010})}\BibitemShut {NoStop}%
\bibitem [{\citenamefont {Lo}\ \emph {et~al.}(2012)\citenamefont {Lo},
  \citenamefont {Curty},\ and\ \citenamefont {Qi}}]{lo2012measurement}%
  \BibitemOpen
  \bibfield  {author} {\bibinfo {author} {\bibfnamefont {H.-K.}\ \bibnamefont
  {Lo}}, \bibinfo {author} {\bibfnamefont {M.}~\bibnamefont {Curty}},\ and\
  \bibinfo {author} {\bibfnamefont {B.}~\bibnamefont {Qi}},\ }\href
  {https://doi.org/10.1103/PhysRevLett.108.130503} {\bibfield  {journal}
  {\bibinfo  {journal} {Phys. Rev. Lett.}\ }\textbf {\bibinfo {volume} {108}},\
  \bibinfo {pages} {130503} (\bibinfo {year} {2012})}\BibitemShut {NoStop}%
\bibitem [{\citenamefont {Braunstein}\ and\ \citenamefont
  {Pirandola}(2012)}]{braunstein2012side}%
  \BibitemOpen
  \bibfield  {author} {\bibinfo {author} {\bibfnamefont {S.~L.}\ \bibnamefont
  {Braunstein}}\ and\ \bibinfo {author} {\bibfnamefont {S.}~\bibnamefont
  {Pirandola}},\ }\href {https://doi.org/10.1103/PhysRevLett.108.130502}
  {\bibfield  {journal} {\bibinfo  {journal} {Phys. Rev. Lett.}\ }\textbf
  {\bibinfo {volume} {108}},\ \bibinfo {pages} {130502} (\bibinfo {year}
  {2012})}\BibitemShut {NoStop}%
\bibitem [{\citenamefont {Lucamarini}\ \emph {et~al.}(2018)\citenamefont
  {Lucamarini}, \citenamefont {Yuan}, \citenamefont {Dynes},\ and\
  \citenamefont {Shields}}]{lucamarini2018overcoming}%
  \BibitemOpen
  \bibfield  {author} {\bibinfo {author} {\bibfnamefont {M.}~\bibnamefont
  {Lucamarini}}, \bibinfo {author} {\bibfnamefont {Z.~L.}\ \bibnamefont
  {Yuan}}, \bibinfo {author} {\bibfnamefont {J.~F.}\ \bibnamefont {Dynes}},\
  and\ \bibinfo {author} {\bibfnamefont {A.~J.}\ \bibnamefont {Shields}},\
  }\href {https://doi.org/10.1038/s41586-018-0066-6} {\bibfield  {journal}
  {\bibinfo  {journal} {Nature}\ }\textbf {\bibinfo {volume} {557}},\ \bibinfo
  {pages} {400} (\bibinfo {year} {2018})}\BibitemShut {NoStop}%
\bibitem [{\citenamefont {Mayers}\ and\ \citenamefont
  {Yao}(1998)}]{mayers1998quantum}%
  \BibitemOpen
  \bibfield  {author} {\bibinfo {author} {\bibfnamefont {D.}~\bibnamefont
  {Mayers}}\ and\ \bibinfo {author} {\bibfnamefont {A.}~\bibnamefont {Yao}},\
  }in\ \href {https://doi.org/10.1109/SFCS.1998.743501} {\emph {\bibinfo
  {booktitle} {Proceedings 39th Annual Symposium on Foundations of Computer
  Science (Cat. No.98CB36280)}}}\ (\bibinfo {year} {1998})\ pp.\ \bibinfo
  {pages} {503--509}\BibitemShut {NoStop}%
\bibitem [{\citenamefont {Barrett}\ \emph
  {et~al.}(2005{\natexlab{a}})\citenamefont {Barrett}, \citenamefont {Hardy},\
  and\ \citenamefont {Kent}}]{barrett2005no}%
  \BibitemOpen
  \bibfield  {author} {\bibinfo {author} {\bibfnamefont {J.}~\bibnamefont
  {Barrett}}, \bibinfo {author} {\bibfnamefont {L.}~\bibnamefont {Hardy}},\
  and\ \bibinfo {author} {\bibfnamefont {A.}~\bibnamefont {Kent}},\ }\href
  {https://doi.org/10.1103/PhysRevLett.95.010503} {\bibfield  {journal}
  {\bibinfo  {journal} {Phys. Rev. Lett.}\ }\textbf {\bibinfo {volume} {95}},\
  \bibinfo {pages} {010503} (\bibinfo {year} {2005}{\natexlab{a}})}\BibitemShut
  {NoStop}%
\bibitem [{\citenamefont {Ac\'{\i}n}\ \emph {et~al.}(2007)\citenamefont
  {Ac\'{\i}n}, \citenamefont {Brunner}, \citenamefont {Gisin}, \citenamefont
  {Massar}, \citenamefont {Pironio},\ and\ \citenamefont
  {Scarani}}]{acin2007device}%
  \BibitemOpen
  \bibfield  {author} {\bibinfo {author} {\bibfnamefont {A.}~\bibnamefont
  {Ac\'{\i}n}}, \bibinfo {author} {\bibfnamefont {N.}~\bibnamefont {Brunner}},
  \bibinfo {author} {\bibfnamefont {N.}~\bibnamefont {Gisin}}, \bibinfo
  {author} {\bibfnamefont {S.}~\bibnamefont {Massar}}, \bibinfo {author}
  {\bibfnamefont {S.}~\bibnamefont {Pironio}},\ and\ \bibinfo {author}
  {\bibfnamefont {V.}~\bibnamefont {Scarani}},\ }\href
  {https://doi.org/10.1103/PhysRevLett.98.230501} {\bibfield  {journal}
  {\bibinfo  {journal} {Phys. Rev. Lett.}\ }\textbf {\bibinfo {volume} {98}},\
  \bibinfo {pages} {230501} (\bibinfo {year} {2007})}\BibitemShut {NoStop}%
\bibitem [{\citenamefont {Pironio}\ \emph {et~al.}(2009)\citenamefont
  {Pironio}, \citenamefont {Ac{\'{\i}}n}, \citenamefont {Brunner},
  \citenamefont {Gisin}, \citenamefont {Massar},\ and\ \citenamefont
  {Scarani}}]{pironio2009device}%
  \BibitemOpen
  \bibfield  {author} {\bibinfo {author} {\bibfnamefont {S.}~\bibnamefont
  {Pironio}}, \bibinfo {author} {\bibfnamefont {A.}~\bibnamefont
  {Ac{\'{\i}}n}}, \bibinfo {author} {\bibfnamefont {N.}~\bibnamefont
  {Brunner}}, \bibinfo {author} {\bibfnamefont {N.}~\bibnamefont {Gisin}},
  \bibinfo {author} {\bibfnamefont {S.}~\bibnamefont {Massar}},\ and\ \bibinfo
  {author} {\bibfnamefont {V.}~\bibnamefont {Scarani}},\ }\href
  {https://doi.org/10.1088/1367-2630/11/4/045021} {\bibfield  {journal}
  {\bibinfo  {journal} {New Journal of Physics}\ }\textbf {\bibinfo {volume}
  {11}},\ \bibinfo {pages} {045021} (\bibinfo {year} {2009})}\BibitemShut
  {NoStop}%
\bibitem [{\citenamefont {Hensen}\ \emph {et~al.}(2015)\citenamefont {Hensen},
  \citenamefont {Bernien}, \citenamefont {Dr{\'e}au}, \citenamefont {Reiserer},
  \citenamefont {Kalb}, \citenamefont {Blok}, \citenamefont {Ruitenberg},
  \citenamefont {Vermeulen}, \citenamefont {Schouten}, \citenamefont
  {Abell{\'a}n}, \citenamefont {Amaya}, \citenamefont {Pruneri}, \citenamefont
  {Mitchell}, \citenamefont {Markham}, \citenamefont {Twitchen}, \citenamefont
  {Elkouss}, \citenamefont {Wehner}, \citenamefont {Taminiau},\ and\
  \citenamefont {Hanson}}]{hensen2015loophole}%
  \BibitemOpen
  \bibfield  {author} {\bibinfo {author} {\bibfnamefont {B.}~\bibnamefont
  {Hensen}}, \bibinfo {author} {\bibfnamefont {H.}~\bibnamefont {Bernien}},
  \bibinfo {author} {\bibfnamefont {A.~E.}\ \bibnamefont {Dr{\'e}au}}, \bibinfo
  {author} {\bibfnamefont {A.}~\bibnamefont {Reiserer}}, \bibinfo {author}
  {\bibfnamefont {N.}~\bibnamefont {Kalb}}, \bibinfo {author} {\bibfnamefont
  {M.~S.}\ \bibnamefont {Blok}}, \bibinfo {author} {\bibfnamefont
  {J.}~\bibnamefont {Ruitenberg}}, \bibinfo {author} {\bibfnamefont {R.~F.~L.}\
  \bibnamefont {Vermeulen}}, \bibinfo {author} {\bibfnamefont {R.~N.}\
  \bibnamefont {Schouten}}, \bibinfo {author} {\bibfnamefont {C.}~\bibnamefont
  {Abell{\'a}n}}, \bibinfo {author} {\bibfnamefont {W.}~\bibnamefont {Amaya}},
  \bibinfo {author} {\bibfnamefont {V.}~\bibnamefont {Pruneri}}, \bibinfo
  {author} {\bibfnamefont {M.~W.}\ \bibnamefont {Mitchell}}, \bibinfo {author}
  {\bibfnamefont {M.}~\bibnamefont {Markham}}, \bibinfo {author} {\bibfnamefont
  {D.~J.}\ \bibnamefont {Twitchen}}, \bibinfo {author} {\bibfnamefont
  {D.}~\bibnamefont {Elkouss}}, \bibinfo {author} {\bibfnamefont
  {S.}~\bibnamefont {Wehner}}, \bibinfo {author} {\bibfnamefont {T.~H.}\
  \bibnamefont {Taminiau}},\ and\ \bibinfo {author} {\bibfnamefont
  {R.}~\bibnamefont {Hanson}},\ }\href {https://doi.org/10.1038/nature15759}
  {\bibfield  {journal} {\bibinfo  {journal} {Nature}\ }\textbf {\bibinfo
  {volume} {526}},\ \bibinfo {pages} {682} (\bibinfo {year}
  {2015})}\BibitemShut {NoStop}%
\bibitem [{\citenamefont {Rosenfeld}\ \emph {et~al.}(2017)\citenamefont
  {Rosenfeld}, \citenamefont {Burchardt}, \citenamefont {Garthoff},
  \citenamefont {Redeker}, \citenamefont {Ortegel}, \citenamefont {Rau},\ and\
  \citenamefont {Weinfurter}}]{rosenfeld2017event}%
  \BibitemOpen
  \bibfield  {author} {\bibinfo {author} {\bibfnamefont {W.}~\bibnamefont
  {Rosenfeld}}, \bibinfo {author} {\bibfnamefont {D.}~\bibnamefont
  {Burchardt}}, \bibinfo {author} {\bibfnamefont {R.}~\bibnamefont {Garthoff}},
  \bibinfo {author} {\bibfnamefont {K.}~\bibnamefont {Redeker}}, \bibinfo
  {author} {\bibfnamefont {N.}~\bibnamefont {Ortegel}}, \bibinfo {author}
  {\bibfnamefont {M.}~\bibnamefont {Rau}},\ and\ \bibinfo {author}
  {\bibfnamefont {H.}~\bibnamefont {Weinfurter}},\ }\href
  {https://doi.org/10.1103/PhysRevLett.119.010402} {\bibfield  {journal}
  {\bibinfo  {journal} {Phys. Rev. Lett.}\ }\textbf {\bibinfo {volume} {119}},\
  \bibinfo {pages} {010402} (\bibinfo {year} {2017})}\BibitemShut {NoStop}%
\bibitem [{\citenamefont {Giustina}\ \emph {et~al.}(2013)\citenamefont
  {Giustina}, \citenamefont {Mech}, \citenamefont {Ramelow}, \citenamefont
  {Wittmann}, \citenamefont {Kofler}, \citenamefont {Beyer}, \citenamefont
  {Lita}, \citenamefont {Calkins}, \citenamefont {Gerrits}, \citenamefont
  {Nam}, \citenamefont {Ursin},\ and\ \citenamefont
  {Zeilinger}}]{giustina2013bell}%
  \BibitemOpen
  \bibfield  {author} {\bibinfo {author} {\bibfnamefont {M.}~\bibnamefont
  {Giustina}}, \bibinfo {author} {\bibfnamefont {A.}~\bibnamefont {Mech}},
  \bibinfo {author} {\bibfnamefont {S.}~\bibnamefont {Ramelow}}, \bibinfo
  {author} {\bibfnamefont {B.}~\bibnamefont {Wittmann}}, \bibinfo {author}
  {\bibfnamefont {J.}~\bibnamefont {Kofler}}, \bibinfo {author} {\bibfnamefont
  {J.}~\bibnamefont {Beyer}}, \bibinfo {author} {\bibfnamefont
  {A.}~\bibnamefont {Lita}}, \bibinfo {author} {\bibfnamefont {B.}~\bibnamefont
  {Calkins}}, \bibinfo {author} {\bibfnamefont {T.}~\bibnamefont {Gerrits}},
  \bibinfo {author} {\bibfnamefont {S.~W.}\ \bibnamefont {Nam}}, \bibinfo
  {author} {\bibfnamefont {R.}~\bibnamefont {Ursin}},\ and\ \bibinfo {author}
  {\bibfnamefont {A.}~\bibnamefont {Zeilinger}},\ }\href
  {https://doi.org/10.1038/nature12012} {\bibfield  {journal} {\bibinfo
  {journal} {Nature}\ }\textbf {\bibinfo {volume} {497}},\ \bibinfo {pages}
  {227} (\bibinfo {year} {2013})}\BibitemShut {NoStop}%
\bibitem [{\citenamefont {Christensen}\ \emph {et~al.}(2013)\citenamefont
  {Christensen}, \citenamefont {McCusker}, \citenamefont {Altepeter},
  \citenamefont {Calkins}, \citenamefont {Gerrits}, \citenamefont {Lita},
  \citenamefont {Miller}, \citenamefont {Shalm}, \citenamefont {Zhang},
  \citenamefont {Nam}, \citenamefont {Brunner}, \citenamefont {Lim},
  \citenamefont {Gisin},\ and\ \citenamefont
  {Kwiat}}]{christensen2013detection}%
  \BibitemOpen
  \bibfield  {author} {\bibinfo {author} {\bibfnamefont {B.~G.}\ \bibnamefont
  {Christensen}}, \bibinfo {author} {\bibfnamefont {K.~T.}\ \bibnamefont
  {McCusker}}, \bibinfo {author} {\bibfnamefont {J.~B.}\ \bibnamefont
  {Altepeter}}, \bibinfo {author} {\bibfnamefont {B.}~\bibnamefont {Calkins}},
  \bibinfo {author} {\bibfnamefont {T.}~\bibnamefont {Gerrits}}, \bibinfo
  {author} {\bibfnamefont {A.~E.}\ \bibnamefont {Lita}}, \bibinfo {author}
  {\bibfnamefont {A.}~\bibnamefont {Miller}}, \bibinfo {author} {\bibfnamefont
  {L.~K.}\ \bibnamefont {Shalm}}, \bibinfo {author} {\bibfnamefont
  {Y.}~\bibnamefont {Zhang}}, \bibinfo {author} {\bibfnamefont {S.~W.}\
  \bibnamefont {Nam}}, \bibinfo {author} {\bibfnamefont {N.}~\bibnamefont
  {Brunner}}, \bibinfo {author} {\bibfnamefont {C.~C.~W.}\ \bibnamefont {Lim}},
  \bibinfo {author} {\bibfnamefont {N.}~\bibnamefont {Gisin}},\ and\ \bibinfo
  {author} {\bibfnamefont {P.~G.}\ \bibnamefont {Kwiat}},\ }\href
  {https://doi.org/10.1103/PhysRevLett.111.130406} {\bibfield  {journal}
  {\bibinfo  {journal} {Phys. Rev. Lett.}\ }\textbf {\bibinfo {volume} {111}},\
  \bibinfo {pages} {130406} (\bibinfo {year} {2013})}\BibitemShut {NoStop}%
\bibitem [{\citenamefont {Shalm}\ \emph {et~al.}(2015)\citenamefont {Shalm},
  \citenamefont {Meyer-Scott}, \citenamefont {Christensen}, \citenamefont
  {Bierhorst}, \citenamefont {Wayne}, \citenamefont {Stevens}, \citenamefont
  {Gerrits}, \citenamefont {Glancy}, \citenamefont {Hamel}, \citenamefont
  {Allman}, \citenamefont {Coakley}, \citenamefont {Dyer}, \citenamefont
  {Hodge}, \citenamefont {Lita}, \citenamefont {Verma}, \citenamefont
  {Lambrocco}, \citenamefont {Tortorici}, \citenamefont {Migdall},
  \citenamefont {Zhang}, \citenamefont {Kumor}, \citenamefont {Farr},
  \citenamefont {Marsili}, \citenamefont {Shaw}, \citenamefont {Stern},
  \citenamefont {Abell\'an}, \citenamefont {Amaya}, \citenamefont {Pruneri},
  \citenamefont {Jennewein}, \citenamefont {Mitchell}, \citenamefont {Kwiat},
  \citenamefont {Bienfang}, \citenamefont {Mirin}, \citenamefont {Knill},\ and\
  \citenamefont {Nam}}]{shalm2015strong}%
  \BibitemOpen
  \bibfield  {author} {\bibinfo {author} {\bibfnamefont {L.~K.}\ \bibnamefont
  {Shalm}}, \bibinfo {author} {\bibfnamefont {E.}~\bibnamefont {Meyer-Scott}},
  \bibinfo {author} {\bibfnamefont {B.~G.}\ \bibnamefont {Christensen}},
  \bibinfo {author} {\bibfnamefont {P.}~\bibnamefont {Bierhorst}}, \bibinfo
  {author} {\bibfnamefont {M.~A.}\ \bibnamefont {Wayne}}, \bibinfo {author}
  {\bibfnamefont {M.~J.}\ \bibnamefont {Stevens}}, \bibinfo {author}
  {\bibfnamefont {T.}~\bibnamefont {Gerrits}}, \bibinfo {author} {\bibfnamefont
  {S.}~\bibnamefont {Glancy}}, \bibinfo {author} {\bibfnamefont {D.~R.}\
  \bibnamefont {Hamel}}, \bibinfo {author} {\bibfnamefont {M.~S.}\ \bibnamefont
  {Allman}}, \bibinfo {author} {\bibfnamefont {K.~J.}\ \bibnamefont {Coakley}},
  \bibinfo {author} {\bibfnamefont {S.~D.}\ \bibnamefont {Dyer}}, \bibinfo
  {author} {\bibfnamefont {C.}~\bibnamefont {Hodge}}, \bibinfo {author}
  {\bibfnamefont {A.~E.}\ \bibnamefont {Lita}}, \bibinfo {author}
  {\bibfnamefont {V.~B.}\ \bibnamefont {Verma}}, \bibinfo {author}
  {\bibfnamefont {C.}~\bibnamefont {Lambrocco}}, \bibinfo {author}
  {\bibfnamefont {E.}~\bibnamefont {Tortorici}}, \bibinfo {author}
  {\bibfnamefont {A.~L.}\ \bibnamefont {Migdall}}, \bibinfo {author}
  {\bibfnamefont {Y.}~\bibnamefont {Zhang}}, \bibinfo {author} {\bibfnamefont
  {D.~R.}\ \bibnamefont {Kumor}}, \bibinfo {author} {\bibfnamefont {W.~H.}\
  \bibnamefont {Farr}}, \bibinfo {author} {\bibfnamefont {F.}~\bibnamefont
  {Marsili}}, \bibinfo {author} {\bibfnamefont {M.~D.}\ \bibnamefont {Shaw}},
  \bibinfo {author} {\bibfnamefont {J.~A.}\ \bibnamefont {Stern}}, \bibinfo
  {author} {\bibfnamefont {C.}~\bibnamefont {Abell\'an}}, \bibinfo {author}
  {\bibfnamefont {W.}~\bibnamefont {Amaya}}, \bibinfo {author} {\bibfnamefont
  {V.}~\bibnamefont {Pruneri}}, \bibinfo {author} {\bibfnamefont
  {T.}~\bibnamefont {Jennewein}}, \bibinfo {author} {\bibfnamefont {M.~W.}\
  \bibnamefont {Mitchell}}, \bibinfo {author} {\bibfnamefont {P.~G.}\
  \bibnamefont {Kwiat}}, \bibinfo {author} {\bibfnamefont {J.~C.}\ \bibnamefont
  {Bienfang}}, \bibinfo {author} {\bibfnamefont {R.~P.}\ \bibnamefont {Mirin}},
  \bibinfo {author} {\bibfnamefont {E.}~\bibnamefont {Knill}},\ and\ \bibinfo
  {author} {\bibfnamefont {S.~W.}\ \bibnamefont {Nam}},\ }\href
  {https://doi.org/10.1103/PhysRevLett.115.250402} {\bibfield  {journal}
  {\bibinfo  {journal} {Phys. Rev. Lett.}\ }\textbf {\bibinfo {volume} {115}},\
  \bibinfo {pages} {250402} (\bibinfo {year} {2015})}\BibitemShut {NoStop}%
\bibitem [{\citenamefont {Giustina}\ \emph {et~al.}(2015)\citenamefont
  {Giustina}, \citenamefont {Versteegh}, \citenamefont {Wengerowsky},
  \citenamefont {Handsteiner}, \citenamefont {Hochrainer}, \citenamefont
  {Phelan}, \citenamefont {Steinlechner}, \citenamefont {Kofler}, \citenamefont
  {Larsson}, \citenamefont {Abell\'an}, \citenamefont {Amaya}, \citenamefont
  {Pruneri}, \citenamefont {Mitchell}, \citenamefont {Beyer}, \citenamefont
  {Gerrits}, \citenamefont {Lita}, \citenamefont {Shalm}, \citenamefont {Nam},
  \citenamefont {Scheidl}, \citenamefont {Ursin}, \citenamefont {Wittmann},\
  and\ \citenamefont {Zeilinger}}]{giustina2015significant}%
  \BibitemOpen
  \bibfield  {author} {\bibinfo {author} {\bibfnamefont {M.}~\bibnamefont
  {Giustina}}, \bibinfo {author} {\bibfnamefont {M.~A.~M.}\ \bibnamefont
  {Versteegh}}, \bibinfo {author} {\bibfnamefont {S.}~\bibnamefont
  {Wengerowsky}}, \bibinfo {author} {\bibfnamefont {J.}~\bibnamefont
  {Handsteiner}}, \bibinfo {author} {\bibfnamefont {A.}~\bibnamefont
  {Hochrainer}}, \bibinfo {author} {\bibfnamefont {K.}~\bibnamefont {Phelan}},
  \bibinfo {author} {\bibfnamefont {F.}~\bibnamefont {Steinlechner}}, \bibinfo
  {author} {\bibfnamefont {J.}~\bibnamefont {Kofler}}, \bibinfo {author}
  {\bibfnamefont {J.-A.}\ \bibnamefont {Larsson}}, \bibinfo {author}
  {\bibfnamefont {C.}~\bibnamefont {Abell\'an}}, \bibinfo {author}
  {\bibfnamefont {W.}~\bibnamefont {Amaya}}, \bibinfo {author} {\bibfnamefont
  {V.}~\bibnamefont {Pruneri}}, \bibinfo {author} {\bibfnamefont {M.~W.}\
  \bibnamefont {Mitchell}}, \bibinfo {author} {\bibfnamefont {J.}~\bibnamefont
  {Beyer}}, \bibinfo {author} {\bibfnamefont {T.}~\bibnamefont {Gerrits}},
  \bibinfo {author} {\bibfnamefont {A.~E.}\ \bibnamefont {Lita}}, \bibinfo
  {author} {\bibfnamefont {L.~K.}\ \bibnamefont {Shalm}}, \bibinfo {author}
  {\bibfnamefont {S.~W.}\ \bibnamefont {Nam}}, \bibinfo {author} {\bibfnamefont
  {T.}~\bibnamefont {Scheidl}}, \bibinfo {author} {\bibfnamefont
  {R.}~\bibnamefont {Ursin}}, \bibinfo {author} {\bibfnamefont
  {B.}~\bibnamefont {Wittmann}},\ and\ \bibinfo {author} {\bibfnamefont
  {A.}~\bibnamefont {Zeilinger}},\ }\href
  {https://doi.org/10.1103/PhysRevLett.115.250401} {\bibfield  {journal}
  {\bibinfo  {journal} {Phys. Rev. Lett.}\ }\textbf {\bibinfo {volume} {115}},\
  \bibinfo {pages} {250401} (\bibinfo {year} {2015})}\BibitemShut {NoStop}%
\bibitem [{\citenamefont {Li}\ \emph {et~al.}(2018)\citenamefont {Li},
  \citenamefont {Wu}, \citenamefont {Zhang}, \citenamefont {Liu}, \citenamefont
  {Bai}, \citenamefont {Liu}, \citenamefont {Zhang}, \citenamefont {Zhao},
  \citenamefont {Li}, \citenamefont {Wang}, \citenamefont {You}, \citenamefont
  {Munro}, \citenamefont {Yin}, \citenamefont {Zhang}, \citenamefont {Peng},
  \citenamefont {Ma}, \citenamefont {Zhang}, \citenamefont {Fan},\ and\
  \citenamefont {Pan}}]{li2018test}%
  \BibitemOpen
  \bibfield  {author} {\bibinfo {author} {\bibfnamefont {M.-H.}\ \bibnamefont
  {Li}}, \bibinfo {author} {\bibfnamefont {C.}~\bibnamefont {Wu}}, \bibinfo
  {author} {\bibfnamefont {Y.}~\bibnamefont {Zhang}}, \bibinfo {author}
  {\bibfnamefont {W.-Z.}\ \bibnamefont {Liu}}, \bibinfo {author} {\bibfnamefont
  {B.}~\bibnamefont {Bai}}, \bibinfo {author} {\bibfnamefont {Y.}~\bibnamefont
  {Liu}}, \bibinfo {author} {\bibfnamefont {W.}~\bibnamefont {Zhang}}, \bibinfo
  {author} {\bibfnamefont {Q.}~\bibnamefont {Zhao}}, \bibinfo {author}
  {\bibfnamefont {H.}~\bibnamefont {Li}}, \bibinfo {author} {\bibfnamefont
  {Z.}~\bibnamefont {Wang}}, \bibinfo {author} {\bibfnamefont {L.}~\bibnamefont
  {You}}, \bibinfo {author} {\bibfnamefont {W.~J.}\ \bibnamefont {Munro}},
  \bibinfo {author} {\bibfnamefont {J.}~\bibnamefont {Yin}}, \bibinfo {author}
  {\bibfnamefont {J.}~\bibnamefont {Zhang}}, \bibinfo {author} {\bibfnamefont
  {C.-Z.}\ \bibnamefont {Peng}}, \bibinfo {author} {\bibfnamefont
  {X.}~\bibnamefont {Ma}}, \bibinfo {author} {\bibfnamefont {Q.}~\bibnamefont
  {Zhang}}, \bibinfo {author} {\bibfnamefont {J.}~\bibnamefont {Fan}},\ and\
  \bibinfo {author} {\bibfnamefont {J.-W.}\ \bibnamefont {Pan}},\ }\href
  {https://doi.org/10.1103/PhysRevLett.121.080404} {\bibfield  {journal}
  {\bibinfo  {journal} {Phys. Rev. Lett.}\ }\textbf {\bibinfo {volume} {121}},\
  \bibinfo {pages} {080404} (\bibinfo {year} {2018})}\BibitemShut {NoStop}%
\bibitem [{\citenamefont {Liu}\ \emph {et~al.}(2018{\natexlab{a}})\citenamefont
  {Liu}, \citenamefont {Yuan}, \citenamefont {Li}, \citenamefont {Zhang},
  \citenamefont {Zhao}, \citenamefont {Zhong}, \citenamefont {Cao},
  \citenamefont {Li}, \citenamefont {Chen}, \citenamefont {Li}, \citenamefont
  {Peng}, \citenamefont {Chen}, \citenamefont {Peng}, \citenamefont {Shi},
  \citenamefont {Wang}, \citenamefont {You}, \citenamefont {Ma}, \citenamefont
  {Fan}, \citenamefont {Zhang},\ and\ \citenamefont {Pan}}]{liu2018high}%
  \BibitemOpen
  \bibfield  {author} {\bibinfo {author} {\bibfnamefont {Y.}~\bibnamefont
  {Liu}}, \bibinfo {author} {\bibfnamefont {X.}~\bibnamefont {Yuan}}, \bibinfo
  {author} {\bibfnamefont {M.-H.}\ \bibnamefont {Li}}, \bibinfo {author}
  {\bibfnamefont {W.}~\bibnamefont {Zhang}}, \bibinfo {author} {\bibfnamefont
  {Q.}~\bibnamefont {Zhao}}, \bibinfo {author} {\bibfnamefont {J.}~\bibnamefont
  {Zhong}}, \bibinfo {author} {\bibfnamefont {Y.}~\bibnamefont {Cao}}, \bibinfo
  {author} {\bibfnamefont {Y.-H.}\ \bibnamefont {Li}}, \bibinfo {author}
  {\bibfnamefont {L.-K.}\ \bibnamefont {Chen}}, \bibinfo {author}
  {\bibfnamefont {H.}~\bibnamefont {Li}}, \bibinfo {author} {\bibfnamefont
  {T.}~\bibnamefont {Peng}}, \bibinfo {author} {\bibfnamefont {Y.-A.}\
  \bibnamefont {Chen}}, \bibinfo {author} {\bibfnamefont {C.-Z.}\ \bibnamefont
  {Peng}}, \bibinfo {author} {\bibfnamefont {S.-C.}\ \bibnamefont {Shi}},
  \bibinfo {author} {\bibfnamefont {Z.}~\bibnamefont {Wang}}, \bibinfo {author}
  {\bibfnamefont {L.}~\bibnamefont {You}}, \bibinfo {author} {\bibfnamefont
  {X.}~\bibnamefont {Ma}}, \bibinfo {author} {\bibfnamefont {J.}~\bibnamefont
  {Fan}}, \bibinfo {author} {\bibfnamefont {Q.}~\bibnamefont {Zhang}},\ and\
  \bibinfo {author} {\bibfnamefont {J.-W.}\ \bibnamefont {Pan}},\ }\href
  {https://doi.org/10.1103/PhysRevLett.120.010503} {\bibfield  {journal}
  {\bibinfo  {journal} {Phys. Rev. Lett.}\ }\textbf {\bibinfo {volume} {120}},\
  \bibinfo {pages} {010503} (\bibinfo {year} {2018}{\natexlab{a}})}\BibitemShut
  {NoStop}%
\bibitem [{\citenamefont {Shen}\ \emph {et~al.}(2018)\citenamefont {Shen},
  \citenamefont {Lee}, \citenamefont {Thinh}, \citenamefont {Bancal},
  \citenamefont {Cer\`e}, \citenamefont {Lamas-Linares}, \citenamefont {Lita},
  \citenamefont {Gerrits}, \citenamefont {Nam}, \citenamefont {Scarani},\ and\
  \citenamefont {Kurtsiefer}}]{shen2018randomness}%
  \BibitemOpen
  \bibfield  {author} {\bibinfo {author} {\bibfnamefont {L.}~\bibnamefont
  {Shen}}, \bibinfo {author} {\bibfnamefont {J.}~\bibnamefont {Lee}}, \bibinfo
  {author} {\bibfnamefont {L.~P.}\ \bibnamefont {Thinh}}, \bibinfo {author}
  {\bibfnamefont {J.-D.}\ \bibnamefont {Bancal}}, \bibinfo {author}
  {\bibfnamefont {A.}~\bibnamefont {Cer\`e}}, \bibinfo {author} {\bibfnamefont
  {A.}~\bibnamefont {Lamas-Linares}}, \bibinfo {author} {\bibfnamefont
  {A.}~\bibnamefont {Lita}}, \bibinfo {author} {\bibfnamefont {T.}~\bibnamefont
  {Gerrits}}, \bibinfo {author} {\bibfnamefont {S.~W.}\ \bibnamefont {Nam}},
  \bibinfo {author} {\bibfnamefont {V.}~\bibnamefont {Scarani}},\ and\ \bibinfo
  {author} {\bibfnamefont {C.}~\bibnamefont {Kurtsiefer}},\ }\href
  {https://doi.org/10.1103/PhysRevLett.121.150402} {\bibfield  {journal}
  {\bibinfo  {journal} {Phys. Rev. Lett.}\ }\textbf {\bibinfo {volume} {121}},\
  \bibinfo {pages} {150402} (\bibinfo {year} {2018})}\BibitemShut {NoStop}%
\bibitem [{\citenamefont {Bierhorst}\ \emph {et~al.}(2018)\citenamefont
  {Bierhorst}, \citenamefont {Knill}, \citenamefont {Glancy}, \citenamefont
  {Zhang}, \citenamefont {Mink}, \citenamefont {Jordan}, \citenamefont
  {Rommal}, \citenamefont {Liu}, \citenamefont {Christensen}, \citenamefont
  {Nam}, \citenamefont {Stevens},\ and\ \citenamefont
  {Shalm}}]{bierhorst2018experimentally}%
  \BibitemOpen
  \bibfield  {author} {\bibinfo {author} {\bibfnamefont {P.}~\bibnamefont
  {Bierhorst}}, \bibinfo {author} {\bibfnamefont {E.}~\bibnamefont {Knill}},
  \bibinfo {author} {\bibfnamefont {S.}~\bibnamefont {Glancy}}, \bibinfo
  {author} {\bibfnamefont {Y.}~\bibnamefont {Zhang}}, \bibinfo {author}
  {\bibfnamefont {A.}~\bibnamefont {Mink}}, \bibinfo {author} {\bibfnamefont
  {S.}~\bibnamefont {Jordan}}, \bibinfo {author} {\bibfnamefont
  {A.}~\bibnamefont {Rommal}}, \bibinfo {author} {\bibfnamefont {Y.-K.}\
  \bibnamefont {Liu}}, \bibinfo {author} {\bibfnamefont {B.}~\bibnamefont
  {Christensen}}, \bibinfo {author} {\bibfnamefont {S.~W.}\ \bibnamefont
  {Nam}}, \bibinfo {author} {\bibfnamefont {M.~J.}\ \bibnamefont {Stevens}},\
  and\ \bibinfo {author} {\bibfnamefont {L.~K.}\ \bibnamefont {Shalm}},\ }\href
  {https://doi.org/10.1038/s41586-018-0019-0} {\bibfield  {journal} {\bibinfo
  {journal} {Nature}\ }\textbf {\bibinfo {volume} {556}},\ \bibinfo {pages}
  {223} (\bibinfo {year} {2018})}\BibitemShut {NoStop}%
\bibitem [{\citenamefont {Liu}\ \emph {et~al.}(2018{\natexlab{b}})\citenamefont
  {Liu}, \citenamefont {Zhao}, \citenamefont {Li}, \citenamefont {Guan},
  \citenamefont {Zhang}, \citenamefont {Bai}, \citenamefont {Zhang},
  \citenamefont {Liu}, \citenamefont {Wu}, \citenamefont {Yuan}, \citenamefont
  {Li}, \citenamefont {Munro}, \citenamefont {Wang}, \citenamefont {You},
  \citenamefont {Zhang}, \citenamefont {Ma}, \citenamefont {Fan}, \citenamefont
  {Zhang},\ and\ \citenamefont {Pan}}]{liu2018device}%
  \BibitemOpen
  \bibfield  {author} {\bibinfo {author} {\bibfnamefont {Y.}~\bibnamefont
  {Liu}}, \bibinfo {author} {\bibfnamefont {Q.}~\bibnamefont {Zhao}}, \bibinfo
  {author} {\bibfnamefont {M.-H.}\ \bibnamefont {Li}}, \bibinfo {author}
  {\bibfnamefont {J.-Y.}\ \bibnamefont {Guan}}, \bibinfo {author}
  {\bibfnamefont {Y.}~\bibnamefont {Zhang}}, \bibinfo {author} {\bibfnamefont
  {B.}~\bibnamefont {Bai}}, \bibinfo {author} {\bibfnamefont {W.}~\bibnamefont
  {Zhang}}, \bibinfo {author} {\bibfnamefont {W.-Z.}\ \bibnamefont {Liu}},
  \bibinfo {author} {\bibfnamefont {C.}~\bibnamefont {Wu}}, \bibinfo {author}
  {\bibfnamefont {X.}~\bibnamefont {Yuan}}, \bibinfo {author} {\bibfnamefont
  {H.}~\bibnamefont {Li}}, \bibinfo {author} {\bibfnamefont {W.~J.}\
  \bibnamefont {Munro}}, \bibinfo {author} {\bibfnamefont {Z.}~\bibnamefont
  {Wang}}, \bibinfo {author} {\bibfnamefont {L.}~\bibnamefont {You}}, \bibinfo
  {author} {\bibfnamefont {J.}~\bibnamefont {Zhang}}, \bibinfo {author}
  {\bibfnamefont {X.}~\bibnamefont {Ma}}, \bibinfo {author} {\bibfnamefont
  {J.}~\bibnamefont {Fan}}, \bibinfo {author} {\bibfnamefont {Q.}~\bibnamefont
  {Zhang}},\ and\ \bibinfo {author} {\bibfnamefont {J.-W.}\ \bibnamefont
  {Pan}},\ }\href {https://doi.org/10.1038/s41586-018-0559-3} {\bibfield
  {journal} {\bibinfo  {journal} {Nature}\ }\textbf {\bibinfo {volume} {562}},\
  \bibinfo {pages} {548} (\bibinfo {year} {2018}{\natexlab{b}})}\BibitemShut
  {NoStop}%
\bibitem [{\citenamefont {Zhang}\ \emph {et~al.}(2020)\citenamefont {Zhang},
  \citenamefont {Shalm}, \citenamefont {Bienfang}, \citenamefont {Stevens},
  \citenamefont {Mazurek}, \citenamefont {Nam}, \citenamefont {Abell\'an},
  \citenamefont {Amaya}, \citenamefont {Mitchell}, \citenamefont {Fu},
  \citenamefont {Miller}, \citenamefont {Mink},\ and\ \citenamefont
  {Knill}}]{zhang2020experimental}%
  \BibitemOpen
  \bibfield  {author} {\bibinfo {author} {\bibfnamefont {Y.}~\bibnamefont
  {Zhang}}, \bibinfo {author} {\bibfnamefont {L.~K.}\ \bibnamefont {Shalm}},
  \bibinfo {author} {\bibfnamefont {J.~C.}\ \bibnamefont {Bienfang}}, \bibinfo
  {author} {\bibfnamefont {M.~J.}\ \bibnamefont {Stevens}}, \bibinfo {author}
  {\bibfnamefont {M.~D.}\ \bibnamefont {Mazurek}}, \bibinfo {author}
  {\bibfnamefont {S.~W.}\ \bibnamefont {Nam}}, \bibinfo {author} {\bibfnamefont
  {C.}~\bibnamefont {Abell\'an}}, \bibinfo {author} {\bibfnamefont
  {W.}~\bibnamefont {Amaya}}, \bibinfo {author} {\bibfnamefont {M.~W.}\
  \bibnamefont {Mitchell}}, \bibinfo {author} {\bibfnamefont {H.}~\bibnamefont
  {Fu}}, \bibinfo {author} {\bibfnamefont {C.~A.}\ \bibnamefont {Miller}},
  \bibinfo {author} {\bibfnamefont {A.}~\bibnamefont {Mink}},\ and\ \bibinfo
  {author} {\bibfnamefont {E.}~\bibnamefont {Knill}},\ }\href
  {https://doi.org/10.1103/PhysRevLett.124.010505} {\bibfield  {journal}
  {\bibinfo  {journal} {Phys. Rev. Lett.}\ }\textbf {\bibinfo {volume} {124}},\
  \bibinfo {pages} {010505} (\bibinfo {year} {2020})}\BibitemShut {NoStop}%
\bibitem [{\citenamefont {Liu}\ \emph {et~al.}(2021)\citenamefont {Liu},
  \citenamefont {Li}, \citenamefont {Ragy}, \citenamefont {Zhao}, \citenamefont
  {Bai}, \citenamefont {Liu}, \citenamefont {Brown}, \citenamefont {Zhang},
  \citenamefont {Colbeck}, \citenamefont {Fan}, \citenamefont {Zhang},\ and\
  \citenamefont {Pan}}]{liu2021device}%
  \BibitemOpen
  \bibfield  {author} {\bibinfo {author} {\bibfnamefont {W.-Z.}\ \bibnamefont
  {Liu}}, \bibinfo {author} {\bibfnamefont {M.-H.}\ \bibnamefont {Li}},
  \bibinfo {author} {\bibfnamefont {S.}~\bibnamefont {Ragy}}, \bibinfo {author}
  {\bibfnamefont {S.-R.}\ \bibnamefont {Zhao}}, \bibinfo {author}
  {\bibfnamefont {B.}~\bibnamefont {Bai}}, \bibinfo {author} {\bibfnamefont
  {Y.}~\bibnamefont {Liu}}, \bibinfo {author} {\bibfnamefont {P.~J.}\
  \bibnamefont {Brown}}, \bibinfo {author} {\bibfnamefont {J.}~\bibnamefont
  {Zhang}}, \bibinfo {author} {\bibfnamefont {R.}~\bibnamefont {Colbeck}},
  \bibinfo {author} {\bibfnamefont {J.}~\bibnamefont {Fan}}, \bibinfo {author}
  {\bibfnamefont {Q.}~\bibnamefont {Zhang}},\ and\ \bibinfo {author}
  {\bibfnamefont {J.-W.}\ \bibnamefont {Pan}},\ }\href
  {https://doi.org/10.1038/s41567-020-01147-2} {\bibfield  {journal} {\bibinfo
  {journal} {Nature Physics}\ }\textbf {\bibinfo {volume} {17}},\ \bibinfo
  {pages} {448} (\bibinfo {year} {2021})}\BibitemShut {NoStop}%
\bibitem [{\citenamefont {Li}\ \emph {et~al.}(2021)\citenamefont {Li},
  \citenamefont {Zhang}, \citenamefont {Liu}, \citenamefont {Zhao},
  \citenamefont {Bai}, \citenamefont {Liu}, \citenamefont {Zhao}, \citenamefont
  {Peng}, \citenamefont {Zhang}, \citenamefont {Zhang}, \citenamefont {Munro},
  \citenamefont {Ma}, \citenamefont {Zhang}, \citenamefont {Fan},\ and\
  \citenamefont {Pan}}]{li2021experimental}%
  \BibitemOpen
  \bibfield  {author} {\bibinfo {author} {\bibfnamefont {M.-H.}\ \bibnamefont
  {Li}}, \bibinfo {author} {\bibfnamefont {X.}~\bibnamefont {Zhang}}, \bibinfo
  {author} {\bibfnamefont {W.-Z.}\ \bibnamefont {Liu}}, \bibinfo {author}
  {\bibfnamefont {S.-R.}\ \bibnamefont {Zhao}}, \bibinfo {author}
  {\bibfnamefont {B.}~\bibnamefont {Bai}}, \bibinfo {author} {\bibfnamefont
  {Y.}~\bibnamefont {Liu}}, \bibinfo {author} {\bibfnamefont {Q.}~\bibnamefont
  {Zhao}}, \bibinfo {author} {\bibfnamefont {Y.}~\bibnamefont {Peng}}, \bibinfo
  {author} {\bibfnamefont {J.}~\bibnamefont {Zhang}}, \bibinfo {author}
  {\bibfnamefont {Y.}~\bibnamefont {Zhang}}, \bibinfo {author} {\bibfnamefont
  {W.~J.}\ \bibnamefont {Munro}}, \bibinfo {author} {\bibfnamefont
  {X.}~\bibnamefont {Ma}}, \bibinfo {author} {\bibfnamefont {Q.}~\bibnamefont
  {Zhang}}, \bibinfo {author} {\bibfnamefont {J.}~\bibnamefont {Fan}},\ and\
  \bibinfo {author} {\bibfnamefont {J.-W.}\ \bibnamefont {Pan}},\ }\href
  {https://doi.org/10.1103/PhysRevLett.126.050503} {\bibfield  {journal}
  {\bibinfo  {journal} {Phys. Rev. Lett.}\ }\textbf {\bibinfo {volume} {126}},\
  \bibinfo {pages} {050503} (\bibinfo {year} {2021})}\BibitemShut {NoStop}%
\bibitem [{\citenamefont {Shalm}\ \emph {et~al.}(2021)\citenamefont {Shalm},
  \citenamefont {Zhang}, \citenamefont {Bienfang}, \citenamefont {Schlager},
  \citenamefont {Stevens}, \citenamefont {Mazurek}, \citenamefont
  {Abell{\'a}n}, \citenamefont {Amaya}, \citenamefont {Mitchell}, \citenamefont
  {Alhejji}, \citenamefont {Fu}, \citenamefont {Ornstein}, \citenamefont
  {Mirin}, \citenamefont {Nam},\ and\ \citenamefont {Knill}}]{shalm2021device}%
  \BibitemOpen
  \bibfield  {author} {\bibinfo {author} {\bibfnamefont {L.~K.}\ \bibnamefont
  {Shalm}}, \bibinfo {author} {\bibfnamefont {Y.}~\bibnamefont {Zhang}},
  \bibinfo {author} {\bibfnamefont {J.~C.}\ \bibnamefont {Bienfang}}, \bibinfo
  {author} {\bibfnamefont {C.}~\bibnamefont {Schlager}}, \bibinfo {author}
  {\bibfnamefont {M.~J.}\ \bibnamefont {Stevens}}, \bibinfo {author}
  {\bibfnamefont {M.~D.}\ \bibnamefont {Mazurek}}, \bibinfo {author}
  {\bibfnamefont {C.}~\bibnamefont {Abell{\'a}n}}, \bibinfo {author}
  {\bibfnamefont {W.}~\bibnamefont {Amaya}}, \bibinfo {author} {\bibfnamefont
  {M.~W.}\ \bibnamefont {Mitchell}}, \bibinfo {author} {\bibfnamefont {M.~A.}\
  \bibnamefont {Alhejji}}, \bibinfo {author} {\bibfnamefont {H.}~\bibnamefont
  {Fu}}, \bibinfo {author} {\bibfnamefont {J.}~\bibnamefont {Ornstein}},
  \bibinfo {author} {\bibfnamefont {R.~P.}\ \bibnamefont {Mirin}}, \bibinfo
  {author} {\bibfnamefont {S.~W.}\ \bibnamefont {Nam}},\ and\ \bibinfo {author}
  {\bibfnamefont {E.}~\bibnamefont {Knill}},\ }\href
  {https://doi.org/10.1038/s41567-020-01153-4} {\bibfield  {journal} {\bibinfo
  {journal} {Nature Physics}\ }\textbf {\bibinfo {volume} {17}},\ \bibinfo
  {pages} {452} (\bibinfo {year} {2021})}\BibitemShut {NoStop}%
\bibitem [{\citenamefont {Masanes}\ \emph {et~al.}(2011)\citenamefont
  {Masanes}, \citenamefont {Pironio},\ and\ \citenamefont
  {Ac{\'i}n}}]{masanes2011secure}%
  \BibitemOpen
  \bibfield  {author} {\bibinfo {author} {\bibfnamefont {L.}~\bibnamefont
  {Masanes}}, \bibinfo {author} {\bibfnamefont {S.}~\bibnamefont {Pironio}},\
  and\ \bibinfo {author} {\bibfnamefont {A.}~\bibnamefont {Ac{\'i}n}},\ }\href
  {https://doi.org/10.1038/ncomms1244} {\bibfield  {journal} {\bibinfo
  {journal} {Nature Communications}\ }\textbf {\bibinfo {volume} {2}},\
  \bibinfo {pages} {238} (\bibinfo {year} {2011})}\BibitemShut {NoStop}%
\bibitem [{\citenamefont {Reichardt}\ \emph {et~al.}(2013)\citenamefont
  {Reichardt}, \citenamefont {Unger},\ and\ \citenamefont
  {Vazirani}}]{reichardt2013classical}%
  \BibitemOpen
  \bibfield  {author} {\bibinfo {author} {\bibfnamefont {B.~W.}\ \bibnamefont
  {Reichardt}}, \bibinfo {author} {\bibfnamefont {F.}~\bibnamefont {Unger}},\
  and\ \bibinfo {author} {\bibfnamefont {U.}~\bibnamefont {Vazirani}},\ }\href
  {https://doi.org/10.1038/nature12035} {\bibfield  {journal} {\bibinfo
  {journal} {Nature}\ }\textbf {\bibinfo {volume} {496}},\ \bibinfo {pages}
  {456} (\bibinfo {year} {2013})}\BibitemShut {NoStop}%
\bibitem [{\citenamefont {Vazirani}\ and\ \citenamefont
  {Vidick}(2014)}]{vazirani2014fully}%
  \BibitemOpen
  \bibfield  {author} {\bibinfo {author} {\bibfnamefont {U.}~\bibnamefont
  {Vazirani}}\ and\ \bibinfo {author} {\bibfnamefont {T.}~\bibnamefont
  {Vidick}},\ }\href {https://doi.org/10.1103/PhysRevLett.113.140501}
  {\bibfield  {journal} {\bibinfo  {journal} {Phys. Rev. Lett.}\ }\textbf
  {\bibinfo {volume} {113}},\ \bibinfo {pages} {140501} (\bibinfo {year}
  {2014})}\BibitemShut {NoStop}%
\bibitem [{\citenamefont {Arnon-Friedman}\ \emph {et~al.}(2018)\citenamefont
  {Arnon-Friedman}, \citenamefont {Dupuis}, \citenamefont {Fawzi},
  \citenamefont {Renner},\ and\ \citenamefont {Vidick}}]{Arnon2018Practical}%
  \BibitemOpen
  \bibfield  {author} {\bibinfo {author} {\bibfnamefont {R.}~\bibnamefont
  {Arnon-Friedman}}, \bibinfo {author} {\bibfnamefont {F.}~\bibnamefont
  {Dupuis}}, \bibinfo {author} {\bibfnamefont {O.}~\bibnamefont {Fawzi}},
  \bibinfo {author} {\bibfnamefont {R.}~\bibnamefont {Renner}},\ and\ \bibinfo
  {author} {\bibfnamefont {T.}~\bibnamefont {Vidick}},\ }\href
  {https://doi.org/10.1038/s41467-017-02307-4} {\bibfield  {journal} {\bibinfo
  {journal} {Nature Communications}\ }\textbf {\bibinfo {volume} {9}},\
  \bibinfo {pages} {459} (\bibinfo {year} {2018})}\BibitemShut {NoStop}%
\bibitem [{\citenamefont {Ma}\ and\ \citenamefont
  {L\"{u}tkenhaus}(2012)}]{fourvalue}%
  \BibitemOpen
  \bibfield  {author} {\bibinfo {author} {\bibfnamefont {X.}~\bibnamefont
  {Ma}}\ and\ \bibinfo {author} {\bibfnamefont {N.}~\bibnamefont
  {L\"{u}tkenhaus}},\ }\href@noop {} {\bibfield  {journal} {\bibinfo  {journal}
  {Quantum Info. Comput.}\ }\textbf {\bibinfo {volume} {12}},\ \bibinfo {pages}
  {203–214} (\bibinfo {year} {2012})}\BibitemShut {NoStop}%
\bibitem [{\citenamefont {Tan}\ \emph {et~al.}(2020)\citenamefont {Tan},
  \citenamefont {Lim},\ and\ \citenamefont {Renner}}]{tan2020advantage}%
  \BibitemOpen
  \bibfield  {author} {\bibinfo {author} {\bibfnamefont {E.~Y.-Z.}\
  \bibnamefont {Tan}}, \bibinfo {author} {\bibfnamefont {C.~C.-W.}\
  \bibnamefont {Lim}},\ and\ \bibinfo {author} {\bibfnamefont {R.}~\bibnamefont
  {Renner}},\ }\href {https://doi.org/10.1103/PhysRevLett.124.020502}
  {\bibfield  {journal} {\bibinfo  {journal} {Phys. Rev. Lett.}\ }\textbf
  {\bibinfo {volume} {124}},\ \bibinfo {pages} {020502} (\bibinfo {year}
  {2020})}\BibitemShut {NoStop}%
\bibitem [{\citenamefont {Ho}\ \emph {et~al.}(2020)\citenamefont {Ho},
  \citenamefont {Sekatski}, \citenamefont {Tan}, \citenamefont {Renner},
  \citenamefont {Bancal},\ and\ \citenamefont {Sangouard}}]{ho2020noisy}%
  \BibitemOpen
  \bibfield  {author} {\bibinfo {author} {\bibfnamefont {M.}~\bibnamefont
  {Ho}}, \bibinfo {author} {\bibfnamefont {P.}~\bibnamefont {Sekatski}},
  \bibinfo {author} {\bibfnamefont {E.~Y.-Z.}\ \bibnamefont {Tan}}, \bibinfo
  {author} {\bibfnamefont {R.}~\bibnamefont {Renner}}, \bibinfo {author}
  {\bibfnamefont {J.-D.}\ \bibnamefont {Bancal}},\ and\ \bibinfo {author}
  {\bibfnamefont {N.}~\bibnamefont {Sangouard}},\ }\href
  {https://doi.org/10.1103/PhysRevLett.124.230502} {\bibfield  {journal}
  {\bibinfo  {journal} {Phys. Rev. Lett.}\ }\textbf {\bibinfo {volume} {124}},\
  \bibinfo {pages} {230502} (\bibinfo {year} {2020})}\BibitemShut {NoStop}%
\bibitem [{\citenamefont {Woodhead}\ \emph {et~al.}(2021)\citenamefont
  {Woodhead}, \citenamefont {Ac{\'{i}}n},\ and\ \citenamefont
  {Pironio}}]{woodhead2021device}%
  \BibitemOpen
  \bibfield  {author} {\bibinfo {author} {\bibfnamefont {E.}~\bibnamefont
  {Woodhead}}, \bibinfo {author} {\bibfnamefont {A.}~\bibnamefont
  {Ac{\'{i}}n}},\ and\ \bibinfo {author} {\bibfnamefont {S.}~\bibnamefont
  {Pironio}},\ }\href {https://doi.org/10.22331/q-2021-04-26-443} {\bibfield
  {journal} {\bibinfo  {journal} {{Quantum}}\ }\textbf {\bibinfo {volume}
  {5}},\ \bibinfo {pages} {443} (\bibinfo {year} {2021})}\BibitemShut {NoStop}%
\bibitem [{\citenamefont {Sekatski}\ \emph {et~al.}(2021)\citenamefont
  {Sekatski}, \citenamefont {Bancal}, \citenamefont {Valcarce}, \citenamefont
  {Tan}, \citenamefont {Renner},\ and\ \citenamefont
  {Sangouard}}]{sekatski2021device}%
  \BibitemOpen
  \bibfield  {author} {\bibinfo {author} {\bibfnamefont {P.}~\bibnamefont
  {Sekatski}}, \bibinfo {author} {\bibfnamefont {J.-D.}\ \bibnamefont
  {Bancal}}, \bibinfo {author} {\bibfnamefont {X.}~\bibnamefont {Valcarce}},
  \bibinfo {author} {\bibfnamefont {E.~Y.-Z.}\ \bibnamefont {Tan}}, \bibinfo
  {author} {\bibfnamefont {R.}~\bibnamefont {Renner}},\ and\ \bibinfo {author}
  {\bibfnamefont {N.}~\bibnamefont {Sangouard}},\ }\href
  {https://doi.org/10.22331/q-2021-04-26-444} {\bibfield  {journal} {\bibinfo
  {journal} {{Quantum}}\ }\textbf {\bibinfo {volume} {5}},\ \bibinfo {pages}
  {444} (\bibinfo {year} {2021})}\BibitemShut {NoStop}%
\bibitem [{\citenamefont {Gonzales-Ureta}\ \emph {et~al.}(2021)\citenamefont
  {Gonzales-Ureta}, \citenamefont {Predojevi\ifmmode~\acute{c}\else
  \'{c}\fi{}},\ and\ \citenamefont {Cabello}}]{gonzales2021device}%
  \BibitemOpen
  \bibfield  {author} {\bibinfo {author} {\bibfnamefont {J.~R.}\ \bibnamefont
  {Gonzales-Ureta}}, \bibinfo {author} {\bibfnamefont {A.}~\bibnamefont
  {Predojevi\ifmmode~\acute{c}\else \'{c}\fi{}}},\ and\ \bibinfo {author}
  {\bibfnamefont {A.}~\bibnamefont {Cabello}},\ }\href
  {https://doi.org/10.1103/PhysRevA.103.052436} {\bibfield  {journal} {\bibinfo
   {journal} {Phys. Rev. A}\ }\textbf {\bibinfo {volume} {103}},\ \bibinfo
  {pages} {052436} (\bibinfo {year} {2021})}\BibitemShut {NoStop}%
\bibitem [{\citenamefont {Brown}\ \emph
  {et~al.}(2021{\natexlab{a}})\citenamefont {Brown}, \citenamefont {Fawzi},\
  and\ \citenamefont {Fawzi}}]{brown2021computing}%
  \BibitemOpen
  \bibfield  {author} {\bibinfo {author} {\bibfnamefont {P.}~\bibnamefont
  {Brown}}, \bibinfo {author} {\bibfnamefont {H.}~\bibnamefont {Fawzi}},\ and\
  \bibinfo {author} {\bibfnamefont {O.}~\bibnamefont {Fawzi}},\ }\href
  {https://doi.org/10.1038/s41467-020-20018-1} {\bibfield  {journal} {\bibinfo
  {journal} {Nature Communications}\ }\textbf {\bibinfo {volume} {12}},\
  \bibinfo {pages} {575} (\bibinfo {year} {2021}{\natexlab{a}})}\BibitemShut
  {NoStop}%
\bibitem [{\citenamefont {Schwonnek}\ \emph {et~al.}(2021)\citenamefont
  {Schwonnek}, \citenamefont {Goh}, \citenamefont {Primaatmaja}, \citenamefont
  {Tan}, \citenamefont {Wolf}, \citenamefont {Scarani},\ and\ \citenamefont
  {Lim}}]{schwonnek2021device}%
  \BibitemOpen
  \bibfield  {author} {\bibinfo {author} {\bibfnamefont {R.}~\bibnamefont
  {Schwonnek}}, \bibinfo {author} {\bibfnamefont {K.~T.}\ \bibnamefont {Goh}},
  \bibinfo {author} {\bibfnamefont {I.~W.}\ \bibnamefont {Primaatmaja}},
  \bibinfo {author} {\bibfnamefont {E.~Y.-Z.}\ \bibnamefont {Tan}}, \bibinfo
  {author} {\bibfnamefont {R.}~\bibnamefont {Wolf}}, \bibinfo {author}
  {\bibfnamefont {V.}~\bibnamefont {Scarani}},\ and\ \bibinfo {author}
  {\bibfnamefont {C.~C.-W.}\ \bibnamefont {Lim}},\ }\href
  {https://doi.org/10.1038/s41467-021-23147-3} {\bibfield  {journal} {\bibinfo
  {journal} {Nature Communications}\ }\textbf {\bibinfo {volume} {12}},\
  \bibinfo {pages} {2880} (\bibinfo {year} {2021})}\BibitemShut {NoStop}%
\bibitem [{\citenamefont {Brown}\ \emph
  {et~al.}(2021{\natexlab{b}})\citenamefont {Brown}, \citenamefont {Fawzi},\
  and\ \citenamefont {Fawzi}}]{computeDI2021}%
  \BibitemOpen
  \bibfield  {author} {\bibinfo {author} {\bibfnamefont {P.}~\bibnamefont
  {Brown}}, \bibinfo {author} {\bibfnamefont {H.}~\bibnamefont {Fawzi}},\ and\
  \bibinfo {author} {\bibfnamefont {O.}~\bibnamefont {Fawzi}},\ }\href@noop {}
  {\bibinfo {title} {Device-independent lower bounds on the conditional von
  neumann entropy}} (\bibinfo {year} {2021}{\natexlab{b}}),\ \Eprint
  {https://arxiv.org/abs/2106.13692} {arXiv:2106.13692 [quant-ph]} \BibitemShut
  {NoStop}%
\bibitem [{\citenamefont {Thinh}\ \emph {et~al.}(2016)\citenamefont {Thinh},
  \citenamefont {de~la Torre}, \citenamefont {Bancal}, \citenamefont
  {Pironio},\ and\ \citenamefont {Scarani}}]{de2016randomness}%
  \BibitemOpen
  \bibfield  {author} {\bibinfo {author} {\bibfnamefont {L.~P.}\ \bibnamefont
  {Thinh}}, \bibinfo {author} {\bibfnamefont {G.}~\bibnamefont {de~la Torre}},
  \bibinfo {author} {\bibfnamefont {J.-D.}\ \bibnamefont {Bancal}}, \bibinfo
  {author} {\bibfnamefont {S.}~\bibnamefont {Pironio}},\ and\ \bibinfo {author}
  {\bibfnamefont {V.}~\bibnamefont {Scarani}},\ }\href
  {https://doi.org/10.1088/1367-2630/18/3/035007} {\bibfield  {journal}
  {\bibinfo  {journal} {New Journal of Physics}\ }\textbf {\bibinfo {volume}
  {18}},\ \bibinfo {pages} {035007} (\bibinfo {year} {2016})}\BibitemShut
  {NoStop}%
\bibitem [{\citenamefont {Xu}\ \emph {et~al.}(2021)\citenamefont {Xu},
  \citenamefont {Zhang}, \citenamefont {Zhang},\ and\ \citenamefont
  {Pan}}]{xu2021device}%
  \BibitemOpen
  \bibfield  {author} {\bibinfo {author} {\bibfnamefont {F.}~\bibnamefont
  {Xu}}, \bibinfo {author} {\bibfnamefont {Y.-Z.}\ \bibnamefont {Zhang}},
  \bibinfo {author} {\bibfnamefont {Q.}~\bibnamefont {Zhang}},\ and\ \bibinfo
  {author} {\bibfnamefont {J.-W.}\ \bibnamefont {Pan}},\ }\href@noop {}
  {\bibinfo {title} {Device-independent quantum key distribution with random
  post selection}} (\bibinfo {year} {2021}),\ \Eprint
  {https://arxiv.org/abs/2110.02701} {arXiv:2110.02701 [quant-ph]} \BibitemShut
  {NoStop}%
\bibitem [{\citenamefont {Barrett}\ \emph {et~al.}(2013)\citenamefont
  {Barrett}, \citenamefont {Colbeck},\ and\ \citenamefont
  {Kent}}]{barrett2013memory}%
  \BibitemOpen
  \bibfield  {author} {\bibinfo {author} {\bibfnamefont {J.}~\bibnamefont
  {Barrett}}, \bibinfo {author} {\bibfnamefont {R.}~\bibnamefont {Colbeck}},\
  and\ \bibinfo {author} {\bibfnamefont {A.}~\bibnamefont {Kent}},\ }\href
  {https://doi.org/10.1103/PhysRevLett.110.010503} {\bibfield  {journal}
  {\bibinfo  {journal} {Phys. Rev. Lett.}\ }\textbf {\bibinfo {volume} {110}},\
  \bibinfo {pages} {010503} (\bibinfo {year} {2013})}\BibitemShut {NoStop}%
\bibitem [{\citenamefont {Curty}\ and\ \citenamefont
  {Lo}(2019)}]{curty2017quantum}%
  \BibitemOpen
  \bibfield  {author} {\bibinfo {author} {\bibfnamefont {M.}~\bibnamefont
  {Curty}}\ and\ \bibinfo {author} {\bibfnamefont {H.-K.}\ \bibnamefont {Lo}},\
  }\href {https://doi.org/10.1038/s41534-019-0131-5} {\bibfield  {journal}
  {\bibinfo  {journal} {npj Quantum Information}\ }\textbf {\bibinfo {volume}
  {5}},\ \bibinfo {pages} {14} (\bibinfo {year} {2019})}\BibitemShut {NoStop}%
\bibitem [{\citenamefont {Devetak}\ and\ \citenamefont
  {Winter}(2005)}]{DW2005}%
  \BibitemOpen
  \bibfield  {author} {\bibinfo {author} {\bibfnamefont {I.}~\bibnamefont
  {Devetak}}\ and\ \bibinfo {author} {\bibfnamefont {A.}~\bibnamefont
  {Winter}},\ }\href {http://www.jstor.org/stable/30046924} {\bibfield
  {journal} {\bibinfo  {journal} {Proceedings: Mathematical, Physical and
  Engineering Sciences}\ }\textbf {\bibinfo {volume} {461}},\ \bibinfo {pages}
  {207} (\bibinfo {year} {2005})}\BibitemShut {NoStop}%
\bibitem [{\citenamefont {Navascu\'es}\ \emph {et~al.}(2007)\citenamefont
  {Navascu\'es}, \citenamefont {Pironio},\ and\ \citenamefont
  {Ac\'{\i}n}}]{NPA}%
  \BibitemOpen
  \bibfield  {author} {\bibinfo {author} {\bibfnamefont {M.}~\bibnamefont
  {Navascu\'es}}, \bibinfo {author} {\bibfnamefont {S.}~\bibnamefont
  {Pironio}},\ and\ \bibinfo {author} {\bibfnamefont {A.}~\bibnamefont
  {Ac\'{\i}n}},\ }\href {https://doi.org/10.1103/PhysRevLett.98.010401}
  {\bibfield  {journal} {\bibinfo  {journal} {Phys. Rev. Lett.}\ }\textbf
  {\bibinfo {volume} {98}},\ \bibinfo {pages} {010401} (\bibinfo {year}
  {2007})}\BibitemShut {NoStop}%
\bibitem [{\citenamefont {Pironio}\ \emph {et~al.}(2010)\citenamefont
  {Pironio}, \citenamefont {Ac{\'i}n}, \citenamefont {Massar}, \citenamefont
  {de~la Giroday}, \citenamefont {Matsukevich}, \citenamefont {Maunz},
  \citenamefont {Olmschenk}, \citenamefont {Hayes}, \citenamefont {Luo},
  \citenamefont {Manning},\ and\ \citenamefont {Monroe}}]{pironio2010random}%
  \BibitemOpen
  \bibfield  {author} {\bibinfo {author} {\bibfnamefont {S.}~\bibnamefont
  {Pironio}}, \bibinfo {author} {\bibfnamefont {A.}~\bibnamefont {Ac{\'i}n}},
  \bibinfo {author} {\bibfnamefont {S.}~\bibnamefont {Massar}}, \bibinfo
  {author} {\bibfnamefont {A.~B.}\ \bibnamefont {de~la Giroday}}, \bibinfo
  {author} {\bibfnamefont {D.~N.}\ \bibnamefont {Matsukevich}}, \bibinfo
  {author} {\bibfnamefont {P.}~\bibnamefont {Maunz}}, \bibinfo {author}
  {\bibfnamefont {S.}~\bibnamefont {Olmschenk}}, \bibinfo {author}
  {\bibfnamefont {D.}~\bibnamefont {Hayes}}, \bibinfo {author} {\bibfnamefont
  {L.}~\bibnamefont {Luo}}, \bibinfo {author} {\bibfnamefont {T.~A.}\
  \bibnamefont {Manning}},\ and\ \bibinfo {author} {\bibfnamefont
  {C.}~\bibnamefont {Monroe}},\ }\href {https://doi.org/10.1038/nature09008}
  {\bibfield  {journal} {\bibinfo  {journal} {Nature}\ }\textbf {\bibinfo
  {volume} {464}},\ \bibinfo {pages} {1021} (\bibinfo {year}
  {2010})}\BibitemShut {NoStop}%
\bibitem [{\citenamefont {Ma}\ \emph {et~al.}(2006)\citenamefont {Ma},
  \citenamefont {Fung}, \citenamefont {Dupuis}, \citenamefont {Chen},
  \citenamefont {Tamaki},\ and\ \citenamefont {Lo}}]{Ma2006Twoway}%
  \BibitemOpen
  \bibfield  {author} {\bibinfo {author} {\bibfnamefont {X.}~\bibnamefont
  {Ma}}, \bibinfo {author} {\bibfnamefont {C.~H.~F.}\ \bibnamefont {Fung}},
  \bibinfo {author} {\bibfnamefont {F.}~\bibnamefont {Dupuis}}, \bibinfo
  {author} {\bibfnamefont {K.}~\bibnamefont {Chen}}, \bibinfo {author}
  {\bibfnamefont {K.}~\bibnamefont {Tamaki}},\ and\ \bibinfo {author}
  {\bibfnamefont {H.-K.}\ \bibnamefont {Lo}},\ }\href
  {https://doi.org/10.1103/PhysRevA.74.032330} {\bibfield  {journal} {\bibinfo
  {journal} {Phys. Rev. A}\ }\textbf {\bibinfo {volume} {74}},\ \bibinfo
  {pages} {032330} (\bibinfo {year} {2006})}\BibitemShut {NoStop}%
\bibitem [{\citenamefont {Gottesman}\ and\ \citenamefont
  {Lo}(2003)}]{Gottesman2003twoway}%
  \BibitemOpen
  \bibfield  {author} {\bibinfo {author} {\bibfnamefont {D.}~\bibnamefont
  {Gottesman}}\ and\ \bibinfo {author} {\bibfnamefont {H.-K.}\ \bibnamefont
  {Lo}},\ }\href {https://doi.org/10.1109/TIT.2002.807289} {\bibfield
  {journal} {\bibinfo  {journal} {IEEE Transactions on Information Theory}\
  }\textbf {\bibinfo {volume} {49}},\ \bibinfo {pages} {457} (\bibinfo {year}
  {2003})}\BibitemShut {NoStop}%
\bibitem [{\citenamefont {Zhang}\ \emph
  {et~al.}(2021{\natexlab{a}})\citenamefont {Zhang}, \citenamefont {Zeng},
  \citenamefont {Ye}, \citenamefont {Lo},\ and\ \citenamefont
  {Ma}}]{Bstep2021Ma}%
  \BibitemOpen
  \bibfield  {author} {\bibinfo {author} {\bibfnamefont {X.}~\bibnamefont
  {Zhang}}, \bibinfo {author} {\bibfnamefont {P.}~\bibnamefont {Zeng}},
  \bibinfo {author} {\bibfnamefont {T.}~\bibnamefont {Ye}}, \bibinfo {author}
  {\bibfnamefont {H.-K.}\ \bibnamefont {Lo}},\ and\ \bibinfo {author}
  {\bibfnamefont {X.}~\bibnamefont {Ma}},\ }\href@noop {} {\bibinfo {title}
  {Quantum complementarity approach to device-independent security}} (\bibinfo
  {year} {2021}{\natexlab{a}}),\ \Eprint {https://arxiv.org/abs/2111.13855}
  {arXiv:2111.13855 [quant-ph]} \BibitemShut {NoStop}%
\bibitem [{\citenamefont {Nadlinger}\ \emph {et~al.}(2021)\citenamefont
  {Nadlinger}, \citenamefont {Drmota}, \citenamefont {Nichol}, \citenamefont
  {Araneda}, \citenamefont {Main}, \citenamefont {Srinivas}, \citenamefont
  {Lucas}, \citenamefont {Ballance}, \citenamefont {Ivanov}, \citenamefont
  {Tan}, \citenamefont {Sekatski}, \citenamefont {Urbanke}, \citenamefont
  {Renner}, \citenamefont {Sangouard},\ and\ \citenamefont
  {Bancal}}]{ionDI2021}%
  \BibitemOpen
  \bibfield  {author} {\bibinfo {author} {\bibfnamefont {D.~P.}\ \bibnamefont
  {Nadlinger}}, \bibinfo {author} {\bibfnamefont {P.}~\bibnamefont {Drmota}},
  \bibinfo {author} {\bibfnamefont {B.~C.}\ \bibnamefont {Nichol}}, \bibinfo
  {author} {\bibfnamefont {G.}~\bibnamefont {Araneda}}, \bibinfo {author}
  {\bibfnamefont {D.}~\bibnamefont {Main}}, \bibinfo {author} {\bibfnamefont
  {R.}~\bibnamefont {Srinivas}}, \bibinfo {author} {\bibfnamefont {D.~M.}\
  \bibnamefont {Lucas}}, \bibinfo {author} {\bibfnamefont {C.~J.}\ \bibnamefont
  {Ballance}}, \bibinfo {author} {\bibfnamefont {K.}~\bibnamefont {Ivanov}},
  \bibinfo {author} {\bibfnamefont {E.~Y.-Z.}\ \bibnamefont {Tan}}, \bibinfo
  {author} {\bibfnamefont {P.}~\bibnamefont {Sekatski}}, \bibinfo {author}
  {\bibfnamefont {R.~L.}\ \bibnamefont {Urbanke}}, \bibinfo {author}
  {\bibfnamefont {R.}~\bibnamefont {Renner}}, \bibinfo {author} {\bibfnamefont
  {N.}~\bibnamefont {Sangouard}},\ and\ \bibinfo {author} {\bibfnamefont
  {J.-D.}\ \bibnamefont {Bancal}},\ }\href@noop {} {\bibinfo {title}
  {Device-independent quantum key distribution}} (\bibinfo {year} {2021}),\
  \Eprint {https://arxiv.org/abs/2109.14600} {arXiv:2109.14600 [quant-ph]}
  \BibitemShut {NoStop}%
\bibitem [{\citenamefont {Zhang}\ \emph
  {et~al.}(2021{\natexlab{b}})\citenamefont {Zhang}, \citenamefont {van Leent},
  \citenamefont {Redeker}, \citenamefont {Garthoff}, \citenamefont {Schwonnek},
  \citenamefont {Fertig}, \citenamefont {Eppelt}, \citenamefont {Scarani},
  \citenamefont {Lim},\ and\ \citenamefont {Weinfurter}}]{atomDI2021}%
  \BibitemOpen
  \bibfield  {author} {\bibinfo {author} {\bibfnamefont {W.}~\bibnamefont
  {Zhang}}, \bibinfo {author} {\bibfnamefont {T.}~\bibnamefont {van Leent}},
  \bibinfo {author} {\bibfnamefont {K.}~\bibnamefont {Redeker}}, \bibinfo
  {author} {\bibfnamefont {R.}~\bibnamefont {Garthoff}}, \bibinfo {author}
  {\bibfnamefont {R.}~\bibnamefont {Schwonnek}}, \bibinfo {author}
  {\bibfnamefont {F.}~\bibnamefont {Fertig}}, \bibinfo {author} {\bibfnamefont
  {S.}~\bibnamefont {Eppelt}}, \bibinfo {author} {\bibfnamefont
  {V.}~\bibnamefont {Scarani}}, \bibinfo {author} {\bibfnamefont {C.~C.~W.}\
  \bibnamefont {Lim}},\ and\ \bibinfo {author} {\bibfnamefont {H.}~\bibnamefont
  {Weinfurter}},\ }\href@noop {} {\bibinfo {title} {Experimental
  device-independent quantum key distribution between distant users}} (\bibinfo
  {year} {2021}{\natexlab{b}}),\ \Eprint {https://arxiv.org/abs/2110.00575}
  {arXiv:2110.00575 [quant-ph]} \BibitemShut {NoStop}%
\bibitem [{\citenamefont {Nieto-Silleras}\ \emph {et~al.}(2014)\citenamefont
  {Nieto-Silleras}, \citenamefont {Pironio},\ and\ \citenamefont
  {Silman}}]{nieto2014using}%
  \BibitemOpen
  \bibfield  {author} {\bibinfo {author} {\bibfnamefont {O.}~\bibnamefont
  {Nieto-Silleras}}, \bibinfo {author} {\bibfnamefont {S.}~\bibnamefont
  {Pironio}},\ and\ \bibinfo {author} {\bibfnamefont {J.}~\bibnamefont
  {Silman}},\ }\href {https://doi.org/10.1088/1367-2630/16/1/013035} {\bibfield
   {journal} {\bibinfo  {journal} {New Journal of Physics}\ }\textbf {\bibinfo
  {volume} {16}},\ \bibinfo {pages} {013035} (\bibinfo {year}
  {2014})}\BibitemShut {NoStop}%
\bibitem [{\citenamefont {Cope}\ and\ \citenamefont
  {Colbeck}(2019)}]{RogerPRA2019}%
  \BibitemOpen
  \bibfield  {author} {\bibinfo {author} {\bibfnamefont {T.}~\bibnamefont
  {Cope}}\ and\ \bibinfo {author} {\bibfnamefont {R.}~\bibnamefont {Colbeck}},\
  }\href {https://doi.org/10.1103/PhysRevA.100.022114} {\bibfield  {journal}
  {\bibinfo  {journal} {Phys. Rev. A}\ }\textbf {\bibinfo {volume} {100}},\
  \bibinfo {pages} {022114} (\bibinfo {year} {2019})}\BibitemShut {NoStop}%
\bibitem [{\citenamefont {Navascu{\'{e}}s}\ \emph {et~al.}(2008)\citenamefont
  {Navascu{\'{e}}s}, \citenamefont {Pironio},\ and\ \citenamefont
  {Ac{\'{\i}}n}}]{NPA2}%
  \BibitemOpen
  \bibfield  {author} {\bibinfo {author} {\bibfnamefont {M.}~\bibnamefont
  {Navascu{\'{e}}s}}, \bibinfo {author} {\bibfnamefont {S.}~\bibnamefont
  {Pironio}},\ and\ \bibinfo {author} {\bibfnamefont {A.}~\bibnamefont
  {Ac{\'{\i}}n}},\ }\href {https://doi.org/10.1088/1367-2630/10/7/073013}
  {\bibfield  {journal} {\bibinfo  {journal} {New Journal of Physics}\ }\textbf
  {\bibinfo {volume} {10}},\ \bibinfo {pages} {073013} (\bibinfo {year}
  {2008})}\BibitemShut {NoStop}%
\bibitem [{\citenamefont {Murta}\ \emph {et~al.}(2019)\citenamefont {Murta},
  \citenamefont {van Dam}, \citenamefont {Ribeiro}, \citenamefont {Hanson},\
  and\ \citenamefont {Wehner}}]{DI2018}%
  \BibitemOpen
  \bibfield  {author} {\bibinfo {author} {\bibfnamefont {G.}~\bibnamefont
  {Murta}}, \bibinfo {author} {\bibfnamefont {S.~B.}\ \bibnamefont {van Dam}},
  \bibinfo {author} {\bibfnamefont {J.}~\bibnamefont {Ribeiro}}, \bibinfo
  {author} {\bibfnamefont {R.}~\bibnamefont {Hanson}},\ and\ \bibinfo {author}
  {\bibfnamefont {S.}~\bibnamefont {Wehner}},\ }\href
  {https://doi.org/10.1088/2058-9565/ab2819} {\bibfield  {journal} {\bibinfo
  {journal} {Quantum Science and Technology}\ }\textbf {\bibinfo {volume}
  {4}},\ \bibinfo {pages} {035011} (\bibinfo {year} {2019})}\BibitemShut
  {NoStop}%
\bibitem [{\citenamefont {Werner}(1989)}]{werner1989}%
  \BibitemOpen
  \bibfield  {author} {\bibinfo {author} {\bibfnamefont {R.~F.}\ \bibnamefont
  {Werner}},\ }\href {https://doi.org/10.1103/PhysRevA.40.4277} {\bibfield
  {journal} {\bibinfo  {journal} {Phys. Rev. A}\ }\textbf {\bibinfo {volume}
  {40}},\ \bibinfo {pages} {4277} (\bibinfo {year} {1989})}\BibitemShut
  {NoStop}%
\bibitem [{\citenamefont {Caprara~Vivoli}\ \emph {et~al.}(2015)\citenamefont
  {Caprara~Vivoli}, \citenamefont {Sekatski}, \citenamefont {Bancal},
  \citenamefont {Lim}, \citenamefont {Christensen}, \citenamefont {Martin},
  \citenamefont {Thew}, \citenamefont {Zbinden}, \citenamefont {Gisin},\ and\
  \citenamefont {Sangouard}}]{CCLim_SPDC}%
  \BibitemOpen
  \bibfield  {author} {\bibinfo {author} {\bibfnamefont {V.}~\bibnamefont
  {Caprara~Vivoli}}, \bibinfo {author} {\bibfnamefont {P.}~\bibnamefont
  {Sekatski}}, \bibinfo {author} {\bibfnamefont {J.-D.}\ \bibnamefont
  {Bancal}}, \bibinfo {author} {\bibfnamefont {C.~C.~W.}\ \bibnamefont {Lim}},
  \bibinfo {author} {\bibfnamefont {B.~G.}\ \bibnamefont {Christensen}},
  \bibinfo {author} {\bibfnamefont {A.}~\bibnamefont {Martin}}, \bibinfo
  {author} {\bibfnamefont {R.~T.}\ \bibnamefont {Thew}}, \bibinfo {author}
  {\bibfnamefont {H.}~\bibnamefont {Zbinden}}, \bibinfo {author} {\bibfnamefont
  {N.}~\bibnamefont {Gisin}},\ and\ \bibinfo {author} {\bibfnamefont
  {N.}~\bibnamefont {Sangouard}},\ }\href
  {https://doi.org/10.1103/PhysRevA.91.012107} {\bibfield  {journal} {\bibinfo
  {journal} {Phys. Rev. A}\ }\textbf {\bibinfo {volume} {91}},\ \bibinfo
  {pages} {012107} (\bibinfo {year} {2015})}\BibitemShut {NoStop}%
\bibitem [{\citenamefont {Kiyohara}\ \emph {et~al.}(2016)\citenamefont
  {Kiyohara}, \citenamefont {Okamoto},\ and\ \citenamefont
  {Takeuchi}}]{Kiyohara:16}%
  \BibitemOpen
  \bibfield  {author} {\bibinfo {author} {\bibfnamefont {T.}~\bibnamefont
  {Kiyohara}}, \bibinfo {author} {\bibfnamefont {R.}~\bibnamefont {Okamoto}},\
  and\ \bibinfo {author} {\bibfnamefont {S.}~\bibnamefont {Takeuchi}},\ }\href
  {https://doi.org/10.1364/OE.24.027288} {\bibfield  {journal} {\bibinfo
  {journal} {Opt. Express}\ }\textbf {\bibinfo {volume} {24}},\ \bibinfo
  {pages} {27288} (\bibinfo {year} {2016})}\BibitemShut {NoStop}%
\bibitem [{\citenamefont {Davis}\ and\ \citenamefont
  {Rabinowitz}(2007)}]{davis2007methods}%
  \BibitemOpen
  \bibfield  {author} {\bibinfo {author} {\bibfnamefont {P.~J.}\ \bibnamefont
  {Davis}}\ and\ \bibinfo {author} {\bibfnamefont {P.}~\bibnamefont
  {Rabinowitz}},\ }\href@noop {} {\emph {\bibinfo {title} {Methods of numerical
  integration}}}\ (\bibinfo  {publisher} {Courier Corporation},\ \bibinfo
  {year} {2007})\BibitemShut {NoStop}%
\bibitem [{\citenamefont {Tang}\ \emph {et~al.}(2021)\citenamefont {Tang},
  \citenamefont {Liu}, \citenamefont {Yu},\ and\ \citenamefont
  {Wu}}]{Tang2021fe}%
  \BibitemOpen
  \bibfield  {author} {\bibinfo {author} {\bibfnamefont {B.-Y.}\ \bibnamefont
  {Tang}}, \bibinfo {author} {\bibfnamefont {B.}~\bibnamefont {Liu}}, \bibinfo
  {author} {\bibfnamefont {W.-R.}\ \bibnamefont {Yu}},\ and\ \bibinfo {author}
  {\bibfnamefont {C.-Q.}\ \bibnamefont {Wu}},\ }\href
  {https://doi.org/10.1007/s11128-020-02919-8} {\bibfield  {journal} {\bibinfo
  {journal} {Quantum Information Processing}\ }\textbf {\bibinfo {volume}
  {20}},\ \bibinfo {pages} {113} (\bibinfo {year} {2021})}\BibitemShut
  {NoStop}%
\bibitem [{\citenamefont {Popescu}\ and\ \citenamefont
  {Rohrlich}(1994)}]{popescu1994quantum}%
  \BibitemOpen
  \bibfield  {author} {\bibinfo {author} {\bibfnamefont {S.}~\bibnamefont
  {Popescu}}\ and\ \bibinfo {author} {\bibfnamefont {D.}~\bibnamefont
  {Rohrlich}},\ }\href {https://doi.org/10.1007/BF02058098} {\bibfield
  {journal} {\bibinfo  {journal} {Foundations of Physics}\ }\textbf {\bibinfo
  {volume} {24}},\ \bibinfo {pages} {379} (\bibinfo {year} {1994})}\BibitemShut
  {NoStop}%
\bibitem [{\citenamefont {Lin}\ \emph {et~al.}(2018)\citenamefont {Lin},
  \citenamefont {Rosset}, \citenamefont {Zhang}, \citenamefont {Bancal},\ and\
  \citenamefont {Liang}}]{LSesti}%
  \BibitemOpen
  \bibfield  {author} {\bibinfo {author} {\bibfnamefont {P.-S.}\ \bibnamefont
  {Lin}}, \bibinfo {author} {\bibfnamefont {D.}~\bibnamefont {Rosset}},
  \bibinfo {author} {\bibfnamefont {Y.}~\bibnamefont {Zhang}}, \bibinfo
  {author} {\bibfnamefont {J.-D.}\ \bibnamefont {Bancal}},\ and\ \bibinfo
  {author} {\bibfnamefont {Y.-C.}\ \bibnamefont {Liang}},\ }\href
  {https://doi.org/10.1103/PhysRevA.97.032309} {\bibfield  {journal} {\bibinfo
  {journal} {Phys. Rev. A}\ }\textbf {\bibinfo {volume} {97}},\ \bibinfo
  {pages} {032309} (\bibinfo {year} {2018})}\BibitemShut {NoStop}%
\bibitem [{\citenamefont {Barrett}\ \emph
  {et~al.}(2005{\natexlab{b}})\citenamefont {Barrett}, \citenamefont {Linden},
  \citenamefont {Massar}, \citenamefont {Pironio}, \citenamefont {Popescu},\
  and\ \citenamefont {Roberts}}]{barrett2005nonlocal}%
  \BibitemOpen
  \bibfield  {author} {\bibinfo {author} {\bibfnamefont {J.}~\bibnamefont
  {Barrett}}, \bibinfo {author} {\bibfnamefont {N.}~\bibnamefont {Linden}},
  \bibinfo {author} {\bibfnamefont {S.}~\bibnamefont {Massar}}, \bibinfo
  {author} {\bibfnamefont {S.}~\bibnamefont {Pironio}}, \bibinfo {author}
  {\bibfnamefont {S.}~\bibnamefont {Popescu}},\ and\ \bibinfo {author}
  {\bibfnamefont {D.}~\bibnamefont {Roberts}},\ }\href
  {https://doi.org/10.1103/PhysRevA.71.022101} {\bibfield  {journal} {\bibinfo
  {journal} {Phys. Rev. A}\ }\textbf {\bibinfo {volume} {71}},\ \bibinfo
  {pages} {022101} (\bibinfo {year} {2005}{\natexlab{b}})}\BibitemShut
  {NoStop}%
\bibitem [{\citenamefont {Nieto-Silleras}\ \emph {et~al.}(2018)\citenamefont
  {Nieto-Silleras}, \citenamefont {Bamps}, \citenamefont {Silman},\ and\
  \citenamefont {Pironio}}]{Nieto_Silleras_2018}%
  \BibitemOpen
  \bibfield  {author} {\bibinfo {author} {\bibfnamefont {O.}~\bibnamefont
  {Nieto-Silleras}}, \bibinfo {author} {\bibfnamefont {C.}~\bibnamefont
  {Bamps}}, \bibinfo {author} {\bibfnamefont {J.}~\bibnamefont {Silman}},\ and\
  \bibinfo {author} {\bibfnamefont {S.}~\bibnamefont {Pironio}},\ }\href
  {https://doi.org/10.1088/1367-2630/aaaa06} {\bibfield  {journal} {\bibinfo
  {journal} {New Journal of Physics}\ }\textbf {\bibinfo {volume} {20}},\
  \bibinfo {pages} {023049} (\bibinfo {year} {2018})}\BibitemShut {NoStop}%
\bibitem [{\citenamefont {Garay}\ \emph {et~al.}()\citenamefont {Garay},
  \citenamefont {Palfree}, \citenamefont {Mirin}, \citenamefont {Nam},
  \citenamefont {U'ren},\ and\ \citenamefont {Shalm}}]{NIST_SPDC}%
  \BibitemOpen
  \bibfield  {author} {\bibinfo {author} {\bibfnamefont {K.}~\bibnamefont
  {Garay}}, \bibinfo {author} {\bibfnamefont {J.}~\bibnamefont {Palfree}},
  \bibinfo {author} {\bibfnamefont {R.}~\bibnamefont {Mirin}}, \bibinfo
  {author} {\bibfnamefont {S.~W.}\ \bibnamefont {Nam}}, \bibinfo {author}
  {\bibfnamefont {A.}~\bibnamefont {U'ren}},\ and\ \bibinfo {author}
  {\bibfnamefont {L.~K.}\ \bibnamefont {Shalm}},\ }\href@noop {} {\bibinfo
  {title} {Spontaneous parametric downconversion calculator}},\ \bibinfo
  {howpublished} {\url{http://www.spdcalc.org}}\BibitemShut {NoStop}%
\bibitem [{\citenamefont {Clauser}\ \emph {et~al.}(1969)\citenamefont
  {Clauser}, \citenamefont {Horne}, \citenamefont {Shimony},\ and\
  \citenamefont {Holt}}]{CHSH}%
  \BibitemOpen
  \bibfield  {author} {\bibinfo {author} {\bibfnamefont {J.~F.}\ \bibnamefont
  {Clauser}}, \bibinfo {author} {\bibfnamefont {M.~A.}\ \bibnamefont {Horne}},
  \bibinfo {author} {\bibfnamefont {A.}~\bibnamefont {Shimony}},\ and\ \bibinfo
  {author} {\bibfnamefont {R.~A.}\ \bibnamefont {Holt}},\ }\href
  {https://doi.org/10.1103/PhysRevLett.23.880} {\bibfield  {journal} {\bibinfo
  {journal} {Phys. Rev. Lett.}\ }\textbf {\bibinfo {volume} {23}},\ \bibinfo
  {pages} {880} (\bibinfo {year} {1969})}\BibitemShut {NoStop}%
\end{thebibliography}

%

\end{document}